 \newcommand{\bs}{\bigskip}
 \newcommand{\ms}{\medskip}
 \newcommand{\n}{\noindent}
 \newcommand{\s}{\smallskip}
 \newcommand{\hs}[1]{\hspace*{ #1 mm}}
 \newcommand{\vs}[1]{\vspace*{ #1 mm}}
 \newcommand{\setempty}{\varnothing}
 \newcommand{\real}{\mathbb{R}}
 \newcommand{\nat}{\mathbb{N}}
 \newcommand{\integer}{\mathbb{Z}}
 \newcommand{\rational}{\mathbb{Q}}
 \newcommand{\complex}{\mathbb{C}}
 \newcommand{\prob}{{\mathrm{Prob}}}
 \newcommand{\etalc}{\textrm{et al.}}
 \newcommand{\CC}{\mathcal{C}}
 \newcommand{\HH}{\mathcal{H}}
 \newcommand{\KK}{\mathcal{K}}
 \newcommand{\LL}{\mathcal{L}}
 \newcommand{\MM}{\mathcal{M}}
 \newcommand{\SSS}{\mathcal{S}}
 \newcommand{\PP}{\mathcal{P}}
 \newcommand{\UU}{\mathcal{U}}
 \newcommand{\dl}{\mathrm{L}}
 \newcommand{\nl}{\mathrm{NL}}
 \newcommand{\p}{\mathrm{P}}
 \newcommand{\np}{\mathrm{NP}}
\theoremstyle{plain}
 \newtheorem{theorem}{Theorem}[section]
 \newtheorem{lemma}[theorem]{Lemma}
 \newtheorem{proposition}[theorem]{Proposition}
 \newtheorem{corollary}[theorem]{Corollary}
\newtheorem{definition}[theorem]{Definition}}
\newtheorem{example}[theorem]{Example}}
 \newtheorem{claim}{Claim}
 \newenvironment{proof}{\par \noindent
            {\bf Proof. \hs{2}}}{\hfill$\Box$ \vspace*{3mm}}
 \newcommand{\ceilings}[1]{\lceil #1 \rceil}
 \newcommand{\qubit}[1]{| #1 \rangle}
 \newcommand{\bra}[1]{\langle #1 |}
 \newcommand{\ket}[1]{| #1 \rangle}
 \newcommand{\measure}[2]{\langle #1 | #2 \rangle}
\newcommand{\ignore}[1]{}
 \newcommand{\bql}{\mathrm{BQL}}
 \newcommand{\trace}{\mathrm{tr}}
 \newcommand{\bqlogtime}{\mathrm{BQLOGTIME}}
 \newcommand{\ac}[1]{\mathrm{AC}^{ #1 }}
 \newcommand{\nc}[1]{\mathrm{NC}^{ #1 }}
 \newcommand{\mmid}{\:\|\:}
 \newcommand{\logcfl}{\mathrm{LOGCFL}}
 \newcommand{\ilog}{\mathrm{ilog}}
 \newcommand{\iloglog}{\mathrm{iloglog}}
 \newcommand{\ptime}{\mathrm{ptime}\mbox{-}}
\begin{document}

\pagestyle{plain}
\pagenumbering{arabic}
\setcounter{page}{1}
\setcounter{footnote}{0}

\begin{center}
{\Large {\bf Quantum First-Order Logics That Capture Logarithmic-Time/Space Quantum Computability}}\footnote{An extended abstract  \cite{Yam24} appeared in the Proceedings of
the 20th Conference on Computability in Europe (CiE 2024), Amsterdam, the Netherlands, July 8-12, 2024, Lecture Notes in Computer Science, vol. 14773, pp.  311--323, Springer, 2024.} \bs\ms\\

{\sc Tomoyuki Yamakami}\footnote{Present Affiliation: Faculty of Engineering, University of Fukui, 3-9-1 Bunkyo, Fukui 910-8507, Japan}
\bs\\
\end{center}

\sloppy

\begin{abstract}
\sloppy
We introduce a quantum analogue of classical first-order logic (FO) and develop a theory of quantum first-order logic as a basis of the productive discussions on the power of logical expressiveness toward quantum computing.
The purpose of this work is to logically express ``quantum computation'' by introducing specially-featured quantum connectives and quantum quantifiers that quantify fixed-dimensional quantum states. Our approach is founded on the
recently introduced recursion-theoretical
schematic definitions of time-bounded quantum functions,
which map finite-dimensional Hilbert spaces to themselves.
The quantum first-order logic (QFO) in this work therefore looks quite different from the well-known old concept of quantum logic based on lattice theory.
We demonstrate that quantum first-order logics possess an ability of expressing bounded-error
quantum logarithmic-time computability by the use of new ``functional'' quantum variables. In contrast, an extra inclusion of quantum transitive closure operator helps us characterize quantum logarithmic-space computability.
The same computability can be achieved by the use of different ``functional'' quantum variables.

\s\n{keywords:} quantum computation, first-order logic, quantum Turing machine, unitary matrix, quantum quantifier, logarithmic-time/space  computability
\end{abstract}

\sloppy
\vs{-3}
\section{Background, Motivations, and Challenges}\label{sec:introduction}

We begin with brief descriptions of historical background, which motivates us to study a quantum analogue of the first-order logic and we then state the challenging tasks to tackle throughout this work.

\vs{-1}
\subsection{Quantum Computing and Quantum Logic}\label{sec:QC-and-QL}

A physical realization of quantum mechanical computing device has been sought for decades.
A theoretical framework of such quantum mechanical computing was formulated in the 1980s by Benioff
\cite{Ben80} and Deutsch \cite{Deu85}
as a quantum-mechanical  extension of classical computing.
Since then, theory of quantum computation has been well-developed to understand the power and limitation of quantum computing.
Fundamentally, quantum computing manipulates quantum states, which are  superpositions of classical states with (quantum) amplitudes, and it manages to conducts various types of quantum operations, which are simply quantum transformations of a finite number of \emph{quantum bits} (or \emph{qubits}, for short).

One of the simplest computational models used in theoretical computer science today is a family of Boolean circuits composed of AND, OR, and NOT gates.
Yao \cite{Yao93} made the first significant contribution to the development of \emph{quantum circuits}, each of which is built up from quantum gates, which work as unary transforms of quits.
A precursor to quantum circuits was nonetheless studied by Deutsch \cite{Deu89} under the name of quantum networks.

In Sections \ref{sec:character-QFO}--\ref{sec:functional}, nevertheless, we mostly focus our attention  on another machine model known as \emph{quantum Turing machine} (QTM), which is a quantum extension of classical Turing machine.
Following Deutsch's early model \cite{Deu85}.
QTMs were formulated by Bernstein and Vazirani \cite{BV97}. Subsequently, a multiple-tape variant of QTMs was studied in \cite{Yam99,Yam03}.
It was shown in \cite{Yao93} that uniform families of quantum circuits are equivalent in computational power to QTMs. The reader may refer to the textbooks, e.g., \cite{KSV02,NC00} for the references.

Quantum computation exploits a significant feature of quantum states, known as \emph{entanglement}, which is difficult to realize within a classical framework of logical terms.
Whenever we observe such a quantum state, it instantly collapses to a classical state with a certain probability.
This phenomenon irreversibly destroys the current state and it cannot be expressed in the classical framework.
A fundamental idea of this work, on the contrary, lies on an intention of how to ``translate'' quantum computing into ``logical expressions''.
Similar to time/space complexity, logical expressibility has been acknowledged as a useful complexity measure that provides us with a scalable value to express computational hardness of combinatorial problems.
Here, we attempt to take a more direct approach to ``expressing'' quantum computation described in the framework of \cite{Yam20,Yam22a}.

A long before the invention of quantum computing, an attempt to describe quantum physics in terms of logical expressions was made under the name of \emph{quantum logic} from a quite different perspective.
A large volume of work has been dedicated to ``express'' various aspects of quantum physics in numerous logical frameworks from an early introduction of quantum logic by Birkhoff and von Neumann \cite{BN36} to recent notions of quantum dynamic logic \cite{BS11}, quantum predicate logic \cite{Kor22}, quantum Hoare logic \cite{Yin09}, etc.
These modern quantum logics have been introduced to express various aspects of quantum physics.
In such a vast landscape of quantum logics, there is still an urgent need of logically expressing time/space-bounded quantum computing.
What kinds of quantum logics precisely capture such quantum computability?
This work intends to ``express'' various quantum complexity classes, which are naturally induced by QTMs in a quite different fashion.

As a natural deviation from standard conventions regarding ``inputs'', we intend to use quantum inputs and outputs in place of classical ones.
In most literature, the theory of quantum computation has coped with classical inputs and classical outputs except for a few special cases, in which quantum states are used as part of supplemental inputs during message exchanges in quantum interactive proof systems.
A more generic approach in this line of study was taken to deal with \emph{quantum NP} \cite{Yam02} and \emph{quantum functions}\footnote{This notion is different from the same name used in \cite{Yam03}.} \cite{Yam20} whose inputs are quantum states of finite-dimensional Hilbert spaces.

Another important aspect to mention is the use of quantifiers that quantify quantum states in a finite-dimensional Hilbert space. In the past literature, such quantifiers have been used as part of a requirement for the  ``acceptance'' of  machines in the setting of, for example, quantum interactive proofs and quantum $\np$.
Such quantifiers are generally referred to as \emph{quantum quantifiers} in comparison with classical quantifiers.
In \cite{Yam02}, a hierarchy over quantum NP was introduced by alternately applying quantum existential and universal quantifiers.

In this line of study, Yamakami \cite{Yam20} lately proposed a  recursion-theoretical ``schematic definition'' to capture  the notion of polynomial-time computable quantum functions.
Scheme-based constructions of quantum functions, which  are quite different from the machine- and circuit-based definitions, are more suited to measure the ``descriptional'' complexity of quantum functions.
The success of such a schematic description of quantum polynomial-time computing \cite{Yam20} and its followup \cite{Yam22b}  for quantum polylogarithmic-time computability motivates us to pursue the opportunity of further studying the ``expressibility'' of quantum computations and to investigate a quantum analogue of the first-order logic (FO), which we  intend to call QFO (Definition \ref{definition-QFO}).

\subsection{Quick Review on the Classical First-Order Logic or FO}\label{sec:schematic-def}

In computer science, it is of practical importance to clarify the ``complexity'' of given combinatorial problems. The standard complexity measures include  the execution time and the memory usage of Turing machines (as well as the circuit size of  families of Boolean circuits) necessary to solve the problems. The standard complexity classes, such as $\np$ and $\nl$ (nondeterministic log-space class), were introduced to reflect polynomial execution time and logarithmic space.


Classical logic has exhibited another aspect of computing. There is a long history of expressing classical time/space-bounded computability.
From a quite different perspective,
Fagin \cite{Fag74}, for example, gave a logical characterization of languages (or equivalently, decision problems) in $\np$.
Since then, numerous characterizations have been sought to capture the well-known complexity classes by way of expressing them using \emph{logical terms} and \emph{logical formulas}.
Theory of classical first-order logic has been developed in association with parallel computing, in particular, of families of Boolean circuits. The first-order logic is composed of important elements, including variables, predicate symbols, and function symbols, logical connectives, and logical quantifiers. Notice that Fagin's characterization of $\np$  requires the first-order logic together with  second-order existential quantifiers.
In mid 1980s,
Gurevich and Lewis \cite{GL84} explored a close connection between uniform $\ac{0}$ and the (classical) first-order logic (FO), and Immerman \cite{Imm87a,Imm87b} used FO to capture other complexity classes, such as $\p$ and $\nl$.

Immerman's characterizations of $\nl$ by first-order logic, for instance, includes three special binary predicate symbols $\{=,<,BIT\}$ over natural numbers, where $BIT(x,y)$ indicates that the $x$th bit of the binary representation of $y$ equals $1$ and it helps encode/decode natural numbers expressed  in binary within a logical system.
Barrington, Immerman, and Straubing \cite{BIS90} demonstrated that the (classical) first-order logic (FO) captures the family of all languages recognized by constant-depth alternating logtime Turing machines and thus precisely characterizes $\mathrm{DLOGTIME}$-uniform\footnote{Circuit complexity classes, such as $\ac{i}$ and $\nc{i}$, are generally known to be quite sensitive to the choice of the uniformity notions. Ruzzo \cite{Ruz81} discussed various types of uniformity notions.}  $\mathrm{AC}^0$.
Hereafter, $\mathrm{FO}$ denotes the family of all languages that are expressed as first-order sentences,  including $\{=,<,BIT\}$.
The use of $BIT$ is crucial because, without it, the first-order logic expresses only aperiodic regular languages \cite{MP71}.
Various extensions of the first-order logic have been proposed in the past literature.
Moreover, $\mathrm{FO}$ characterizes constant-time concurrent read/write parallel random-access machines allowing the shift operation \cite{Imm89}.

The first-order logic acts as a natural basis to the characterizations of various language families. For instance, the first-order logic equipped with  a transitive closure operator (TC) precisely characterizes $\nl$ \cite{Imm86,Imm89}. With an addition of a least fixed point operator (LFP), the first-order logic is powerful enough to capture $\p$ \cite{Imm82,Var82}.

Lindell \cite{Lin92} proved a general theorem that any uniformly extensible numerical predicate can be expressed by first-order formulas (including $\{=,<,BIT\}$).
He also remarked that $\mathrm{FO}$ is equivalent to the collection of languages expressed by arithmetical finite structures based on arithmetic operations of $+$ (addition), $\times$ (multiplication), and \hs{2}$\widehat{ }$\hs{1} (exponentiation).

Lautemann \etalc~\cite{LMSV01} further discussed \emph{groupoidal quantifiers} over groupoids and showed that $\logcfl$ is precisely characterized by such finite structures together with groupoidal quantifiers, where $\logcfl$ consists of languages that are log-space many-one reducible to context-free languages.
Other notable quantifiers include \emph{majority quantifier} and \emph{majority-of-pair quantifier} \cite{BIS90}
in connection to $\mathrm{TC}^0$.

\vs{-2}
\subsection{Our Challenges in This Work}\label{sec:challenge}

As noted earlier, the classical first-order logic (FO) has proven to be a quite useful means of expressing the complexity classes
(specifically, ``descriptive'' complexity)
of combinatorial decision problems (or equivalently, languages).
This usefulness drives us to expand the scope of the study on the relationships between machine's computability and logical expressibility.
One such directions is the logical expressiveness of quantum complexity classes.
It is therefore quite natural to look for a quantum analogue of FO, dubbed as \emph{QFO},
the first-order logic
to deal with the exotic nature of quantum computing.
As concrete examples, we wish to ``express'' $\mathrm{BQLOGTIME}$ (quantum logtime complexity class) and $\mathrm{BQL}$ (quantum logspace complexity class).
We wish to take a step along the great success of the study on FO and to pursue the same goal of capturing quantum computation.

In this work, we intend to study QFO in hopes of capturing time/space-bounded quantum computability, in particular, $\mathrm{BQLOGTIME}$ and $\mathrm{BQL}$.
Following a great success of the study on the classical first-order logic (FO) in computational complexity theory, as briefly explained in Section \ref{sec:schematic-def},
For this purpose, we wish to present a new formulation of QFO  by adopting a quantum schematic approach of \cite{Yam20,Yam22b} because it looks closer to a ``logical'' system than the existing machine models, such as QTMs and quantum circuits.
QFO looks distinctively away from the existing quantum logics in the literature.

We stress that our approach is completely different from the framework of the aforementioned quantum logic. We do not even attempt in this work to insist on any replacement of this  well-established old concept since our goal is to describe quantum computation and to analyze the expressing power of quantum first-order logical systems within the scope of computer science.

The most significance is the expressiveness of logical formulas with variables, connectives, and quantifiers in direct connection to machine's computations. This motivates us to look into a quantum analogue of such expressiveness of classical logic. This work attempts to make a crucial step toward a descriptional expressiveness of quantum computations on quantum states.

We aim at proposing a reasonable and coherent formulation of such a quantum analogue of FO in order to express the exotic nature of quantum mechanical computing. However, there lie numerous difficulties in introducing the proper notion of \emph{quantum first-order logic} (QFO).
Since our primary purpose of this work is to capture the quantum computability from a logical viewpoint (rather than ordinary machine-based approaches), our approach toward the definitions of ``logical terms'' and ``logical formulas'' presented in Section \ref{sec:syntax} is solely founded on a ``computational'' aspect of quantum physics.
For this purpose, we modify the syntax and semantics of the classical connectives AND, OR, and NOT according to quantum computing.
In Section \ref{sec:basics}, we will provide a foundation to the introduction of QFO.
We further consider a natural restriction of QFO, called classicQFO, obtained from QFO by limiting the use of specific types of quantum quantifiers (see Definition \ref{def-iqq}).

As noted in Section \ref{sec:QC-and-QL}, our approach in this work is essentially different from well-known ``quantum logic'' of Birkhoff and von Neumann although those concepts reflect the actual behaviors of quantum bits (or qubits).
A significant feature of our system of quantum first-order logic (QFO) comes from the fact that it is founded on the schematic definition of quantum computing \cite{Yam20,Yam22b}.

The rest of this work attempts to answer the following generic but fundamental questions concerning $\mathrm{QFO}$ and its variant $\mathrm{classicQFO}$.

\renewcommand{\labelitemi}{$\circ$}
\begin{itemize}\vs{-1}
  \setlength{\topsep}{-2mm}%
  \setlength{\itemsep}{1mm}%
  \setlength{\parskip}{0cm}%

\item[(1)] What are the expressing powers of $\mathrm{QFO}$ and $\mathrm{classicQFO}$?
\end{itemize}

The $\mathrm{TC}$-operator has proven to be a useful operator for extending $\mathrm{FO}$. In a similar fashion, we can consider a quantum analogue of TC, called $QTC$, and its restricted form, called $logQTC$.

\renewcommand{\labelitemi}{$\circ$}
\begin{itemize}\vs{-1}
  \setlength{\topsep}{-2mm}%
  \setlength{\itemsep}{1mm}%
  \setlength{\parskip}{0cm}%

 \item[(2)] What are the powers of $QTC$ and $logQTC$ when they are provided to the underlying logical systems $\mathrm{QFO}$ and $\mathrm{classicQFO}$?
\end{itemize}

Even a restricted form of second-order variables (and their quantifications) provides enormous power to the underlying logical system $\mathrm{FO}$. Similarly, we can consider a restricted form of second-order quantum variables to expand $\mathrm{QFO}$ and $\mathrm{classicQFO}$.

\renewcommand{\labelitemi}{$\circ$}
\begin{itemize}\vs{-1}
  \setlength{\topsep}{-2mm}%
  \setlength{\itemsep}{1mm}%
  \setlength{\parskip}{0cm}%

 \item[(3)] What are the powers of a restricted form of second-order quantum variables for the underlying logical systems $\mathrm{QFO}$ and $\mathrm{classicQFO}$?
\end{itemize}

{\n} The above three questions will be answered partially in Sections \ref{sec:character-QFO} and \ref{sec:functional}.
For the sake of the reader, they are illustrated in Figure \ref{fig:class-hierarchy}. The complexity classes $\mathrm{HBQLOGTIME}$ and $\mathrm{HBQL}$ respectively stand for the closures of $\mathrm{BQLOGTIME}$ and $\mathrm{BQL}$ under restricted quantum quantifications (see Section \ref{sec:logtime-QTM}). Moreover, the complexity class $\mathrm{DLOGTIME}$ consists of all decision problems solvable in $O(\log{n})$ time and $\mathrm{HLOGTIME}$ is the quantification closure of $\mathrm{DLOGTIME}$.

\begin{figure}[t]
\centering
\includegraphics*[height=5.0cm]{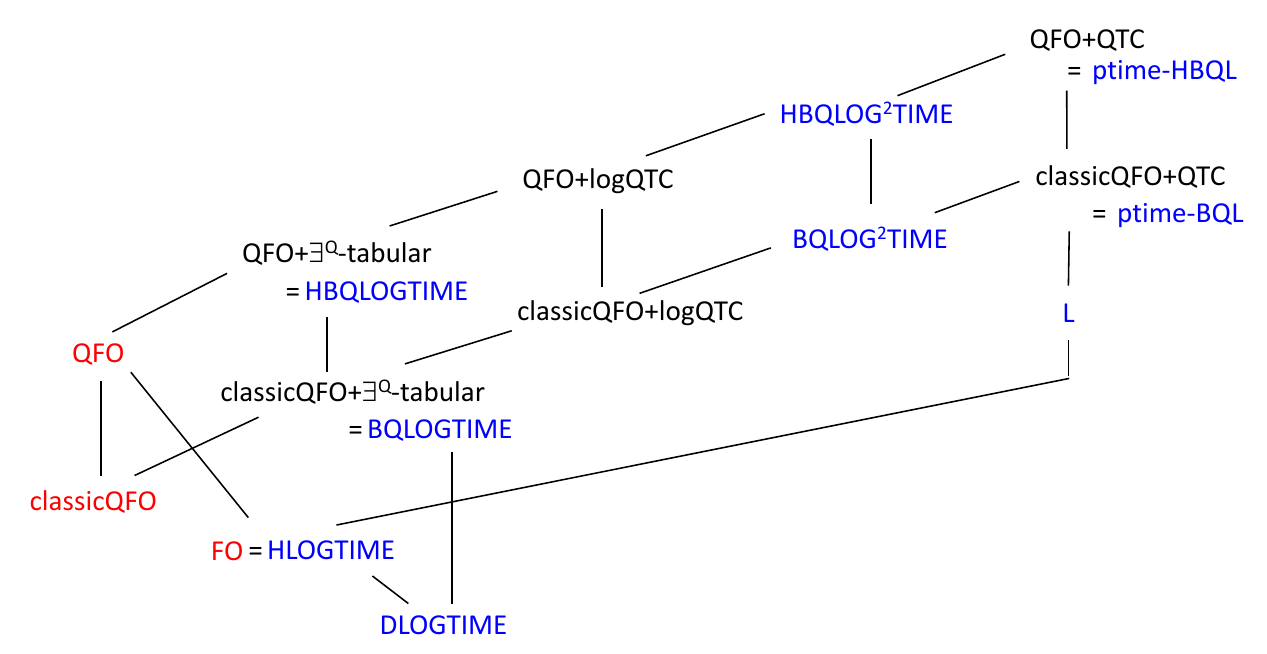}
\caption{Inclusion relationships among the complexity classes discussed in this work.  In \cite{BIS90}, HLOGTIME was expressed as HL.
}\label{fig:class-hierarchy}
\end{figure}

\vs{-2}
\section{Basic Definitions}\label{sec:basics}

Let us begin with providing fundamental notions and notation necessary to read through the rest of this work.

\vs{-2}
\subsection{Numbers, Languages, and Promise Problems}\label{sec:numbers}

To stand for the collection of \emph{natural numbers} including $0$, we use the notation $\nat$.
Moreover, the notation $\nat^{+}$ is meant for $\nat-\{0\}$.
Given integers $m$ and $n$ with $m\leq n$, the \emph{integer interval} $[m,n]_{\integer}$ refers to the set $\{m,m+1,m+2,\ldots,n\}$ and $[n]$ abbreviates $[1,n]_{\integer}$ when  $n\in\nat^{+}$.
The set of all \emph{complex numbers} is denoted $\complex$.
The symbol $\imath$ expresses $\sqrt{-1}$ and $e$ stands for the base of the natural logarithms.
All \emph{logarithms} in this work are taken to the base $2$ and, in particular, we define $\ilog(n)$ and $\iloglog(n)$ to be $\ceilings{\log{n}}$ for $n\geq1$ and $\ceilings{\log\log{n}}$ for $n\geq2$, respectively.
We further set $\hat{n}=2^{\ilog(n)}$, which satisfies that $n\leq \hat{n}<2n$.

A directed graph $G=(V,E)$ with $V=\{v_1,v_2,\ldots,v_n\}$ is said to be \emph{topologically ordered} if, for any $i,j\in[n]$, $(v_i,v_j)\in E$ and $i\neq j$ imply $i<j$.

Generally, a finite nonempty set of ``symbols'' or ``letters'' is referred to as an \emph{alphabet}.
We mostly use the binary alphabet $\{0,1\}$ and binary strings over it.
In particular, we use the notion $\lambda$ for the \emph{empty string}.
Notice that, for any string $u$ of length $n$, $u$ can be expressed as $u_{(1)}u_{(2)}\cdots u_{(n-1)}u_{(n)}$.
A \emph{language} over the binary alphabet is a subset of $\{0,1\}^*$. Given a number $n\in\nat$, $\{0,1\}^n$ expresses the set of all binary strings of length $n$, where the \emph{length} of a string $x$ is the total number of symbols in it and is denoted $|x|$.
Given a string $x$ of length $n$ and an integer $i\in[n]$, the notation $x_{(i)}$ denotes the $i$th symbol of $x$ from the left. By expanding this notation, whenever $i>n$ or $i=0$, we set $x_{(i)}$ to be $\lambda$.
We assume the standard \emph{lexicographic order} on $\{0,1\}^n$ for each $n\in\nat$ and we write  $s^{(n)}_i$ for  the $i$th string in $\{0,1\}^n$ in the lexicographic order; for example, $s^{(3)}_1=000$, $s^{(3)}_2=001$, $s^{(3)}_3=011$, $s^{(3)}_7=110$, and $s^{(3)}_8=111$.
Later, we freely identify $i$ with $s^{(\ilog(n))}_i$ as long as $n$ is clear from the context and $i$ is in $[\hat{n}]$.

A \emph{promise (decision) problem} $\LL$ over the binary alphabet $\{0,1\}$ is a pair $(A,B)$ of disjoint subsets of $\{0,1\}^*$. We interpret each element in $A$ (resp., $B$) as a positive (resp., negative) instance, which is supposed to be accepted (resp., rejected) by an underlying machine.
When $A\cup B=\{0,1\}^*$, we express $B$ as $\overline{A}$ and identify the promise problem $(A,\overline{A})$ with $A$, which means a \emph{language}. Hence, we freely treat languages as promise (decision) problems.

A \emph{classical Turing machine} (TM) has a read-only input tape, a constant number of rewritable work tapes, and a rewritable index tape, where an \emph{index tape} is used to access bits of a binary input string written on the input tape by way of writing down the address (or location)  of these bits onto the index tape. When input size is $n$, the index tape needs to use $\ilog(n)$ cells. The notations $\mathrm{DLOGTIME}$ and $\mathrm{HLOGTIME}$ refer to the collections of all languages recognized respectively by deterministic TMs and alternating TMs running in $O(\log{n})$ time.  See \cite{BIS90} for more information. In addition to them, we also introduce $\mathrm{PLOGTIME}$ by the use of logtime unbounded-error probabilistic TMs. When their transition probabilities are restricted to a particular set $K$ (such as $\rational$ or $\{0,1/2,1\}$), we write $\mathrm{PLOGTIMNE}_{K}$
to emphasize the use of $K$.

\subsection{Classical First-Order Logic}\label{sec:classical-FO}

We quickly review the basic terminology of the classical first-order logic used in, e.g., \cite{BIS90}.
The classical first-order logic originally uses natural numbers as target objects but they can be replaced or supplemented by any other objects, such as graph vertices and matrix entries. The binary relation $BIT(x,y)$ on those natural numbers expresses  the truth value of the following statement: the $x$th bit of the binary
representation of $y$ is $1$. The symbol $X$ is an atomic predicate symbol representing a binary input string in such a way that $X(i)$ is true iff the $i$th bit of the associated string is $1$.
The \emph{syntax} of the classical first-order logic is a collection of rules used to construct \emph{(classical) terms} and \emph{(classical) formulas} from predicate symbols ($=$, $<$, $BIT$, $X$), constant symbols ($1$, $n$), logical connectives ($\wedge$, $\vee$, $\neg$), variables ($x,y,z,\ldots$), and classical quantifiers ($\exists$, $\forall$). The \emph{first-order language} $\LL$ is the set of all formulas appropriately built up by the syntax.
A \emph{sentence} in $\LL$ is
a formula that has no free variables.
In certain literature (e.g., \cite{Imm87a,LMSV01}), the symbols in $\{=,<,BIT\}$ are entirely excluded from $\LL$. For our purpose, however, we include them as part of the first-order logic.

A \emph{(classical) structure} (or \emph{model}) over $\LL$ is a tuple composed of an input string for $X$, numeral objects for constant symbols, and relations for the predicate symbols.
Given a classical formula $\phi$ and a classical structure $\MM$, we write $\MM\models \phi$ to express that $\phi$ is true by evaluating variables and predicate symbols in $\phi$ by $\MM$. We say that a language $A$ is \emph{expressible} by a sentence $\phi$ if, for any string $x$, $x\in A$ iff $\MM_x\models \phi$, where $\MM_x$ is any structure in which the special predicate symbol $X$ is interpreted to the given input string $x$.

We write $\mathrm{FO}$ for the set of all languages expressible by sentences in $\LL$.

A class $\CC$ of languages is informally said to be \emph{logically definable} in a logical system $S$ if all languages in $\CC$ are expressible by appropriate sentences in $S$.

\vs{-2}
\subsection{Quantum States and Quantum Operations}

A vector in a finite-dimensional Hilbert space is expressed using the \emph{ket notation} $\qubit{\cdot}$ as in $\qubit{\phi}$, while an element of its dual space is denoted $\bra{\phi}$ using the \emph{bra notation}.
The notation  $\measure{\psi}{\phi}$ denotes the \emph{inner product} of two vectors $\qubit{\phi}$ and $\qubit{\psi}$ in the same Hilbert space.
The \emph{norm} of $\qubit{\phi}$, which is denoted $\|\qubit{\phi}\|$, is defined to be $\sqrt{\measure{\phi}{\phi}}$.
Let $\trace_i(\qubit{\phi})$ denote the quantum state obtained from $\qubit{\phi}$ by \emph{tracing out} all qubit locations of $\qubit{\phi}$ except for the $i$th qubit.

We use the notation $\HH_{2^n}$ to express a $2^n$-dimensional Hilbert space.
A \emph{qubit} is a unit-norm vector in a 2-dimensional Hilbert space and we use $\qubit{0}$ and $\qubit{1}$ to express the normalized basis of the Hilbert space $\HH_{2}$.
Any vector in this space is thus described as $\alpha\qubit{0}+\beta\qubit{1}$ with $\alpha,\beta\in\complex$. More generally, a quantum state of $n$ qubits is a vector in a $2^n$-dimensional Hilbert space and we write $\qubit{s}$ for a string $s\in\{0,1\}^n$ to express a normalized basis vector in $\HH_{2^n}$.
Notice that the set $\{\qubit{s}\mid s\in\{0,1\}^n\}$
is called the \emph{computational basis} of $\HH_{2^n}$ since this set
\emph{spans} the entire space $\HH_{2^n}$.
When $\qubit{\phi} = \sum_{s}\alpha_s\qubit{s}$ in $\HH_{2^n}$ has unit norm,  $\sum_{s}|\alpha_s|^2=1$ must hold.
We use the notation $\imath_{n}$ to denote the special quantum state of dimension $2^n$ with equal amplitudes, namely,  $2^{-n/2}\sum_{s\in\{0,1\}^n}\qubit{s}$.
A \emph{quantum string} (or a \emph{qustring}, for short) $\qubit{\phi}$ of $n$ qubits refers to a unit-norm quantum state of $n$ qubits and $n$ is the \emph{length} of $\qubit{\phi}$, expressed as $\ell(\qubit{\phi})$.
The notation $\Phi_{2^n}$ denotes the collection of all qustrings of length $n$. As a special case, we make the set $\Phi_0$ composed only of the null vector.
and we call the null vector by the qustring of length $0$.
Let $\Phi_{\infty} = \bigcup_{n\geq0} \Phi_{2^n}$. See, e.g., \cite{Yam99,Yam03,Yam20,Yam24} for more detail.

We expand the scope of promise problems $(A,B)$ from classical strings to qustrings in $\Phi_{\infty}$ by resetting the notation $(A,B)$ to satisfy $A,B\subseteq \Phi_{\infty}$ and $A\cap B=\setempty$.
Quantum NP, for instance, is defined to consist of promise problems over $2^n$-dimensional Hilbert space \cite{Yam02}.
As a concrete example, consider the promise problem $(A,B)$ defined by $A=\{\qubit{\phi}\in\Phi_{\infty}\mid \|\measure{0}{\psi}\|^2\geq 2/3\}$ and $B=\{\qubit{\phi}\in\Phi_{\infty}\mid \|\measure{1}{\psi}\|^2\geq 2/3\}$, where $\qubit{\psi} = \trace_{\ell(\qubit{\phi})}(\qubit{\phi})$.
Clearly, $(A,B)$ satisfies both $A,B\subseteq \Phi_{\infty}$ and $A\cap B=\setempty$.

In Section \ref{sec:functional}, we also consider functions mapping $[m_1]\times [m_2]\times \cdots \times [m_k]$ to $\Phi_{2^{m_0}}$ with $c,c',k\in\nat^{+}$, $\sum_{i=1}^{k}m_i = c\ilog{n}$, and $m_0=c'\ilog{n}$. Those functions are also viewed as sets of the form: $\{(i_1,i_2,\ldots,i_k,\qubit{\phi_{i_1,i_2,\ldots,i_k}})\mid : \forall j\in[k] [ i_j\in [m_j] ]\}$ with $\qubit{\phi_{i_1,i_2,\ldots,i_k}}\in \Phi_{2^{m_0}}$.

A \emph{quantum gate} of fan-in $n$ acts as a unitary transform of a quantum state of $n$ qubits to another quantum state of $n$ qubits. Such a transform can be expressed as a $2^n\times 2^n$ unitary matrix over $\complex$. We freely identify a quantum gate with its associated unitary matrix. Remember  that, for any quantum gate, its fan-in and fan-out should be equal.
Typical examples of quantum gates are $I$ (identity), $NOT$ (negation), $CNOT$ (controlled-NOT), $S$ (phase shift), $ROT_{\theta}$ (rotation at angle $\theta$), $T$ ($\pi/8$ gate), and $WH$ (Walsh-Hadamard transform).
Those quantum gates work for bits $a,b\in\{0,1\}$ as $NOT(\qubit{a})= \qubit{1-a}$, $CNOT(\qubit{a}\qubit{b}) = \qubit{a}\qubit{a\oplus b}$, $S\qubit{a}= \imath^{a}\qubit{a}$, $ROT_{\theta}(\qubit{a}) = \cos\theta\qubit{a}+(-1)^a\sin\theta\qubit{1-a}$, $T\qubit{a}=e^{\frac{\pi a\imath}{4}}\qubit{a}$, and  $WH(\qubit{a})= \frac{1}{\sqrt{2}}(\qubit{0}+(-1)^a\qubit{1})$. Note that $NOT=ROT_{\pi}$ and $WH=ROT_{\pi/4}$.
Moreover, there are two special types of gates: input gates and output gates.

As an auxiliary operator, we introduce $SWAP$, whose value $SWAP(i,\qubit{\phi})$ indicates the quantum state obtained by moving the $i$th qubit of a quantum state $\qubit{\phi_y}$ to the first; namely, $SWAP(i,\qubit{\phi}) = \sum_{u\in\{0,1\}^n}\alpha_u\qubit{u^{(i)}}$ if $\qubit{\phi}=\sum_{u\in\{0,1\}^n}\alpha_u\qubit{u}$, where $u^{(i)}=u_{(i)}u_{(2)}u_{(3)}\cdots u_{(i-1)}u_{(i+1)}\cdots u_{(n)}$.  Its inverse $SWAP^{-1}(i,\qubit{\phi})$ works as $SWAP^{-1}(i,\qubit{\phi}) = \sum_{u\in\{0,1\}^n}\alpha_u\qubit{\tilde{u}^{(i)}}$, where $\tilde{u}^{(i)} = u_{(2)}u_{(3)}\cdots u_{(i-1)} u_{(1)} u_{(i)}\cdots u_{(n)}$.

The number of input qubits is used as a basis parameter to describe the ``size'' and ``depth'' of a given quantum circuit. Given a quantum circuit, its \emph{size} is the total number of quantum gates and its \emph{depth} is the length of the longest path from any input gate to an output gate.

We designate one qubit as an output qubit. We assume that each quantum circuit uses only one measurement at the end of computation to obtain a classical output with a certain probability. Here, we measure the output qubit in the computational base, $\{\qubit{0},\qubit{1}\}$, after computation terminates.

The set $\{S,T,CNOT,WH\}$ of quantum gates is known to be \emph{universal} \cite{BBC+95} in the sense that any given quantum transform $U$ can be approximated to within a desired accuracy by an appropriate quantum circuit made up from a universal set of quantum gates.
It follows from the \emph{Solovay-Kitaev theorem} \cite{Kit97} (see also \cite{KSV02,NC00}) that, for any quantum gate $U$ on $n$ qubits and for any number $k\in\nat^{+}$, we can construct an $O(n\log^c(2^kn))$-size quantum circuit $C$ consisting only of the quantum gates in this universal set so that the norm $\|U(\qubit{\phi})-C(\qubit{\phi})\|$ is upper bounded by $2^{-k}$, where $c>0$ is an absolute constant.


We reserve the notations $\forall x$ and $\exists x$ for the quantifiers whose bound variable $x$ ranges over classical objects (such as strings, numbers, and graph vertices).
In contrast, the notations $\forall\qubit{\phi}$ and $\exists\qubit{\phi}$  express quantum quantifiers that quantify qustrings $\qubit{\phi}$.
We remark that the quantum quantifiers range over an uncountable domain in opposition to the classical quantifiers.
For example, the notation $(\forall \qubit{\phi}\in\Phi_{2^{\ilog(n)}}) P(\qubit{\phi})$ expresses that a statement $P(\qubit{\phi})$ holds for ``all qustrings $\qubit{\phi}$ in $\Phi_{2^{\ilog(n)}}$''; in comparison,  $(\exists \qubit{\phi}\in\Phi_{2^{\ilog(n)}}) P(\qubit{\phi})$ indicates that $P(\qubit{\phi})$ holds for ``some of the qustrings $\qubit{\phi}$ in $\Phi_{2^{\ilog(n)}}$''.

\vs{-1}
\section{Definition of Quantum First-Order Logic}\label{sec:definition-QFO}

We formally provide the definitions of quantum first-order logic (QFO) and its relevant notions by explaining our syntax and semantics with several  concrete examples.
We begin with basic concepts of logical terms and logical formulas based on structures (or models) that  represent specific quantum states and quantum transformations permitted by quantum mechanics. In contrast with (classical) first-order logic, we intend to use the terminology of ``quantum first-order'' in a clear reference to qustrings, functions from tuples of natural numbers to qustrings, etc. This clearly  contrasts ``quantum second order'', which refers to quantum functions mapping tuples of qustrings to qustrings, as in \cite{Yam20}.

\vs{-1}
\subsection{A Preliminary Discussion on Quantum First-Order Logic}\label{sec:early-discussion}

Before giving the formal definitions of quantum first-order logic in Sections \ref{sec:syntax} and \ref{sec:semantics}, we briefly discuss a fundamental idea behind the formulation of our quantum first-order logic.
There is an apparent similarity between ``classical computation'' and ``quantum computation''.  This similarity motivates us to formalize a quantum variant of  ``first-order logic'', with ``quantum terms'' and ``quantum formulas'', based on the behaviors of QTMs.
Despite the use of quantum states, we wish to evaluate quantum formulas to be either  true or false.

(1) To work with general qustrings of multiple qubits, we intend to express such qustrings symbolically in the form of ``variables''. Each variable represents an arbitrary quantum state of $\ilog(n)$ qubits, which may be partially or fully entangled. Those variables are referred to as \emph{(pure-state) quantum variables} to distinguish them from \emph{classical variables} that express natural numbers.
In classical first-order logic, the symbol $X$ is treated as an atomic predicate symbol to represent an input to the logical system.
We also use natural numbers to specify which qubits we apply quantum operations to.
In sharp contrast, we wish to use $X$ as a special quantum function expressing an input  qustring of length $n$ so that $X(i)$ ``represents'' the $i$th qubit of this particular qustring.

(2) Classical quantifiers in the classical first-order logic quantify variables ranging over natural numbers, whereas we intend to quantify ``quantum variables'' over all possible quantum states in a Hilbert space. The use of quantifications over quantum states has already been made in the past literature for quantum interactive proof systems and quantum $\np$. Quantum $\np$ of \cite{Yam02}, for instance, consists of promise problems $(A,B)$ over a $2^n$-dimensional Hilbert space for which there exist appropriate QTMs $M$ satisfying that, for any quantum state $\qubit{\phi}$ of $n$ qubits, (1) if $\qubit{\phi}$ is in $A$, then there exists a quantum state $\qubit{\xi}$ of $m$ qubits satisfying that $M$ accepts $(\qubit{\phi},\qubit{\xi})$ with high probability and (2)  if $\qubit{\phi}$ is in $B$, then $M$ rejects $(\qubit{\phi},\qubit{\xi})$ with high probability for any given quantum state $\qubit{\xi}$ of $m$ qubits, where $m$ is a positive polynomial in $n$. The above two conditions (1) and (2) do not exclude the possibility that, for certain quantum states $\qubit{\xi}$ of $m$ qubits, $M$ neither accepts nor rejects $(\qubit{\phi},\qubit{\xi})$ with high probability. This naturally leads us to an introduction of ``quantum quantifiers'', which quantify quantum variables.

(3) We remark that a major deviation from the classical first-order logic is the treatment of $BIT(i,Y)$ (cf. \cite{BIS90,Lin92}). In the classical setting, $BIT(i,Y)$ or its abbreviation $Y(i)$ is treated as an ``atomic formula'' indicating that the $i$th bit of $Y$  is $1$, whereas, in our setting, $QBIT$ is a ``function'' and thus $QBIT(i,y)$ is a ``term'' that expresses the value of the $i$th qubit of $y$.

(4) Unlike classical computation, the multiple use of the same variables are in general impossible due to the \emph{no-cloning theorem} of quantum mechanics. Hence, a quantum predicate needs to take two separate inputs $\qubit{\phi}$ and $\qubit{\psi}$.
Those qustrings  are differentiated as the \emph{first argument} and the \emph{second argument} of the quantum predicate.
We thus need to force a quantum variable to appear only at most once at the first argument place of a quantum predicate and at most once at the second argument place of another quantum predicate.

(5) To indicate the number of qubits and the location of particular qubits in a qustring, nevertheless, we also need ``classical'' variables (expressed in binary) as well.
Recall that, in the classical case, we can combine a finite number of classical strings in a series. To do the same with qustrings, we later use a function from natural numbers to qustrings.

(6) We intend to introduce a quantum analogue of logical formulas, called ``quantum formulas'', which are built from ``quantum terms'' and ``quantum predicate symbols'' by the limited use of logical connectives and quantifiers.
Any quantum predicate symbol, which is evaluated to be either true or false, is meant to describe a unitary property of given quantum states.
Such a property is expressed in this work as a set of input-output pairs of a fixed unitary matrix.
Recall that a universal set of quantum gates can approximate an arbitrary quantum gate to within arbitrary accuracy.
Since the set $\{I,S,T,CNOT,WH\}$ forms a universal set of quantum gates \cite{BBC+95}, we introduce those quantum gates, say, $G$ in the form of ``quantum predicate symbols'', which represent ``$G(\qubit{\phi})=\qubit{\psi}$''. For this reason, each quantum predicate symbol has the first section and the second section, which
respectively correspond
to an input and an output of a quantum gate.

(7) Typical classical logical connectives are $\wedge$ (conjunction), $\vee$ (disjunction), and $\neg$ (negation). In our formulation of quantum first-order logic, we interpret $\wedge$ to represent the successive execution of two separate independent computations. In contrast, the most troublesome classical connectives are $\neg$ and $\vee$.
The classical negation is supported by DeMorgan's law. Since this law does not hold in the quantum setting, we abandon the use of classical negation in our formalism.
Concerning the connective $\vee$, the expression $P\vee R$ means that we need to choose the correct formula between $P$ and $R$. This process is not a reversible computation. We thus replace $\vee$ by the branching scheme of \cite{Yam20}.

(8) We need to distinguish between two different types of existential quantifiers for quantum variables over quantum states. The first type deals with quantum variables that appear only in the first argument place of a certain quantum predicate symbol. This means that the corresponding quantum states are treated as inputs to underlying quantum computation. The second type deals with quantum variables that appear first in the second argument place of a certain predicate symbol and then appears in the first argument place of another quantum predicate symbol. This means that the corresponding quantum states are first generated from other quantum states and then as inputs to another quantum computation. In this case, the quantum states are completely and uniquely determined.

(9) To determine the final outcome of a quantum computation, we need to measure a designated qubit of the quantum state obtained after the computation. In a similar manner, we need to formulate an action of measurement on a designated qubit that appears in a given quantum formula.

\vs{-1}
\subsection{Syntax}\label{sec:syntax}

Our primary targets are quantum states of multiple qubits and various properties of these quantum states. We intend to develop a generic methodology to ``express'' those objects and their characteristic properties.
For clarity reason, we freely use additional parentheses ``$($'' and ``$)$'' wherever necessary although these parentheses are not part of the formal definitions of quantum terms and quantum formulas.

A \emph{vocabulary} (or an \emph{alphabet}) $T$ is a set of quantum predicate symbols of fixed arities, quantum function symbols of fixed arities, and constant symbols (which are actually quantum function symbols of arity $0$).
Terms are inductively defined from variables, constants, and function symbols, and formulas are constructed from quantum terms, quantum predicate symbols, logical connectives, and classical and quantum quantifiers.
To differentiate ``terms'' and ``formulas'' in our setting from the classical ones, we tend to use the phrases ``quantum terms'' and ``quantum formulas''.
Those concepts are explained below.
We begin with an explanation of quantum terms.

\s
\n{\bf [Variables]}
We use two types of variables: \emph{classical variables}, which are denoted by $i,j,k,\ldots$,
and \emph{pure-state quantum variables}, which are denoted by $x,y,z,\ldots$  
Each pure-state quantum variable $y$ is said to have \emph{qubit size} $|y|$, which indicates a positive integer. This $y$ refers to a qustring of length $|y|$.
When $k=|y|$, we briefly call $y$ a \emph{$k$-qubit quantum variable}.

To indicate the number of qubits and the location of particular qubits in a qustring, nonetheless, we need ``classical'' variables (expressed in binary) as well.
Recall that, in the classical case, we can combine a finite number of classical strings as a series. To do the same with qustrings, we later use a function from natural numbers to qustrings under the name of \emph{functional variables}.

\s
\n{\bf [Function symbols]}
A \emph{classical function symbol} of arity $1$ is the successor function $suc$, where
the expression
$suc(i)$ means $i+1$. We inductively define $suc^{(0)}(i)=i$, $suc^{(j+1)}(i)=suc(suc^{(j)}(i))$ for any $j\in\nat$.
As customary, for a constant $e\in\nat^{+}$, we informally write $i+e$ for $suc^{(e)}(i)$. Classical function symbols of arity $0$,
which are
customarily called \emph{(classical) constant symbols}, include $0$, $1$, $\ilog(n)$, and $n$, where $n$ refers to the input size.
\emph{Quantum function symbols} of arity $2$ are $QBIT$ and $\otimes$.
The symbol $\otimes$ is called the \emph{tensor product}; however, its role is to express a symbolic adjacency of two terms.
We intend to write $QBIT(i,y)$ and $s\otimes t$.

It is important to remark that, unlike the classical case, once we apply a quantum transform $C$ to the tensor product $\qubit{\phi}\otimes \qubit{\psi}$ of two pure quantum states $\qubit{\phi}$ and $\qubit{\psi}$, $C(\qubit{\phi}\otimes \qubit{\psi})$ may not be separated to two pure quantum states. From this fact, once we form $s\otimes t$ and apply a quantum predicate symbol $\PP_{F}$ as $\PP(s\otimes t:z)$, we may no longer split $z$ into two pure-state quantum terms.

The \emph{instance function symbol} $X$ is a special quantum function symbol of arity $1$, which is distinguished from all other symbols. Inputs to $X$ are assumed to be a qustring of length $\ilog(n)$.
Notice that, in the classical setting, $X(\cdot)$ is treated as a predicate symbol instead of a function symbol.

\begin{definition}[Quantum terms]
\emph{Quantum terms} are defined in the following inductive way.
(1) Classical variables $i,j,k,\ldots$ and pure-state quantum variables $x,y,z,\ldots$ are quantum terms.
(2) Classical function symbols of arity $0$ (i.e., classical constants), which are $0$, $1$, $n$, and $\ilog(n)$, are quantum terms.
(3) If $i$ is a classical term, then $suc(i)$ is also a classical term.
(4) Given a quantum variable $y$, $|y|$ is a classical term and $y$ is called a quantum term of qubit size $|y|$.
(5) If $i$ is a classical/quantum term and $y$ is a quantum term, then $QBIT(i,y)$ is a quantum term of qubit size $1$.
For readability, we abbreviate $QBIT(i,y)$ as $y[i]$, which is called the \emph{$t$th qubit} of $y$ or the \emph{$i$th component} of $y$.
(6) If $s$ and $t$ are quantum terms of qubit size $l$, then $s\otimes t$ is also a quantum term of qubit size $2l$.
(7) If $s$ is a quantum term of qubit size $\ilog(n)$, then $X(s)$ is a quantum term of qubit size $1$ and both $|X|$ and $|X(i)|$ are classical terms.
(8) Quantum terms are only obtained by finite applications of (1)--(7).
When a quantum term is constructed only from classical variables and classical functions, it is emphatically called a \emph{classical term}.
Given a quantum term $t$, the notation $Var(t)$ denotes the set of all pure-state quantum  variables that appear during the above construction process of $t$. For instance, if $t$ is the form $QBIT(s,y)$, then $Var(t)$ is $\{s,y\}$.
\end{definition}

We then explain the notion of quantum formulas.

\s
\n{\bf [Predicate symbols]}
A quantum predicate symbol is meant to describe a unitary property of given quantum states.
Such a property is expressed in this work as a set of input-output pairs whose transformation is made by an application of a fixed unitary matrix.
Recall that a universal set of quantum gates can approximate an arbitrary quantum gate to within arbitrary accuracy.
Since the set $\{I,S,T,CNOT,WH\}$ forms a universal set of quantum gates \cite{BBC+95}, we define these quantum gates, say, $G$ in the form of ``quantum predicate symbols'', which represent ``$G(\qubit{\phi})=\qubit{\psi}$''.
Thus, each quantum predicate symbol has the first section and the second section, which respectively correspond to an input and an output of a quantum gate.

We use two types of predicate symbols: \emph{classical predicate symbols} of arity $2$, which are $=$, $\leq$, and $<$, and \emph{quantum predicate symbols} of arity $1$, which include $\PP_{I}$ (identity) and $\PP_{ROT_{\theta}}$ (rotation at $x$-axis) for any $\theta\in\real$.
Notice that we do not include $\PP_{CNOT}$ (controlled NOT) because, as shown in Example \ref{example-formula}(1), it can be defined using $\PP_{I}$ and  $\PP_{ROT_{\pi}}$.
It is important to note that we cannot determine, in general, whether or not two quantum states are identical for sure. Thus, we do not use $s=t$ for non-classical terms $s$ and $t$. See, e.g., \cite{KNT08} for this issue.
Another quantum predicate symbol of arity $2$ is $\simeq_{\varepsilon}$ and is called the \emph{(quantum) measurement predicate}.

For any quantum predicate symbol $R$, we use the expression of the form $R(s:t)$ with a colon (:) used as a separator between the first and the second terms.
In particular, the terms $s$ and $t$ in $R(s:t)$ are respectively called the \emph{first argument} and the \emph{second argument} of $R$.
When a pure-state quantum variable $y$ or its component $y[i]$ appears in the first argument place of a quantum predicate, it is said to be \emph{processed}.
For readability, we write $s\simeq_{\varepsilon} b$ instead of formally writing $\simeq_{\varepsilon}(s:b)$ as a predicate for a constant $b\in\{0,1\}$.

\s
\n{\bf [Quantum logical connectives]}
Quantum logical connectives include $\wedge$ (quantum AND), $\|$ (quantum OR), and $\neg^Q$ (quantum NOT).
In our formulation, we interpret $\wedge$ to represent the successive execution of two separate independent ``computations''.
In contrast, the most troublesome classical connectives are $\neg$ and $\vee$.
Concerning the standard connective $\vee$, the expression $P\vee R$ indicates that we need to choose the correct formula between $P$ and $R$ but this process is not  ``reversible'' in general. We thus replace $\vee$
by the branching scheme of \cite{Yam20} with the new symbol ``$\|$''.
The classical negation is supported by de Morgan's law but, since this law does not hold in the quantum setting, we should abandon the use of the classical negation in our formalism.
The quantum negation $\neg^Q$, in contrast, plays a distinctive role in our formalism. Notice that the ``negation'' of the final outcome of quantum computation is made at the very end of the computation by exchanging between accepting inner states and rejecting inner states. The use of our $\neg^Q$ reflects this computational ``negation''.

\s
\n{\bf [Quantifiers]}
Classical quantifiers, $\forall$ and $\exists$, range over numbers in $[0,\ilog(n)]_{\integer}$. Quantum quantifiers include  $\forall^Q$ (\emph{quantum universal quantifier})
and $\exists^Q$
(\emph{quantum existential quantifier}) and both of them range over all qustrings of length $\ilog(n)$.
(i.e., all elements in $\Phi_{2^{\ilog(n)}}$).
It is important to note that we do not use $[0,n]_{\integer}$ for the range of the classical quantifiers in this current definition. Later, however, we will discuss how to remove this restriction.


\begin{definition}[Quantum formulas]\label{def-quantum-formula}
(I)
If $s$ and $t$ are classical terms, then $s=t$, $s\leq t$, and $s<t$ are  called \emph{classical formulas}, which are also \emph{atomic quantum formulas}.
If $s,t$ are quantum terms with $Var(s)\cap Var(t)=\setempty$, the expressions $\PP_{I}(s:t)$ and $\PP_{ROT_{\theta}}(s:t)$ for any $\theta\in\real$ are \emph{atomic quantum formulas}. If $s$ is a quantum term, $i$ is a classical term, and $b$ is either $0$ or $1$, then the expression $s[i]\simeq_{\varepsilon}b$ is an\emph{ atomic quantum formula}, where $\varepsilon$ is a real number in $[0,1]$.

(II)
Quantum formulas are defined in the following inductive way.
(1) Atomic quantum formulas are quantum formulas.
(2) Given two quantum formulas $R_1$ and $R_2$, the expression $R_1\wedge R_2$ is also a quantum formula.
(3) For a quantum variable $y$ and a quantum term $s$, if $y[s]$ does not appear in the second argument places of $R_1$ and $R_2$, then the expression  $(y[s])[R_1 \mmid R_2]$ is a quantum formula, where $y[s]$ is called an \emph{antecedent component}.
(4) If $R$ is a quantum formula,
then  $\neg^QR$ is also a quantum formula.
(5) If $s$ is of the form $|y|$ or any classical term built up symbols in $\{0,1,\ilog(n),suc\}$, then  $(\exists i\leq s)R(i)$ and  $(\forall i\leq s)R(i)$ are both quantum formulas.
(6) If $y$ is a pure-state quantum variable and $s$ is a quantum term, then $(\forall^Q y,|y|=s)R(y)$ and $(\exists^Qy,|y|=s)R(y)$ are both quantum formulas.
(7) Quantum formulas are obtained only by the applications of (1)--(6).
Given a quantum formula $R$, let $Var(R)$ denote the set of all quantum variables that  appear in $R$.
\end{definition}

As a simple example of quantum formulas, let us consider $\PP_{CNOT}(y:z)$, which is set to be $(|y|=|z|) \wedge \PP_{I}(y[1]:z[1])\wedge (z[1])[\PP_{I}(y[2]:z[2]) \mmid \PP_{ROT_{\pi}}(y[2]:z[2])] \wedge (\forall i,3\leq i\leq |y|)(\PP_{I}(y[i]:z[i]))$. In this way, we can express $CNOT$ (controlled NOT) in our quantum logical system using $\PP_{CNOT}$. Since $\PP_{CNOT}$ is ``definable'' from $(I,ROT_{\pi})$, there is no need to  include $\PP_{CNOT}$ as our predicate symbol.

A quantum formula $\phi$ is called a \emph{sentence} if there is no free variable in $\phi$.
Moreover, $\phi$ is said to be \emph{query-free} if it does not include the instance function symbol $X$.
Variables appearing within the scope of quantifiers are said to be \emph{bound} and other variables are \emph{free}.
We say that $\phi$ is \emph{measurement-free} if $\phi$ contains no subformulas of the form $t\simeq_{\varepsilon}0$ and $t\simeq_{\varepsilon}1$ for any quantum terms $t$.


Unfortunately, all quantum formulas in Definition \ref{def-quantum-formula}
that we have constructed so far
are not in a proper ``admissible'' form.
Therefore, we need the notion of ``well-formedness'', which describes this admissible form.
Let us take a quick look at a few simple cases, which should be avoided because of the violation of the rules of quantum physics.

(1) Unlike classical computation, the duplicate use of the same variable is in general impossible due to the \emph{no-cloning theorem} of quantum mechanics.
Hence, in general, we cannot keep a copy of the same quantum states.
For example, in the expression $\PP_{F}(x[i]:y[i]) \wedge \PP_{G}(x[i]:z[i])$, the component $x[i]$ appears twice in the first argument places of $\PP_{F}$ and $\PP_{G}$ as if they are identical copies.
Since we cannot copy $x[i]$ in the quantum setting,
Thus, we cannot use it for two different quantum predicates.
We then need to force a quantum variable to appear only at most once at the first argument place of a quantum predicate and at most once at the second argument place of another quantum predicate.
To describe this situation,
we wish to introduce a graph that expresses a transformation sequence of quantum variables.
Given a quantum formula $R$, we define a directed graph $G_R=(V,E)$, which is called the \emph{variable connection graph} of $R$, as follows. Let $V=Var(R)$ and let $E$ consist of all pairs $(y,z)$ for two quantum variables $y$ and $z$ such that there exists a quantum predicate symbol of the form $\PP_F$ in a given quantum formula for which either $y$ or its component $y[i]$ is the first argument of $R$ and either $z$ or
its component
$z[j]$ is the second argument of $R$.

(2) Since each quantum predicate realizes a quantum transform, its second argument is uniquely determined by its first argument. Therefore, the second argument place cannot contain constants or the instance function symbol $X$.

(3) Similarly to (2), the expression like $\PP_{F}(x[i]:z[i])\wedge \PP_{G}(y[i]:z[i])$ must be avoided because $z[i]$ could have two different values in $\PP_{F}$ and $\PP_{G}$ depending on the value of $x[i]$ and $y[i]$.

Any second argument of each predicate $\PP_{F}$ must be uniquely identified from its first argument. For any quantum subformula $\phi$ in $R$ of the form $\PP_{G}(y[t(i_1,\ldots,i_k)]: z[s(j_1,\ldots,j_l)])$ with quantum terms $t(i_1,\ldots,i_k)$ and $s(j_1,\ldots,j_l)$, we demand that, for any tuple $(i_1,\ldots,i_k)$, the term  $s(i_1,\ldots,i_k)$ over numbers is uniquely identified from $t(i_1,\ldots,i_k)$.

(4) Consider the expression $(\forall i\leq s)[\PP_{F}(x[i]:z[i])]\wedge (\forall j\leq t)[\PP_{G}(y[j]:z[j])]$, which does not fall into (1).
However, when the scopes of two classical quantifiers $(\forall i\leq s)$ and $(\forall  j\leq t)$ overlap, two distinct components $z[i]$ and $z[j]$ coincide as long as $i$ equals $j$.
We thus need to avoid such an expression.

Another avoidable expression is of the form $(\forall i\leq s)(\forall j\leq s)\PP_{F}(y[i]:z[j])$, in which there is a chance that $y[i_1]$ and $y[i_2]$ for two different values $i_1$ and $i_2$ correspond to the same $z[j]$; that is, $\PP_{F}(y[i_1]:z[j])$ and $\PP_{F}(y[i_2]:z[j])$ occur for two values $i_1$ and $i_2$. A similar situation occurs even in the case of $(\forall i\leq s)(\exists j\leq s)\PP_{F}(y[i]:z[j])$. All these situations should be avoided.

\begin{definition}[Well-formedness]
A quantum formula $R$ is called \emph{well formed} if it satisfies the following conditions (1)--(4).

(1) The variable connection graph of $R$ must be topologically ordered. In this case, we succinctly say that the quantum formula $R$ is \emph{order-consistent}.

(2) For any quantum predicate, its second argument place must contain no terms made up of constants or the instance function symbol $X$.

(3) Neither a pure-state quantum variable $z$ nor its component $z[i]$ appears in the first argument place (as well as the second argument places) of two different quantum predicates.

(4) For any pure-state quantum variable whose components are bound by classical quantifiers, no such component (with possibly ``different'' classical variables) appear in the first argument places (as well as the second argument places) of more than two quantum predicates.
\end{definition}

Since we are interested only in well-formed quantum formulas, for readability, hereafter, all quantum formulas
are implicitly assumed to be well-formed and we often drop the term ``well-formed'' from ``well-formed quantum formulas.''


\begin{example}\label{example-formula}
To promote the reader's understanding of quantum formulas,  we intend to present several simple examples of quantum formulas. In the first example, we assume that $|y|=|z|\geq2$.
In the third example, $|y|=|z|\geq3$ is assumed. The notation $i=j=k$ is a shorthand for $i=j\wedge j=k\wedge j=k$.
The special symbol ``$\equiv$'' is used to mean ``defined by''.

\renewcommand{\labelitemi}{$\circ$}
\begin{enumerate}\vs{-1}
  \setlength{\topsep}{-2mm}%
  \setlength{\itemsep}{1mm}%
  \setlength{\parskip}{0cm}%

\item $\PP_{copy}(y:z) \equiv (|y|=|z|)\wedge (\forall i\leq|y|) [ \PP_I(y[i]:z[i]) ]$.

\item $\PP_{NOT}(y:z) \equiv (|y|=|z|=1)\wedge \PP_{ROT_{\pi}}(y:z)$. (negation)

\item $\PP_{CNOT}(y:z) \equiv (|y|=|z|) \wedge \PP_{I}(y[1]:z[1])\wedge (z[1])[\PP_{I}(y[2]:z[2]) \mmid \PP_{NOT}(y[2]:z[2])] \wedge (\forall i,3\leq i\leq |y|)(\PP_{I}(y[i]:z[i]))$. In this way, we can express $CNOT$ (controlled NOT) in our quantum logical system. This is the reason why we have not included $\PP_{CNOT}$ as our basic predicate symbol.

\item $\PP_{CSWAP}(y:z) \equiv (|y|=|z|)\wedge \PP_{I}(y[1]:z[1])\wedge (z[1])[ \PP_{I}(y[2]:z[2]) \wedge \PP_{I}(y[3]:z[3]) \mmid \PP_{I}(y[2]:z[3])\wedge \PP_{I}(y[3]:z[2])] \wedge (\forall i,4\leq i\leq |y|)(\PP_{I}(y[i]:z[i]))$. (Controlled-SWAP)

\item $\PP_{WH}(y:z) \equiv (|y|=|z|)\wedge (\forall i\leq |y|) \PP_{ROT_{\pi/4}}(y[i]:z[i])$. (Walsh-Hadamard transform)

\item  $\PP_{\oplus}(y,z:v,w) \equiv (|y|=|z|=|w|)\wedge (\forall i\leq|y|) [\PP_{CNOT}(y[i]\otimes z[i]: v[i]\otimes w[i]) ]$. This $w$  classically expresses the bitwise XOR of $y$ and $z$.

\item $NEQ(y,z) \equiv (\exists^Q v,|v|=|y|)(\exists^Q w,|w|=|z|) [ \PP_{\oplus}(y,z:v,w)\wedge (\exists j\leq |w|)[w[j]\simeq_{\varepsilon}1] ]$. Classically, this expression indicates $y\neq z$.

  \item $ONE(y:z) \equiv (\exists i\leq |y|)[(y[i])[ (\forall k\leq |z|)\PP_I(1:z[k]) \mmid (\forall j<i) [ (y[j])[\PP_I(0:z[j]) \mmid \PP_I(1:z[j])] \wedge (\forall k,i<k\leq \ilog(n))[ (y[k])[\PP_I(0:z[k]) \mmid \PP_I(1:z[k]) ]]]$.

\item $MAJ_1(X)\equiv  (\exists^Qy,|y|=\ilog(n)) [ (\forall i\leq |y|) [\PP_{ROT_{\pi/4}}(0:y[i])] \wedge X(y)\simeq_{\varepsilon}1 ]$. This expression7 checks whether $X$ contains a large number of $1$s. Since the Walsh-Hadamard transform $WH$ works as $WH(\qubit{0^{\ilog(n)}}) = \frac{1}{\sqrt{\hat{n}}} \sum_{s\in\{0,1\}^{\ilog(n)}} \qubit{s} = \frac{1}{\sqrt{\hat{n}}} \sum_{i=1}^{\hat{n}} \qubit{s^{(\ilog(n))}_i}$, it thus follows that $\sum_{i=1}^{\hat{n}} \prob[\text{ $s^{(\ilog(n))}_i$-th bit of $x$ is $1$ }] = \#_1(x)/\hat{n}$, where $\#_b(x)$ denotes the total number of $b$ in $x$.
\end{enumerate}
\end{example}


Given a quantum formula $\phi$, a quantum variable $y$ (as well as its component $y[i]$) in $\phi$ is called \emph{predecessor-dependent} in $\phi$ if it appears in the second argument place of a certain quantum predicate symbol of $\phi$; otherwise, we call it \emph{predecessor-independent} in $\phi$.
For instance, in Example \ref{example-formula}(9), quantum variable $y$ and its component $y[i]$ in $MAJ_1(X)$  are both predecessor-dependent.
When all quantum variables used in $\phi$ are predecessor-dependent in $\phi$, we call this $\phi$ \emph{predecessor-dependent}.
For a practical viewpoint, it is useful to differentiate two usages of $\exists^Q$ in $\phi\equiv (\exists^Q y,|y|=s)R(y)$, depending on whether or not $y$ is predecessor-dependent in $\phi$.

\begin{definition}[Consequential/introductory quantum quantifiers]\label{def-iqq}
Given a quantum formula $\phi$ of the form $(\exists^Q y,|y|=s)R(y)$, if $y$ is predecessor-dependent in $\phi$, then $\exists^Qy$ is called \emph{consequential quantum existential quantifier}.
Otherwise, we call it an \emph{introductory quantum existential quantifier}. With this respect, quantum universal quantifiers are always  treated as introductory quantum quantifiers. The introductory quantum existential quantifiers together with quantum universal quantifiers are collectively called \emph{introductory quantum quantifiers}. We say that a quantum formula $\phi$ is \emph{introductory quantum quantifier-free} (or iqq-free, for short) if all quantum quantifiers used in $\phi$ are only consequential.
\end{definition}

Given a constant $\varepsilon_0\in[0,1/2)$, a quantum formula $\phi$ is said to be \emph{$\varepsilon_0$-error bounded} if all measurement predicate symbols appearing in $\phi$ have the form $t\simeq_{\varepsilon}0$ and $t\simeq_{\varepsilon'}1$ with $\varepsilon,\varepsilon'\leq \varepsilon_0$ for quantum terms $t$.

For the ease of description, we further expand quantum OR ($\|$) and quantum measurement ($\simeq_{\varepsilon}$) in the following way. Let $k$ denote any integer more than $1$. We inductively define $(z[i_1],z[i_2],\ldots,z[i_k])[P_1\mmid P_2\mmid \cdots \mmid P_{m_k}]$ to be $(z[i_{k}])[Q_1 \mmid Q_2]$, where $m_j= \sum_{i=1}^{j}2^i$ for $j\geq1$, $Q_1\equiv (z[i_2],z[i_3],\ldots,z[i_{k}])[P_1\mmid P_2 \mmid \cdots \mmid P_{m_{k-1}}]$, and $Q_2\equiv (z[i_2],z[i_3],\ldots,z[i_{k}])[P_{m_{k-1}+1}\mmid P_{m_{k-1}+2}  \mmid \cdots \mmid P_{m_k}]$.
We also define a \emph{multiple-qubit measurement} as $z[i_1]\otimes z[i_2] \otimes \ldots \otimes z[i_k]  \simeq_{\varepsilon} b_1b_2\cdots b_k$.

\vs{-1}
\subsection{Semantics}\label{sec:semantics}

Quantum formulas that we have introduced in Section \ref{sec:syntax} are made up of the following components: clasical/quantum terms, classical/quantum predicate symbols,  classical/quantum quantifiers, and quantum logical connectives.
Next,  we wish to provide semantics to these quantum formulas by giving intended ``meanings'' of them by an evaluation function based on a given structure (or a model).
This is intended to  ``realize'' each quantum computation of an underlying quantum machine in terms of logic as in the classical case of FO. Given an input $\qubit{\phi_X}$ to an underlying machine, the machine produces an address $i$ of the input and tries to access the $i$th bit of $\qubit{\phi_X}$.

\begin{definition}[Structure]
Given a vocabulary $T$, a \emph{structure} (or a \emph{model}) $\MM$ with vocabulary $T$ is a set  consisting of  a universe $\UU$, relations of fixed arities, functions of fixed arities, and constants from $\UU$. Each relation, each function, and each constant respectively correspond to their associated predicate symbol, functions symbol, and constant symbol.
\end{definition}

In what follows, we fix a structure $\MM$  arbitrarily.
We then  introduce the notion of interpretation, which assigns ``objects'' in the given structure $\MM$ to classical/quantum terms. Generally, we assign natural numbers to classical terms and quantum states to quantum terms.
To the instance function symbol $X$, in particular, we assign a qustring, say, $\qubit{\phi_X}$ of length $n$ given by $\MM$.

A pure-state quantum variable $y$ represents a qustring  $\rho_y=\qubit{\phi_y}$ of length $|y|$.
A quantum term $QBIT(i,y)$ for a classical term $i$ is intuitively meant to be the result of accessing the $\hat{i}$th qubit of $\rho_y$, where $\hat{i}$ is the number assigned to $i$. However, we cannot separate it from $\rho_y$, and thus we instead try to move it to the first location of $\rho_y$ by not destroying the quantum state $\rho_y$.
More precisely, for a classical term $i$ expressing $\hat{i}$, $QBIT(i,y)$ indicates  $SWAP(\hat{i},\qubit{\phi_y})$.
For a quantum term $s$ expressing  $\rho_s=\sum_{j\in[2^{|s|}]}\beta_j\qubit{j}$, in contrast, $QBIT(s,y)$ indicates $SWAP(\rho_s,\qubit{\phi_y}) = \sum_{j\in[2^{|s|}]} \beta_j SWAP(j,\qubit{\phi_y})$ (by the linearity of $SWAP$).

\begin{definition}[Interpretation]
An \emph{interpretation} $\xi$ is a function that assigns numbers and quantum states to classical and quantum terms, respectively. This $\xi$ assigns  natural numbers to classical variables $i$ and classical terms $suc(i)$ as  $\xi(i) = \hat{\imath}\in\nat$ and  $\xi(suc(i)) = \hat{\imath}+1$.
To a free pure-state quantum variable $y$, $\xi$ assigns  an arbitrary qustring $\rho_y=\qubit{\phi_y}$ of length $|y|$, namely,  $\xi(y) = \rho_{y}$.
Moreover, we assign to $X$ a qustring $\qubit{\phi_{X}}$ of length $n$, which is given in $\MM$.
For the quantum function symbols  $\otimes$ and $QBIT$, we use the following interpretation. The term $QBIT(i,s)$ is interpreted to
$\xi(QBIT(s,y)) = \xi(y[s]) = SWAP(\rho_{s},\rho_{y})$.
The term of the form $y[s]\otimes z[t]$ is interpreted to
$\xi(y[s]\otimes z[t]) = SWAP(2,SWAP(|s|,\rho_{y,s}\otimes \rho_{z,t}))$, where $\rho_{y,s} =\xi(y[s])$ and $\rho_{z,t}=\xi(z[t])$.
Moreover, $X(i)$ and $X(s)$ are interpreted to $\xi(X(i)) = SWAP(\hat{i},\qubit{\phi_X})$ and
$\xi(X(s)) = SWAP(\rho_s,\qubit{\phi_X})$, respectively.
\end{definition}

We treat quantum negation $\neg^Q$ in a quite different way that, unlike the classical case, $\neg^Q$ affects only quantum quantifiers associated with predecessor-independent quantum variables. This quantum negation comes from the asymmetric way of defining the acceptance and rejection criteria for Quantum $\np$ \cite{Yam02}.

\begin{definition}[Evaluation]\label{evaluation}
Let us define an evaluation $eval[R](\MM)$ of a quantum formula $R$ by $\MM$.

(1)
Consider an evaluation of predicate symbols.
Let $s,t,u,v$ denote four quantum terms.
For any transform $C\in\{I,ROT_{\theta}\mid \theta\in\real\}$ and for any two quantum states $\qubit{\phi_s},\qubit{\phi_t}\in \HH_2$, we define the value of
$eval[\PP_{C}(s:t)](\rho_s,\rho_t,\MM)$ as follows.

(i) In the case where $s$ and $t$ are pure-state variables of qubit size $1$, since $\rho_s=\qubit{\phi_s}$ and $\rho_t=\qubit{\phi_t}$, $eval[\PP_{C}(s:t)](\rho_s,\rho_t,\MM)$ is  $1$ if $\rho_{t} = C\rho_s$. Otherwise, it is $0$.

(ii) In the case where $s$ and $t$ are of the form $y[i]$ and $z[j]$ for classical terms $i$ and $j$ and pure-state variables $y$ and $z$ with $|y|=|z|$, respectively, we set $eval[\PP_{C}(s:t)](\rho_s,\rho_t,\MM)$ to be $1$ if $SWAP(\hat{j},\rho_z) = C\cdot SWAP(\hat{i},\rho_y)$ holds.
Otherwise, we set it to be $0$.

(iii) In the case where $s$ and $t$ are of the form $y[s']$ and $z[t']$ for pure-state variables $y$ and $z$ with $|y|=|z|$, $eval[\PP_{C}(s:t)](\rho_s,\rho_t,\MM) = 1$ if $SWAP(\rho_{t'},\rho_z) =C\cdot SWAP(\rho_{s'},\rho_y)$ holds. Otherwise, we set it to be $0$.

(2) For a classical formula of the form $s=t$, we set $eval[s=t](\hat{s},\hat{t},\MM)=1$ iff the classical numbers $\hat{s}$ and $\hat{t}$ interpreted for $s$ and $t$ are equal. The cases of $s\leq t$ and $s<t$ are similarly treated.
For any bit $b\in\{0,1\}$, we define $eval[s[i]\simeq_{\varepsilon}b](\qubit{\phi_s},b,\MM)$ to be $1$ if  where $s$ is interpreted to $\rho_{s[i]} = \sum_{c}\sum_{u:u_{(i)}=c}\alpha_u\qubit{u}$  and
$b$ is observed in the computational basis by quantum measurement with error probability at most $\varepsilon$, i.e.,  $\|\rho_s\|^2 - \|(I^i\otimes \ket{b}\bra{b} \otimes I^{|s|-i-1}) \rho_{s}\|^2\leq \varepsilon$. Otherwise, it is set to be $0$.

(3) For
the quantum AND ($\wedge$), we set $eval[P\wedge Q](\MM_{P\wedge Q})$ to be $1$ if both $eval[P](\MM_{P})=1$ and $eval[Q](\MM_{Q})=1$ hold for appropriate structures $\MM_P$ and $\MM_Q$ and $\MM_{P\wedge Q}$ is the union of $\MM_{P}$ and $\MM_{Q}$. Otherwise, we set it to be $0$.
For the quantum OR ($\mmid$), we set $eval[(y[s])[P_1\mmid P_2](\rho_s,\rho_y,\MM)$ to be $1$ if  $\rho_y$ is a quantum state interpreted for $y$ and both  $eval[P_1](\rho_s,\rho_y,\rho_{s,y}(0),\MM)=1$ and $eval[P_2](\rho_s,\rho_y,\rho_{s,y}(1),\MM)=1$ hold, where  $\rho_{s,y}(b) = \sum_{j\in[2^{|s|}]} \beta_j \sum_{u\in\{0,1\}^{|y|},u_{(j)}=b} \alpha_u\qubit{u}$ for any $b\in\{0,1\}$, $\rho_s=\sum_{j\in[2^{|s|}]}\beta_j\qubit{j}$, and $\rho_y=\sum_{u\in\{0,1\}^{|y|}}\alpha_u\qubit{u}$. Note that $\rho_y = \rho_{s,y}(0)+ \rho_{s,y}(1)$ holds.  Otherwise, we set $eval[(y[s])[P_1\mmid P_2](\rho_s,\rho_y,\MM)$ to be $0$.

(4) For quantifiers, we then consider then separately according to their types.

(i) For classical quantifiers, $eval[(\forall i\leq t)R(i)](\MM)=1$ iff $eval[R(i)](\hat{i}, \MM)=1$ and $eval[i\leq t](\hat{i},\MM)=1$ for all $\hat{i}\leq \hat{t}$, where $i$ and $t$ are interpreted to $\hat{i}$ and $\hat{t}$, respectively.
Similarly,  $eval[(\exists i\leq t)R(i)](\MM)=1$ iff $eval[R(i)](\hat{i},\MM)=1$ and $eval[i\leq t](\hat{i},\MM)=1$ for a certain $\hat{i}\leq \hat{t}$.

(ii) Concerning quantum quantifiers, on the contrary, we set $eval[(\exists^Qy, |y|=t)R(y)](\MM)$ to be $1$ if there exists a qustring $\qubit{\phi}$  of length $\hat{t}$ such that $eval[R(y)](\qubit{\phi_y},\MM)=1$  and $eval[|y|=t](\MM)=1$, provided that $\hat{t}$ is a number interpreted for $t$. Otherwise, we set it to be $0$.
Moreover, we set $eval[(\forall^Qy,|y|=t)R(y)](\MM)$ to be $1$ if $eval[R(y)](\qubit{\phi_y},\MM)=1$ and $eval[|y|=t](\MM)=1$ for all qustrings $\qubit{\phi_y}$ of length $\hat{t}$. Otherwise, it is $0$.

(5)
We define the value of $eval[(\neg^Q) P](\MM)$ inductively  for any quantum formula $P$ as follows.

(i) For any quantum predicate symbol of the form $\PP_{G}$ for $G\in\{I,ROT_{\theta}\mid \theta\in\real\}$,  $eval[(\neg^Q)\PP_{G}(s:t)](\MM)$ equals $eval[\PP_{G}(s:t)](\MM)$.
In contrast, we set $eval[(\neg^Q)(s\simeq_{\varepsilon}b)](\MM) = eval[s\simeq_{\varepsilon}\bar{b}](\MM)$, where $b$ is in $\{0,1\}$ and $\bar{b}=1-b$. On the contrary, when $s$ and $t$ are classical terms, we set  $eval[(\neg^Q)(s=t)](\MM) =  eval[s=t](\MM)$.

(ii) The double quantum negation makes $eval[(\neg^Q)(\neg^Q)P](\MM) = eval[P](\MM)$.
Moreover, $eval[(\neg^Q)(P\wedge Q)](\MM) = eval[(\neg^Q)P\wedge (\neg^Q)Q](\MM)$ and $eval[(\neg^Q)(y[s])[P_1\mmid P_2]](\MM) = eval[(y[s])[(\neg^Q)P_1\mmid (\neg^Q)P_2]](\MM)$.

(iii) If $y$ is a predecessor-independent variable, then we set $eval[(\neg^Q)(\exists^Q y,|y|=t)P(y)](\MM)$ to be $eval[(\forall^Q y,|y|=t)(\neg^Q)P(y)](\MM)$ and $eval[(\neg^Q)(\forall^Q y,|y|=t)P(y)](\MM)$ to be $eval[(\exists^Q y,|y|=t)(\neg^Q)P(y)](\MM)$.
On the contrary, for a predecessor-dependent variable $y$,  we set $eval[(\neg^Q)(\exists^Q y,|y|=t)P(y)](\MM)$ to be $eval[({\exists}^Q y,|y|=t)(\neg^Q)P(y)](\MM)$.
This is because the ``value'' of $y$ is completely and uniquely determined by a certain predicate that contains $y$ in its second argument place.

(iv)
For the classical universal quantifier ($\forall$),
we set $eval[(\neg^Q)(\forall i\leq t)P(i)](\MM)$ to equal $eval[(\forall i\leq t)(\neg^Q)P(i)](\MM)$.
\end{definition}

In the above definition of evaluation, not all quantum states that appear in the structure $\MM$ have unit norm.

Given two quantum formulas $\phi_1$ and $\phi_2$ including the same set of variables, we say that $\phi_1$ and $\phi_2$ are \emph{semantically equivalent} if, for any structure $\MM$ for $\phi_1$ and $\phi_2$,  it holds that $eval[\phi_1](\MM)=1$ iff $eval[\phi_2](\MM)=1$.
For instance, by Definition \ref{evaluation}(5), $\phi_1\equiv (\neg^Q)(y[s])[P_1\mmid P_2]$ is semantically equivalent to $\phi_2\equiv (y[s])[(\neg^Q)P_1\mmid (\neg^Q)P_2]$. Similarly, $(\neg^Q)(s\simeq_{\varepsilon}b)$ is semantically equivalent to $s\simeq_{\varepsilon}\bar{b}$.

Next, we introduce the notion of satisfiability.

\begin{definition}[Satisfiability, semantical validity]
A structure $\MM$ is said to \emph{satisfy} a quantum formula $\phi$ (or $\phi$ is \emph{satisfiable} by $\MM$)  if $\phi$ is evaluated to be $1$ by $\MM$ (i.e., $eval[\phi](\MM)=1$).  We use the notation $\MM\models \phi$  to express that $\phi$ is satisfiable by $\MM$.
A quantum formula $P$ is said to be \emph{semantically valid} (or simply \emph{valid}) if it is satisfiable by all structures.
\end{definition}

It is important to remark that, for any quantum sentence $\phi$ and for any structure $\MM$, there is no guarantee that either $\MM\models \phi$ or $\MM\models \neg^Q \phi$ always holds, and therefore there may be a chance for which both $\MM\not\models \phi$ and $\MM\not\models \neg^Q \phi$ hold.

\vs{-1}
\subsection{Basic Properties}

In classical logic, there are numerous rules that help us simplify complicated logical formulas. For instance, de Morgan's law helps transform  $x\wedge (y\vee z)$ to its logically equivalent formula $(x\wedge y) \vee (x\wedge z)$.


\begin{lemma}\label{logical-AND}
(1) If $(P\wedge Q) \wedge R$ and $P\wedge (Q\wedge R)$
are well-formed, then
they are  semantically equivalent to each other.
(2) If $P\wedge (x[1])[Q_1\mmid Q_2]$ and $(x[1])[P\wedge Q_1\mmid P\wedge Q_2]$, then
they are semantically equivalent to each other.
\end{lemma}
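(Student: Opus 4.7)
My plan is to prove both equivalences by unfolding the semantic rules for $\wedge$ and the branching $(x[1])[\cdot\mmid\cdot]$ in Definition \ref{evaluation}(3), and then chasing Boolean equalities on $eval$. The two tasks reduce to (i) associativity of set-theoretic union of sub-structures and of ordinary conjunction on $\{0,1\}$, and (ii) the fact that $eval[P]$ is invariant under the split of $\rho_x$ induced by the two branches of the quantum OR. In both parts, I only need to argue that a well-formed right-hand side and a well-formed left-hand side unfold, via the clauses of Definition \ref{evaluation}, to the same conjunction of atomic $eval$-calls on the same data.

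For part (1), I start from $eval[(P\wedge Q)\wedge R](\MM)=1$ and apply the $\wedge$-rule to obtain a decomposition $\MM=\MM_{PQ}\cup\MM_R$ with $eval[P\wedge Q](\MM_{PQ})=1$ and $eval[R](\MM_R)=1$. Reapplying the rule to $P\wedge Q$ yields $\MM_{PQ}=\MM_P\cup\MM_Q$ with $eval[P](\MM_P)=eval[Q](\MM_Q)=1$. An identical unfolding of $eval[P\wedge(Q\wedge R)](\MM)$ produces precisely the same three atomic conjuncts, because set-theoretic union is associative and so is Boolean conjunction. Therefore both sides take the same value on every structure. Well-formedness is preserved under the reassociation, since the variable connection graph, the second-argument occurrences, and the first-argument occurrence counts of every variable are all unchanged by grouping.

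For part (2), I unfold both sides in the same fashion. Writing $\rho_x=\rho_{1,x}(0)+\rho_{1,x}(1)$ for the split of $\rho_x$ at its first qubit, the left-hand side $eval[P\wedge (x[1])[Q_1\mmid Q_2]](\MM)=1$ reduces, via the $\wedge$-rule followed by the branching rule, to the conjunction of $eval[P](\MM_P)=1$, $eval[Q_1](\rho_{1,x}(0),\MM_Q)=1$, and $eval[Q_2](\rho_{1,x}(1),\MM_Q)=1$, where $\MM=\MM_P\cup\MM_Q$. The right-hand side $eval[(x[1])[P\wedge Q_1\mmid P\wedge Q_2]](\MM)=1$ unfolds first to $eval[P\wedge Q_b](\rho_{1,x}(b-1),\MM)=1$ for each $b\in\{1,2\}$, and a further application of the $\wedge$-rule expresses each of these as $eval[P](\MM_P^{(b)})=1$ together with $eval[Q_b](\rho_{1,x}(b-1),\MM_{Q_b}^{(b)})=1$.

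The hard part is exactly the invariance claim: that the two copies of $eval[P]$ appearing on the right agree with the single $eval[P](\MM_P)$ on the left. This is where the well-formedness conditions for both sentences become essential. The antecedent component $x[1]$ cannot itself occur as the first argument of any predicate inside $P$, since otherwise the formula $(x[1])[P\wedge Q_1\mmid P\wedge Q_2]$ would violate condition (3) on repeated first-argument occurrences once $x[1]$ is also the control of the branching. Consequently $P$ depends on none of the qubits of $\rho_x$ that are split by the branching, and one can choose the sub-structure $\MM_P$ disjointly from $\rho_x$, forcing $eval[P](\MM_P)=eval[P](\MM_P^{(1)})=eval[P](\MM_P^{(2)})$. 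Once this independence is in hand, both sides collapse to the same triple of conditions, giving the equivalence.
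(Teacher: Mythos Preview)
Your proof is correct and follows essentially the same route as the paper: both parts are handled by unfolding Definition~\ref{evaluation}(3) for $\wedge$ and for the branching $(x[1])[\cdot\mmid\cdot]$ and observing that the two sides reduce to the same conjunction of atomic $eval$-conditions. The paper's argument is terser---for (2) it simply records as ``fact (iii)'' that $eval[P\wedge Q_i](\MM,\rho_i,\rho_x)=1$ splits into $eval[P](\MM)=1$ and $eval[Q_i](\MM,\rho_i,\rho_x)=1$ without discussing why $eval[P]$ is insensitive to the branch data $\rho_i$---whereas you explicitly isolate this invariance as the crux and try to derive it from well-formedness. One small caution: your appeal to condition~(3) is not quite on the nose, since the antecedent component $x[1]$ of a branching is not literally a first-argument occurrence in a quantum predicate $\PP_F$, so condition~(3) alone does not forbid $x[1]$ from also appearing once inside $P$; the cleaner justification is that well-formedness of $(x[1])[P\wedge Q_1\mmid P\wedge Q_2]$ forces any predicate occurrence inside $P$ to be duplicated across both branches, which already rules out $P$ carrying quantum first-argument content tied to $x$. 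This is a cosmetic point---the paper does not address it at all---and your overall argument stands.
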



\begin{proof}
(1) This immediately follows from the fact that $eval[(P\wedge Q)\wedge R](\MM)=1$ is decomposed into $eval[P](\MM)=1$, $eval[Q](\MM)=1$, and $eval[R](\MM)=1$.

(2) This comes from the following facts: (i) $eval[P\wedge (x[1])[Q_1\mmid Q_2]](\MM,\rho_x)=1$ iff $eval[P](\MM)=1$ and $eval[(x[1])[Q_1\mmid Q_2](\MM,\rho_x)=1$, (ii) $eval[(x[1])[Q_1\mmid Q_2](\MM,\rho_x)=1$ iff $eval[Q_1](\MM,\rho_0,\rho_x)=1$ and $eval[Q_2](\MM,\rho_1,\rho_x)=1$, and (iii) for any $i\in\{0,1\}$, $eval[P\wedge Q_i](\MM,\rho_i,\rho_x)=1$ iff $eval[P](\MM)=1$ and $eval[Q_i](\MM,\rho_i,\rho_x)=1$.
\end{proof}

Similar to the equivalence relation ($\leftrightarrow$) in classical logic, its quantum analogue ($\Leftrightarrow$) is defined as follows: the expression $\phi\Leftrightarrow \psi$ is an abbreviation of the quantum formula $(y,|y|=1)[\phi\wedge \psi \mmid \neg^Q\phi\wedge \neg^Q\psi]$, where $y$ appears in neither $\phi$ nor $\psi$.

\begin{lemma}
$eval[\phi\Leftrightarrow\psi](\rho_y,\MM) =1$ iff the values of $eval[\phi](,\rho_y,\rho_y(0),\MM)$, $eval[\psi](\rho_y,\rho_y(0),\MM)$, $eval[\neg^Q\phi](\rho_y,\rho_y(1),\MM)$, and $eval[\neg^Q\psi](\rho_y,\rho_y(1),\MM)$ are all $1$, where $\rho_y(b)$ denotes $\alpha_b\qubit{b}$ for any bit $b\in\{0,1\}$ when $\rho_y=\sum_{u\in\{0,1\}}\alpha_u\qubit{u}$.
\end{lemma}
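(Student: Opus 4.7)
The plan is to unfold the abbreviation $\phi \Leftrightarrow \psi \equiv (y,|y|=1)[\phi \wedge \psi \mmid \neg^Q\phi \wedge \neg^Q\psi]$ and then apply, in order, the semantics rules for the branching connective $\mmid$ (Definition \ref{evaluation}(3)), the quantum AND $\wedge$ (the content of Lemma \ref{logical-AND}(1), or directly Definition \ref{evaluation}(3)), and finally identify the resulting four evaluation conditions with the four values listed in the statement.

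First I would note that because $y$ is declared to be a one-qubit variable ($|y|=1$), we may treat $(y,|y|=1)[\cdot \mmid \cdot]$ as the branching $(y[1])[\cdot \mmid \cdot]$ with branching term $s$ of qubit size $1$ whose interpretation is the index of the sole qubit of $y$. Expanding $\rho_y = \sum_{u\in\{0,1\}} \alpha_u \qubit{u}$ and applying Definition \ref{evaluation}(3) to the $\mmid$ connective, I obtain that $eval[\phi \Leftrightarrow \psi](\rho_y,\MM)=1$ if and only if both
\begin{equation*}
eval[\phi \wedge \psi](\rho_y,\rho_y(0),\MM) = 1 \quad \text{and}\quad eval[\neg^Q\phi \wedge \neg^Q\psi](\rho_y,\rho_y(1),\MM) = 1,
\end{equation*}
where $\rho_y(b) = \alpha_b\qubit{b}$ is exactly the $b$-branch restriction produced by the general formula $\rho_{s,y}(b)$ in the $\mmid$-rule, specialized to $|y|=1$ and the single branching qubit.

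Next, I would invoke Lemma \ref{logical-AND}(1) (together with the $\wedge$-clause of Definition \ref{evaluation}(3)) to decompose each of the two conjunctions. The first splits into $eval[\phi](\rho_y,\rho_y(0),\MM)=1$ and $eval[\psi](\rho_y,\rho_y(0),\MM)=1$, and the second splits into $eval[\neg^Q\phi](\rho_y,\rho_y(1),\MM)=1$ and $eval[\neg^Q\psi](\rho_y,\rho_y(1),\MM)=1$. Conjoining these four statements gives precisely the right-hand side of the claimed equivalence, and the ``if'' direction follows by running the same chain of equivalences backward.

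The main obstacle is a notational one rather than a mathematical one: the abbreviation uses the shorthand $(y,|y|=1)[\cdots]$ for what is literally $(y[1])[\cdots]$ in the official syntax of Definition \ref{def-quantum-formula}(3), so I must verify that with $|y|=1$ the interpretation of the branching term $s$ reduces to a single index and that the general restriction $\rho_{s,y}(b) = \sum_{j}\beta_j \sum_{u: u_{(j)}=b} \alpha_u \qubit{u}$ collapses exactly to $\alpha_b\qubit{b}$, matching the $\rho_y(b)$ used in the statement. Once this identification is made, no further computation is required: the proof is an unfolding using the $\mmid$-rule followed by the $\wedge$-rule.
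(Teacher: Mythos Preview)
Your proposal is correct and follows essentially the same approach as the paper: unfold the abbreviation, apply the $\mmid$-rule from Definition \ref{evaluation}(3) to obtain the two branch conditions $eval[\phi\wedge\psi](\rho_y,\rho_y(0),\MM)=1$ and $eval[\neg^Q\phi\wedge\neg^Q\psi](\rho_y,\rho_y(1),\MM)=1$, and then split each conjunction via the $\wedge$-clause. One minor remark: the reference to Lemma \ref{logical-AND}(1) is not quite apt, since that lemma concerns associativity of $\wedge$ rather than decomposition of a single conjunction; the $\wedge$-clause of Definition \ref{evaluation}(3) alone suffices, as you also indicate.
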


\begin{proof}
Assume that $eval[\phi\Leftrightarrow\psi](\rho_y,\MM)=1$. Since ``$\phi\Leftrightarrow \psi$'' is given by the use of the quantum OR, we obtain both $eval[\phi\wedge\psi](\rho_{y},\rho_y(0),\MM)=1$ and $eval[\neg^Q\phi\wedge \neg^Q\psi](\rho_{y},\rho_y(1),\MM)=1$. It then follows that the values of $eval[\phi](\rho_y,\rho_y(0),\MM)$, $eval[\psi](\rho_y,\rho_y(0),\MM)$,  $eval[\neg^Q\phi](,\rho_y,\rho_y(1),\MM)$, and $eval[\neg^Q\psi](\rho_y,\rho_y(1),\MM)$ are all $1$.
From these values, we conclude that $eval[\phi](\rho_y,\rho_y(0),\MM) =eval[\psi](\rho_y,\rho_y(0),\MM)$ and $eval[\neg^Q\phi](\rho_y,\rho_y(1),\MM) =eval[\neg^Q\psi](\rho_y,\rho_y(1),\MM)$.

On the contrary, we assume that the values of $eval[\phi](,\rho_y,\rho_y(0),\MM)$, $eval[\psi](\rho_y,\rho_y(0),\MM)$,  $eval[\neg^Q\phi](\rho_y,\rho_y(1),\MM)$, and $eval[\neg^Q\psi](\rho_y,\rho_y(1),\MM)$ are all $1$.
By the definition of the quantum AND, we obtain $eval[\phi\wedge \psi](\rho_y,\rho_y(0),\MM) =1$ and $eval[\neg^Q\phi\wedge \neg^Q\psi](\rho_y,\rho_y(1),\MM) =1$. From these values, we conclude that   $eval[\phi\Leftrightarrow \psi](\rho_y,\MM)=1$.
\end{proof}

In classical logic, the negation ($\neg$) can be suppressed to the elementary formulas. In a similar fashion, we can remove all quantum negations from each quantum formula.

\begin{definition}
A quantum formula $\phi$ is said to be \emph{negation free} if the quantum NOT ($\neg^Q$) does not appear in $\phi$.
\end{definition}

Definition \ref{evaluation} immediately yields the following simple fact.

\begin{lemma}\label{negation-free}
Given a quantum formula $\phi$, there exists another quantum formula $\phi'$ such that $\phi'$ is semantically equivalent to $\phi$ and $\phi'$ is negation free.
\end{lemma}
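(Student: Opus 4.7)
The plan is to prove the lemma by structural induction on $\phi$, systematically pushing every occurrence of $\neg^Q$ inward using the rewrite rules laid out in Definition \ref{evaluation}(5) until the negations either vanish at atomic subformulas or get absorbed into measurement predicates as a bit flip. Since the rules in Definition \ref{evaluation}(5) were stated as equalities of evaluations, each rewrite step preserves semantic equivalence, and the resulting $\phi'$ is a quantum formula of the same general shape with $\neg^Q$ removed.

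More concretely, I would proceed in two stages. First, I would define a rewriting procedure $\tau$ on quantum formulas that takes any $\phi$ and returns a negation-free $\phi'$. The base cases are: $\tau(\PP_G(s:t)) = \PP_G(s:t)$, $\tau(s=t) = s=t$, $\tau(s\leq t) = s\leq t$, $\tau(s<t) = s<t$, and $\tau(s[i]\simeq_\varepsilon b) = s[i]\simeq_\varepsilon b$, together with the ``negated base'' cases $\tau(\neg^Q\PP_G(s:t)) = \PP_G(s:t)$, $\tau(\neg^Q(s=t)) = s=t$, and $\tau(\neg^Q(s[i]\simeq_\varepsilon b)) = s[i]\simeq_\varepsilon \bar b$. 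The inductive cases apply the distributive rules of Definition \ref{evaluation}(5): $\tau(P\wedge Q) = \tau(P)\wedge \tau(Q)$, $\tau((y[s])[P_1\mmid P_2]) = (y[s])[\tau(P_1)\mmid \tau(P_2)]$, $\tau((\forall i\leq t)R) = (\forall i\leq t)\tau(R)$, and analogous clauses for the quantum quantifiers. For negated compound formulas, I would first apply one rewrite step (e.g.\ $\neg^Q(P\wedge Q) \rightsquigarrow \neg^Q P \wedge \neg^Q Q$) and then recurse. The double-negation clause $\tau(\neg^Q\neg^Q P) = \tau(P)$ handles iterated negations.

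The slightly delicate step is the quantifier case, because Definition \ref{evaluation}(5)(iii) treats predecessor-independent and predecessor-dependent quantum variables differently: for a predecessor-independent $y$, $\neg^Q(\exists^Q y,|y|=t)P$ rewrites to $(\forall^Q y,|y|=t)\neg^Q P$ (and symmetrically for $\forall^Q$), whereas for a predecessor-dependent $y$ it rewrites to $(\exists^Q y,|y|=t)\neg^Q P$. I would therefore stipulate that $\tau$ inspects the variable-connection graph of the immediate subformula before choosing the appropriate rewrite; this information is determined solely by the syntax of the subformula, so the procedure is well defined. I would also observe that pushing $\neg^Q$ through a quantifier does not alter predecessor-dependence of the bound variable in the body, so the distinction is stable under recursion.

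Finally, I would argue termination and equivalence. Termination follows from a simple measure: either the outermost $\neg^Q$ is pushed strictly deeper, or a negation pair is cancelled, or an atom is reached; formally one can use a lexicographic pair consisting of formula depth and the number of $\neg^Q$ occurrences, which decreases at every rewrite. Semantic equivalence $eval[\phi](\MM) = eval[\tau(\phi)](\MM)$ follows by induction: the base cases are exactly the equalities stated in Definition \ref{evaluation}(5)(i), double negation is handled by clause (ii), and the inductive clauses appeal to the distributive identities in Definition \ref{evaluation}(5)(ii)--(iv) together with the inductive hypothesis applied to strictly smaller subformulas. Setting $\phi' := \tau(\phi)$ yields the desired negation-free formula. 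The main conceptual point to verify carefully is the quantum-quantifier case above; the rest is bookkeeping once the rewrite rules of Definition \ref{evaluation}(5) are in hand.
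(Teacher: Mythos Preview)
Your proposal is correct and is exactly the argument the paper has in mind: the paper's own proof is the single sentence that the lemma ``immediately follows from Definition~\ref{evaluation},'' and your structural induction pushing $\neg^Q$ inward via the rewrite equalities of Definition~\ref{evaluation}(5) is precisely the unpacking of that one-line remark. The only small omission is that you do not explicitly state the negated base cases for $s\leq t$ and $s<t$ (and the classical $\exists$ case, which the paper itself leaves implicit), but these are handled identically to $s=t$ and $\forall$ and add nothing new.
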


We intend to expand the quantum OR so that it can handle multiple qubits in the conditional part.

\begin{lemma}\label{multi-quantumOR}
Let $m\in\nat^{+}$ and let $Y_m = (y_1[e_1],y_2[e_2],\ldots,y_m[e_m])$. The expression $(Y_m)[P_{0^m}\mmid P_{0^{m-1}1}\mmid \cdots \mmid P_{1^m}]$ means that if $Y_m$ evaluates to the $k$th string $s_{k}^{(m)}$ of $\{0,1\}^m$, then $P_{s_{k}^{(m)}}$ is true. This expression is expressible within  $\mathrm{classicQFO}$.
\end{lemma}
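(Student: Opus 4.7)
The plan is to proceed by induction on $m$. In the base case $m=1$, the expression $(y_1[e_1])[P_0\mmid P_1]$ is already in the primitive single-qubit form provided by Definition \ref{def-quantum-formula}(3), hence it belongs to $\mathrm{classicQFO}$ provided $P_0,P_1$ do. For the inductive step, I assume the claim for $m$ and construct the $(m+1)$-qubit version. Given formulas $\{P_s\}_{s\in\{0,1\}^{m+1}}$, I would split on the last register $y_{m+1}[e_{m+1}]$ by defining, for each bit $b\in\{0,1\}$,
\[
Q_b \;\equiv\; (y_1[e_1],y_2[e_2],\ldots,y_m[e_m])[\,P_{0^m b}\mmid P_{0^{m-1}1\,b}\mmid \cdots \mmid P_{1^m b}\,],
\]
each of which lies in $\mathrm{classicQFO}$ by the inductive hypothesis. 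The full multi-register quantum OR is then rewritten as the single-qubit branching
\[
(y_{m+1}[e_{m+1}])[\,Q_0 \mmid Q_1\,],
\]
which is a $\mathrm{classicQFO}$ formula because it applies a primitive quantum OR to two $\mathrm{classicQFO}$ formulas.

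Semantic correctness follows by unfolding Definition \ref{evaluation}(3) recursively. A single split on $y_{m+1}[e_{m+1}]$ evaluates $Q_0$ and $Q_1$ on the two parts of $\rho_{y_{m+1}}$ obtained by projecting its $e_{m+1}$-th qubit onto $\ket{0}$ and $\ket{1}$, respectively; iterating $m+1$ times produces $2^{m+1}$ branches, one for every $s\in\{0,1\}^{m+1}$, and on the branch corresponding to $s$ the formula $P_s$ is required to evaluate to $1$. This is precisely the intended semantics of the multi-qubit quantum OR stated in the lemma. Since the variables $y_1,\ldots,y_{m+1}$ are pairwise distinct and each $y_j$ appears only in a single first-argument occurrence within the outer branching structure, well-formedness (topologically ordered variable connection graph and the at-most-once occurrence conditions) is inherited from the $P_s$ without modification.

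The main obstacle I expect is purely bookkeeping: I need to verify that the recursive decomposition above indexes the family $\{P_s\}$ in the correct lexicographic order (splitting on the last qubit first groups together all strings sharing a fixed last bit, which produces the intended order when concatenated). A minor check is that the construction introduces neither new quantifiers nor instance-function occurrences in second argument positions, so that neither the iqq-freeness nor the other syntactic restrictions separating $\mathrm{classicQFO}$ from $\mathrm{QFO}$ are violated; this is immediate because all we add are primitive single-qubit quantum ORs.
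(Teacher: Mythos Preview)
Your proposal is correct and follows essentially the same inductive approach as the paper's proof. The only cosmetic difference is that the paper splits on the \emph{first} register $y_1[e_1]$ (so that $R_0$ and $R_1$ collect the $P_s$ with $s$ beginning in $0$ and $1$, respectively), whereas you split on the \emph{last} register $y_{m+1}[e_{m+1}]$; either choice yields the intended semantics and stays within $\mathrm{classicQFO}$.
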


\begin{proof}
We inductively define the desired quantum formula. When $m=1$, since $Y_1$ coincides with $y_1[e_1]$, $(Y_1)[P_{0}\mmid P_{1}]$ is the same as the (standard) quantum OR. Consider the case of $m\geq2$.
We define $(Y_m)[P_{0^m}\mmid P_{0^{m-1}1}\mmid \cdots \mmid P_{1^m}]$ to be $(y_1[e_1])[R_0 \mmid R_1]$ with $R_0\equiv (y_2[e_2],\ldots,y_m[e_m])[P_{0^m}\mmid P_{0^{m-1}1}\mmid \cdots \mmid P_{01^{m-1}}]$ and $R_1\equiv (y_2[e_2],\ldots,y_m[e_m])[P_{10^{m-1}}\mmid P_{10^{m-2}1}\mmid \cdots \mmid P_{1^{m}}]$. It is not difficult to show that this is equivalent to  $(Y_m)[P_{0^m}\mmid P_{0^{m-1}1}\mmid \cdots \mmid P_{1^m}]$.
\end{proof}

For later convenience, we call the above form of quantum OR for multiple qubits by the \emph{multi quantum OR}.

\begin{proposition}\label{multi-quantum-OR-prop}
The expression $(y[e_1,e_2])[\{P_u\}_{u\in\{0,1\}^m}]$ with $e_2-e_1+1=2^m$ means that, if $y[e_1,e_2]$ evaluates to a string $u\in\{0,1\}^m$, then $P_u$ is true. This expression is expressible within $\mathrm{classicQFO}$.
\end{proposition}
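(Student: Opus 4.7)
The plan is to interpret $y[e_1,e_2]$ as convenient notation for the tuple of consecutive qubits $(y[e_1], y[suc(e_1)], y[suc^{(2)}(e_1)], \ldots, y[e_2])$, and then directly reduce the assertion to Lemma \ref{multi-quantumOR}. Since each $suc^{(k)}(e_1)$ for $0 \le k \le e_2-e_1$ is a legitimate classical term (built by finitely many applications of the successor symbol), each qubit access $y[suc^{(k)}(e_1)]$ is a well-formed quantum term of qubit size $1$, so the tuple itself is syntactically admissible.

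First I would make the shorthand precise: given the constraint relating $e_1$, $e_2$, and $m$ (so that the range picks out exactly the $m$ qubit positions addressed by strings in $\{0,1\}^m$), define $Y_m = (y[e_1], y[suc(e_1)], \ldots, y[e_2])$. Then I would set up the correspondence between strings $u \in \{0,1\}^m$ and their lexicographic indices $k \in [2^m]$, so that the family $\{P_u\}_{u\in\{0,1\}^m}$ can be rewritten as $P_{s^{(m)}_1}, P_{s^{(m)}_2}, \ldots, P_{s^{(m)}_{2^m}}$ in the form required by Lemma \ref{multi-quantumOR}.

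Next I would simply invoke Lemma \ref{multi-quantumOR} to obtain the explicit quantum formula $(Y_m)[P_{0^m} \mmid P_{0^{m-1}1} \mmid \cdots \mmid P_{1^m}]$, which is constructed inductively by nested applications of the single-qubit quantum OR scheme. A short verification then shows this is the intended semantic meaning of $(y[e_1,e_2])[\{P_u\}_{u\in\{0,1\}^m}]$: an evaluation under a structure $\MM$ succeeds exactly when the branch corresponding to the bit pattern of $y[e_1],\ldots,y[e_2]$ is true, by the semantics of $(\cdot)[\cdot\mmid\cdot]$ from Definition \ref{evaluation}(3).

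The only thing to confirm is that the resulting formula indeed lies within $\mathrm{classicQFO}$, and this is automatic: the construction in Lemma \ref{multi-quantumOR} uses only quantum conjunctions, quantum OR branches on components $y_j[e_j]$, and atomic subformulas carried inside the $P_u$'s, but introduces no new quantum quantifiers. Hence the restriction to iqq-free formulas (in the sense of Definition \ref{def-iqq}) is preserved, and no step jeopardizes well-formedness since all antecedent components $y[suc^{(k)}(e_1)]$ belong to the same quantum variable $y$ and do not appear in any second argument place within the $P_u$'s (as is the standing hypothesis for quantum OR). The main obstacle, if any, is purely bookkeeping — the bijection between $\{0,1\}^m$ and the nested recursion in Lemma \ref{multi-quantumOR} — rather than any substantive logical difficulty.
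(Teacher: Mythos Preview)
Your proposal is correct and follows essentially the same route as the paper: set the tuple $Y=(y[e_1],y[e_1+1],\ldots,y[e_2])$ and invoke Lemma~\ref{multi-quantumOR} directly. Your additional remarks on the legitimacy of the successor terms, preservation of iqq-freeness, and well-formedness are reasonable elaborations of what the paper leaves implicit, but the underlying argument is identical.
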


\begin{proof}
Note that $|y[e_1,e_2]|=2^m$. Hence, we set $Y_{2^m}=(y[e_1],y[e_1+1],\ldots,y[e_2])$ and consider $(Y_{2^m})[\{P_u\}_{u\in\{0,1\}^m}]$. Lemma \ref{multi-quantumOR} implies the desired proposition.
\end{proof}

Even in the framework of quantum first-order logic, it is possible to simulate the classical connectives $\wedge$ (AND) and $\vee$ (OR) indirectly.

\begin{lemma}
Assuming that $x,y,z,u,v,w$ are variables representing qubits.

(1) If these variables represent classical bits, then there exists a formula $R_{AND}$ such that $R_{AND}(x,y,z: u,v,w)$ is true iff $x=u$, $y=v$, and $w= (x\wedge y)\oplus z$ are all true.

(2) If these variables represent classical bits, then there exists a formula $R_{OR}$ such that $R_{OR}(x,y,z: u,v,w)$ is true iff $x=u$, $y=v$, and $w= (x\vee y)\oplus z$ are all true.
\end{lemma}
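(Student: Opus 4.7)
The plan is to case-split on the classical value of the control variable $x$ using the branching connective and assemble each case from primitives $\PP_I$, the $\PP_{NOT}$ of Example~\ref{example-formula}(2), and the derived $\PP_{CNOT}$ of Example~\ref{example-formula}(3). Concretely, for part~(1), I would set
\[
R_{AND}(x,y,z:u,v,w) \;\equiv\; \PP_I(x:u)\,\wedge\,(u[1])\bigl[\,\PP_I(y:v)\wedge \PP_I(z:w)\,\mmid\,\PP_{CNOT}(y\otimes z:v\otimes w)\,\bigr].
\]
The outer conjunct $\PP_I(x:u)$ forces $u=x$, so the branching is controlled by $x$ via its copy $u[1]$, exactly as the $\PP_{CNOT}$ of Example~\ref{example-formula}(3) branches on $z[1]$ obtained by copying $y[1]$. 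When $x=0$, the left branch yields $v=y$ and $w=z$, so $w=(0\wedge y)\oplus z=(x\wedge y)\oplus z$. When $x=1$, unfolding $\PP_{CNOT}$ gives $v=y$ and $w=y\oplus z=(1\wedge y)\oplus z=(x\wedge y)\oplus z$. The converse direction is obtained by the same case analysis on the prescribed value of $x=u$.

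For part~(2), I would use a dual construction obtained by swapping the two branches and inserting a $\PP_{NOT}$ in the $x=1$ case:
\[
R_{OR}(x,y,z:u,v,w) \;\equiv\; \PP_I(x:u)\,\wedge\,(u[1])\bigl[\,\PP_{CNOT}(y\otimes z:v\otimes w)\,\mmid\,\PP_I(y:v)\wedge \PP_{NOT}(z:w)\,\bigr].
\]
When $x=0$, $\PP_{CNOT}$ gives $v=y$ and $w=y\oplus z=(0\vee y)\oplus z$; when $x=1$, we obtain $v=y$ and $w=\bar z=1\oplus z=(1\vee y)\oplus z$.

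The remaining step is to confirm well-formedness. In both formulas the antecedent $u[1]$ occurs in neither branch (so Definition~\ref{def-quantum-formula}(II)(3) is satisfied), no pure-state variable is the first argument of two predicates within the same branch (after unfolding $\PP_{CNOT}$), and occurrences of $y,z$ in different branches are permitted exactly as with $y[2]$ inside $\PP_{CNOT}$ itself. The variable-connection graph reduces to the three edges $x\to u$, $y\to v$, $z\to w$, which is trivially topologically ordered. The only mild technical point is parsing $\PP_{CNOT}(y\otimes z:v\otimes w)$ against Example~\ref{example-formula}(3): after substitution $y$ and $z$ play the roles of the first and second components of the first argument, and $v$ and $w$ of the second, with the size constraint $|y\otimes z|=|v\otimes w|=2$ automatically satisfied because all six variables have qubit size $1$. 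I do not foresee a deeper obstacle; the verification is a finite case check via Definition~\ref{evaluation}.
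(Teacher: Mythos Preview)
Your construction is correct, but it differs from the paper's. The paper first copies both controls via $\PP_I(x:u)\wedge\PP_I(y:v)$ and then performs a four-way multi quantum OR on the pair $(x,y)$ (invoking Proposition~\ref{multi-quantum-OR-prop}); in each branch it applies $\PP_{\oplus}(b,z:p,w)$ with the appropriate constant $b\in\{0,1\}$, carrying an auxiliary one-qubit variable $p$ bound by a consequential $\exists^Q$. Your version instead branches only on the single copy $u$ and delegates the inner case analysis on $y$ to $\PP_{CNOT}$ (or, for OR, to $\PP_{NOT}$), so you avoid both the auxiliary $p$ and the explicit four-way split. The paper's route makes the full truth table visible and leans on the general multi-qubit OR machinery; yours is leaner and mirrors the $\PP_{CNOT}$ template directly. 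The one point to tidy is the substitution $\PP_{CNOT}(y\otimes z:v\otimes w)$: since Example~\ref{example-formula}(3) is stated for single variables rather than tensor-product terms, it is cleanest to inline its unfolding as $\PP_I(y:v)\wedge (v[1])[\PP_I(z:w)\mmid \PP_{NOT}(z:w)]$, after which well-formedness is immediate by the same pattern you already cite.
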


\begin{proof}
By Proposition \ref{multi-quantum-OR-prop}, we can freely use the multi quantum OR.

(1) We define $R_{AND}(x,y,z: u,v,w)$ to be $\PP_{I}(x: u)\wedge \PP_{I}(y: v)\wedge (\exists^Qp,|p|=1) ((x,y)[Q_{00}\mmid Q_{01}\mmid Q_{10}\mmid Q_{11}])$, where $Q_{s}\equiv \PP_{\oplus}(0,z:p,w)$ for all $s\in\{00,01,10\}$ and $Q_{11}\equiv \PP_{\oplus}(1,z:p,w)$.

(2) We define $R_{OR}(x,y,z: u,v,w)$ to be $\PP_{I}(x:u)\wedge \PP_{I}(y:v)\wedge (\exists^Q p,|p|=1) ( (x,y)[Q'_{00}\mmid Q'_{01}\mmid Q'_{10}\mmid Q'_{11}])$, where $Q'_{00}\equiv \PP_{\oplus}(0,z:p,w)$ and $Q'_{s}\equiv \PP_{\oplus}(1,z:p,w)$ for any $s\in\{01,10,11\}$.
\end{proof}

\vs{-1}
\subsection{Quantum First-Order Logic or QFO}\label{sec:define-QFO}

The \emph{quantum first-order language} $\LL$ is the set of quantum formulas built up from classical predicate symbols ($=$, $\leq$, $<$), quantum predicate symbols ($\PP_{I}$, $\PP_{ROT_{\theta}}$, $\simeq_{\varepsilon}$), quantum function symbols ($\otimes$, $QBIT$, $X$), and constant symbols ($0$, $1$, $\ilog(n)$, $n$) using quantum logical connectives ($\wedge$, $\|$, $\neg^Q$), variables ($i,j,k,\ldots, x,y,z,\ldots$),  classical quantifiers ($\forall$, $\exists$), and quantum quantifiers ($\forall^Q$, $\exists^Q$). For a  pure-state quantum variable $y$ and a classical/quantum variable $s$, we often use the abbreviation $y[s]$ for $QBIT(s,y)$.

Up to this point, we have finished preparing all necessary building blocks to formally introduce quantum first-order logic (QFO). We then wish to introduce the complexity class of promise (decision) problems expressible by quantum sentences with the instance function symbol $X$.
Remember that the instance function symbol $X$ is not treated as a predicate symbol as in the classical setting. Rather, $X$ is defined to be a unique function symbol, which corresponds to an ``input'' given to an underlying model of quantum Turing machines (quantum circuits or quantum recursion schemes).
A quantum formula with no free variable is called a \emph{quantum sentence}; in contrast, a quantum formula is said to be \emph{query-free} if it does not include the instance function symbol $X$.
A quantum sentence $\phi$ expresses a property of strings and each interpretation $\xi$ provides the meaning of classical/quantum terms and determines the (semantical) validity of the quantum sentence.
For a quantum formula $\phi$ with quantum terms $t$, $Mes(\phi)$ indicates the sum of all $\varepsilon$'s in the collection $m(\phi)$ of all subformulas of the form $t\simeq_{\varepsilon}0$ and $t\simeq_{\varepsilon}1$.
For example, if $m(\phi)$ includes $t_1\simeq_{\varepsilon_1}0$, $t_2\simeq_{\varepsilon_2}1$, and $t_3\simeq_{\varepsilon_3}0$, then $Mes(\phi)$ equals $\varepsilon_1+\varepsilon_2+\varepsilon_3$.

Recall the notion of iqq-freeness from Section \ref{sec:syntax}.

\begin{definition}[QFO]\label{definition-QFO}
A promise problem $(L^{(+)},L^{(-)})$ over the binary alphabet $\{0,1\}$ is said to be \emph{syntactically expressible} by a quantum sentence $\phi$ if there exists a fixed constant $\varepsilon_0\in[0,1/2)$ such that, for any given string $x\in\{0,1\}^*$,
(1) $Mes(\phi)\leq \varepsilon_0$ holds, and
(2) assuming that $x$ is assigned to the instance function symbol $X$ and  the length of $x$ is assigned to the constant symbol $n$, if $x\in L^{(+)}$, then these values make $\phi$ true, and if $x\in L^{(-)}$, then they make $\neg^Q \phi$ true.
We write $\mathrm{QFO}$ to denote the class of all promise problems  that are expressible by quantum sentences of quantum first-order logic. A subclass of $\mathrm{QFO}$, $\mathrm{classicQFO}$, is composed of  all promise  problems expressible by iqq-free quantum sentences.
\end{definition}

In Sections \ref{sec:character-QFO}--\ref{sec:functional}, we intend to study the characteristic natures of $\mathrm{QFO}$ and $\mathrm{classicQFO}$.

\vs{-1}
\section{Complexity Classes Characterized by QFO}\label{sec:character-QFO}

In Section \ref{sec:define-QFO}, we have introduced the quantum first-order logic, QFO, and its restriction, classicQFO.
We first argue the usefulness of quantum first-order logic in describing  certain complexity classes of promise problems.
In particular, we seek out a few applications of them to capture two quantum complexity classes $\mathrm{BQL}$ and $\mathrm{BQLOGTIME}$, which will be explained in the next  subsection.

Quantum Turing machines (QTMs) are capable of computing quantum functions mapping finite-dimensional Hilbert spaces to themselves. The use of a finite number of quantum quantifiers adds up a certain kind of alternation feature of existentialism and universality to those quantum functions.

Throughout this section, we explain our major contributions to the world of predicate logic.

\vs{-1}
\subsection{Logtime/Logspace Quantum Turing Machines}\label{sec:logtime-QTM}

Let us seek for a few useful applications of $\mathrm{QFO}$ and $\mathrm{classicQFO}$ to the characterization of certain types of quantum computing.
To explain how to express quantum computing, we first  explain two complexity classes $\bql$ and $\bqlogtime$ of \emph{promise problems} over the binary alphabet\footnote{In quantum complexity theory, underlying alphabets of QTMs are not necessarily limited to $\{0,1\}$. Here, the restriction onto $\{0,1\}$ is only meant to be in accordance with the definition of QFO and also to simplify the subsequent argument.} $\{0,1\}$.

Similar to a logtime deterministic Turing machine (DTM) \cite{BIS90},
a model of (poly)logtime quantum Turing machine (QTM) was considered in \cite{Yam22b} and was characterized in terms of a finite set of elementary quantum (recursion) schemes, which are variations of the recursion-theoretic schematic definition of \cite{Yam20}.
Notice that we use QTMs as a major computational model instead of quantum circuits because there is no clear-cut definition of quantum circuit to capture $\bqlogtime$ and the model of quantum circuits further requires a so-called ``uniformity'' notion.
To handle such a restricted model, nonetheless, we first recall a QTM of \cite{Yam22a}, which is equipped with multiple rewritable work tapes and \emph{random-access mechanism} that operates a tape head  along a rewritable index tape to access a read-only input tape.
More precisely, input qubits are initially given to the input tape and all the other tapes are initially set to be blank (with a special blank symbol, say, $B$) except for the left endmarkers placed at the ``start'' cells (i.e., cell $0$).
To access the $i$th input qubit, e.g., the machine must write $s^{(\ilog(n))}_i b$ ($b\in\{0,1\}$) on the index tape and enters a designated inner state called a \emph{query state}. In a single step, the input-tape head jumps to the $i$th tape cell and the last qubit $b$ is changed to $b\oplus x_{(i)}$ for a given input $x$.
It is important to remember that, after the input-tape head accesses the target tape cell, the index tape is not automatically erased and its tape head does not return to the start cell since erasing the index tape and moving the tape head back to cell $0$ cost a logarithmic number of steps.


Formally, a logtime QTM is defined as an octuple  $(Q,\Sigma,{\{\triangleright,\triangleleft\}}, \Gamma,\Theta, \delta, q_0,Q_{halt})$ with a finite set $Q$ of inner states, an input alphabet $\Sigma$, a work alphabet $\Gamma$, an index alphabet $\Theta=\{0,1\}$, a quantum transition function $\delta$, the initial (inner) state $q_0$, and a set $Q_{halt}$ of halting  (inner) states, where $\triangleright$ and $\triangleleft$ are two designated endmarkers.
A \emph{surface configuration} of $M$ on input $x$ is a sextuple $(q,l_1,l_2,w,l_3,z)$ in $Q\times \nat \times \nat \times \Gamma^{*} \times \nat \times \Theta^{*}$.
At each step, a QTM $M$ applies $\delta$, which maps  $Q\times\check{\Sigma}_{\lambda} \times \check{\Gamma}\times \check{\Theta} \times Q\times \check{\Gamma}\times \check{\Theta} \times D\times D$ to $\complex$, where $\check{\Sigma}_{\lambda}  =\Sigma\cup\{{\triangleright,\triangleleft},\lambda\}$, $\check{\Gamma}=\Gamma\cup\{{\triangleright}\}$,  $\check{\Theta}=\Theta\cup\{{\triangleright}\}$, and $D=\{-1,+1\}$.
For a surface configuration $(q,l_1,l_2,w,l_3,z)$, if $M$ makes a transition of the form $\delta(q,x_{(l_1)}, w_{(l_2)}, z_{(l_3)}, p, \tau,\xi,d_1,d_2) =\alpha$, then the next surface configuration becomes $(p,l_1,l_2+d_1,w', l_3+d_2,z')$ with (quantum) amplitude $\alpha$, where $w'$ and $z'$ are obtained from $w$ and $z$ by replacing $w_{(l_2)}$ and $z_{(l_3)}$ with $\tau$ and $\xi$, respectively.
We designate the first cell of the work tape as an output cell, which contains an ``output'' of the QTM.
The QTM begins with the initial inner state $q_0$ and eventually halts when it enters a halting state.
The quantum transition function $\delta$ of $M$ naturally induces the so-called \emph{time-evolution operator} $U_{\delta}$, which transforms  a surface configuration to another one obtained in a single step of $M$.
We demand $U_{\delta}$ to behave as a unitary transform in the space spanned by basis surface configurations of $M$ on an input.
This requirement is known as \emph{well-formedness} of the QTM. Hereafter, we implicitly assume that QTMs are always well-formed.
By setting $\Sigma=\{0,1\}$, we naturally expand the scope of inputs of QTMs to arbitrary quantum states.

\begin{figure}[t]
\centering
\includegraphics*[height=4.2cm]{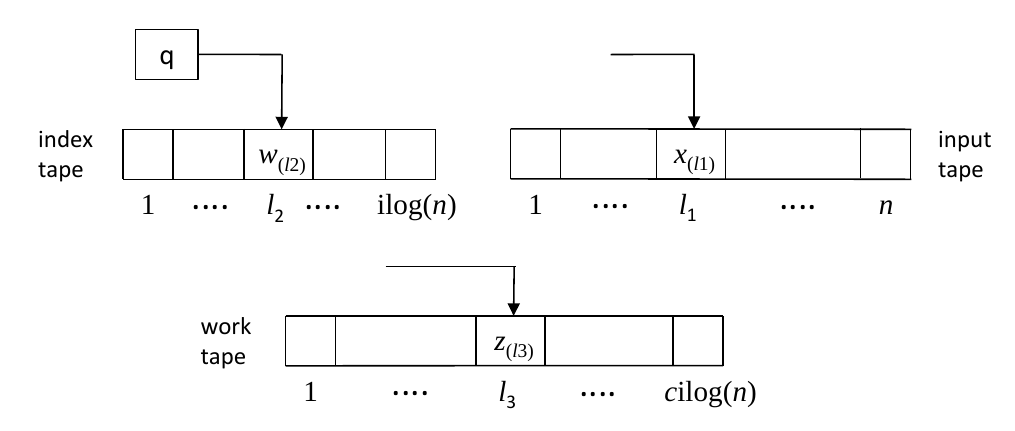}
\caption{A hardware of a QTM equipped with an input tape, an index tape, and a work tape.
}\label{fig:QTM-model}
\end{figure}

By conducting a measurement on the output cell, we can observe a classical bit, $0$ or $1$, with a certain probability. We say that, on input $\qubit{\phi}$, $M$ produces $b$ ($\in\{0,1\}$) with probability $\gamma$ if $b$ is observed with probability $\gamma$ by conducting a measurement on the output qubit.
A QTM $M$ is said to have  \emph{bounded errors} if there exists a constant $\varepsilon\in[0,1/2)$ such that, for any input $\qubit{\phi}$, when an output qubit, denoted $M(\qubit{\phi})$, is measured, we observe either $0$ with probability at least $1-\varepsilon$ or $1$ with probability at least $1-\varepsilon$.
For notational convenience, we write $\prob_{M}[M(\qubit{\phi})=b] = \gamma$ to express that $b$ is observed with probability $\gamma$ by $M$ on input $\qubit{\phi}$.

Given a promise problem $\LL=(L^{(+)},L^{(-)})$ over $\Phi_{\infty}$ and a real number $\eta\in[0,1]$, we say that a QTM $M$ \emph{solves $\LL$ with  probability at least $\eta$} if, for any input $\qubit{\phi}\in L^{(+)}$, $\prob_{M}[M(\qubit{\phi})=1]\geq \eta$ and, for any  $\qubit{\phi}\in L^{(-)}$, $\prob_{M}[M(\qubit{\phi})=0]\geq \eta$. When $\eta$ satisfies $\frac{1}{2}< \eta\leq 1$, $M$ is said to \emph{solve $\LL$ with bounded-error probability}.


Recall from Section \ref{sec:numbers} that numbers in $[0,n]_{\integer}$ are encoded into binary strings of length $\ilog(n)$ as:  $s^{(\ilog(n))}_0=0^{\ilog(n)}$, $s^{(\ilog(n))}_1 = 0^{\ilog(n)-1}1$, $s^{(\ilog(n))}_2 = 0^{\ilog(n)-2}10$, etc. We further encode tape symbols  $\{0,1,B\}$ of a QTM into 2-bit strings $\{\hat{0},\hat{1},\hat{B}\}$ by setting $\hat{0}=00$, $\hat{1}=01$, and $\hat{B}=11$.

We then define $\bqlogtime$ to be the collection of all promise problems solvable by QTMs
that run in $O(\log{n})$ steps (called \emph{logtime QTMs}, for short) with bounded-error probability (see \cite{Yam22b}).
We further expand $\bqlogtime$ to $\mathrm{HBQLOGTIME}$ by allowing a finite number of applications of ``limited'' quantum quantifiers
in the following recursive fashion.
A promise problem $(L^{(+)},L^{(-)})$ is in $\Sigma_k^Q\bqlogtime$ if there exist $k$ constants $c_1,c_2,\ldots,c_k$, another constant $\varepsilon\in[0,1/2)$, and a logtime QTM $M$ such that (i) for any string $x\in L^{(+)}$, $(\exists\qubit{\phi_1}\in \Phi_{2^{m_1}})(\forall \qubit{\phi_2}\in\Phi_{2^{m_2}})\cdots (Q_k\qubit{\phi_k}\in \Phi_{2^{m_k}}) \prob_{M}[M(\qubit{x},\Psi) =1]\geq 1-\varepsilon$ and (ii) for any string $x\in L^{(-)}$, $(\forall\qubit{\phi_1}\in \Phi_{2^{m_1}})(\exists \qubit{\phi_2}\in\Phi_{2^{m_2}})\cdots (\overline{Q}_k\qubit{\phi_k}\in \Phi_{2^{m_k}}) \prob_{M}[M(\qubit{x},\Psi) =0]\geq 1-\varepsilon$, where $\Psi = (\qubit{\phi_1},\qubit{\phi_2},\ldots, \qubit{\phi_k})$, $m_i= c_i\ilog(|x|)$ for all $i\in[k]$, $Q_k$ is $\exists$ (resp., $\forall$) if $k$ is odd (resp., even), and $\bar{Q}_k$ is $\exists$ (resp., $\forall$) if $Q_k=\forall$ (resp., $\exists$).
The desired class $\mathrm{HBQLOGTIME}$ is the union $\bigcup_{k\in\nat^{+}} \Sigma_k^Q\bqlogtime$, where the prefix ``H'' in $\mathrm{HBQLOGTIME}$ stands for ``hierarchy''.

In sharp contrast, we also consider \emph{logspace QTMs}, for which their runtime is not limited to log time.
The notation $\bql$ stands for the set of all promise problems solvable with bounded-error probability (say, $\geq 2/3$) by logspace QTMs with no runtime bound.

Unlike logtime QTMs, a logspace QTM cannot keep track of all moves on an $O(\log{n})$ space-bounded work tape. Consequently, such a QTM may not  postpone any measurement conducted in the middle of the computation until its final step.
In contrast, an extra space-unbounded ``garbage tape'' is used in \cite{Yam22a} to record these moves so that we require a measurement only at the end of each computation.

When logspace QTMs are required to run in \emph{expected polynomial time}, we use the notation ``$\ptime\bql$'' in place of $\bql$.
See, e.g.,  \cite{Yam22a} for more detail of this machine model.
In a similar fashion to build $\mathrm{HBQLOGTIME}$ from $\mathrm{BQLOGTIME}$, it is possible to expand $\mathrm{BQL}$ to $\mathrm{HBQL}$ by applying finite series of quantum quantifiers.

\begin{lemma}\label{class-relation}
$\mathrm{DLOGTIME} \subseteq \mathrm{BQLOGTIME} \subseteq \mathrm{PLOGTIME}_{\real}$, $\mathrm{PLOGTIME}_{\{0,1/2,1\}} \subseteq \mathrm{L}\subseteq \ptime\bql \subseteq \bql$, and $\mathrm{BQLOGTIME}\subseteq \ptime\bql$.
\end{lemma}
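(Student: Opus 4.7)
The plan is to verify each of the six inclusions in turn, organising them by difficulty; most follow from standard machine-model simulations, and only one requires genuine technical content.

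I would first dispose of the structural and classical-into-quantum inclusions. The inclusion $\ptime\bql \subseteq \bql$ is immediate because dropping the expected-polynomial-time requirement only enlarges the class, and $\mathrm{BQLOGTIME} \subseteq \ptime\bql$ holds because an $O(\log n)$-time QTM visits at most $O(\log n)$ cells on each tape (hence is logspace) and trivially runs in polynomial time, with identical bounded-error criteria on both sides. For $\mathrm{DLOGTIME} \subseteq \mathrm{BQLOGTIME}$ and $\mathrm{L} \subseteq \ptime\bql$, I would apply Bennett-style reversible simulation: any deterministic TM running in time $T$ and space $S$ admits a reversible simulator in comparable resources, and a reversible TM is a legitimate zero-error QTM (its transition amplitudes lie in $\{0,1\}$, so well-formedness is automatic). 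For the logtime case, the constant-factor overhead of reversibilisation keeps us inside $O(\log n)$; for the logspace case, the simulator remains logspace and polynomial-time, hence lies in $\ptime\bql$.

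Next, for $\mathrm{PLOGTIME}_{\{0,1/2,1\}} \subseteq \mathrm{L}$, I would use the classical enumeration argument. A probabilistic $O(\log n)$-time machine with fair-coin transitions has at most $2^{O(\log n)} = \mathrm{poly}(n)$ computation paths, each of length $O(\log n)$; a deterministic logspace machine iterates through all coin-flip sequences via an $O(\log n)$-bit counter, re-simulates the probabilistic machine along each such sequence (also in logspace), and maintains a running tally of accepting paths, finally comparing that tally against the majority threshold $2^{O(\log n)-1}$.

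The only technically substantive step is $\mathrm{BQLOGTIME} \subseteq \mathrm{PLOGTIME}_{\real}$. Here I would adopt the standard two-path amplitude-sampling simulation of a quantum machine by an unbounded-error probabilistic one, which is permissible because arbitrary real transition probabilities are allowed on the PLOGTIME side. After reducing to real amplitudes by splitting real and imaginary parts of transition amplitudes, the simulator samples two independent computation paths $p_1, p_2$ of the QTM of length $O(\log n)$ each, tracks the sign-product of the path amplitudes $\alpha_{p_1}\alpha_{p_2}$ along the way, and checks whether both paths end in the same accepting (respectively, rejecting) configuration; a final real-valued coin flip sets the overall acceptance probability to an affine shift of $\|P_{\mathrm{accept}} U|x\rangle\|^2 - \|P_{\mathrm{reject}} U|x\rangle\|^2$ centred at $1/2$. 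Bounded error of the QTM then guarantees this quantity differs from $1/2$ by $\Omega(1/\mathrm{poly}(n))$ in the correct direction, meeting the unbounded-error acceptance rule of $\mathrm{PLOGTIME}_{\real}$. The main obstacle I anticipate is fitting the two-path sampling plus sign bookkeeping into $O(\log n)$ probabilistic steps; the availability of arbitrary real-valued coin-flip probabilities lets one absorb the residual arithmetic (including the final affine shift) into a single coin flip at the end of the simulation.
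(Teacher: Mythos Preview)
Your proposal is correct and follows essentially the same approach as the paper: the same trivial inclusions, the same path-enumeration argument for $\mathrm{PLOGTIME}_{\{0,1/2,1\}} \subseteq \mathrm{L}$, reversibilisation for the deterministic-into-quantum inclusions, and the ADH97-style two-path sign-tracking simulation for $\mathrm{BQLOGTIME} \subseteq \mathrm{PLOGTIME}_{\real}$. The only cosmetic difference is that for $\mathrm{L} \subseteq \ptime\bql$ the paper cites Lange--McKenzie--Tapp rather than Bennett; strictly speaking neither reference alone delivers \emph{both} logspace and polynomial time simultaneously, but in the paper's logspace-QTM model (which carries an uncounted garbage tape for history) the naive history-recording reversibilisation you describe goes through directly.
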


\begin{proof}
The inclusion $\ptime\bql\subseteq \bql$ is obvious by the definition.
Moreover, the inclusion
$\mathrm{BQLOGTIME}\subseteq \ptime\bql$ is obvious.
It is shown  in \cite{LMT00} that $O(\log{n})$-space deterministic Turing machine can be simulated by an appropriate $O(\log{n})$-space
reversible TM.
This concludes that  $\dl\subseteq \ptime\bql$.

To show that any logtime QTM $M$ can be simulated by an appropriate QTM $N$, we append an extra work tape so that, as $N$ simulates $M$, $N$ records a  surface configuration on this extra work tape.
This implies that $\mathrm{DLOGTIME} \subseteq \mathrm{BQLOGTIME}$ follows.

For the inclusion $\mathrm{PLOGTIME}_{\{0,1/2,1\}}\subseteq \dl$, we begin with taking any logtime (unbounded-error) probabilistic Turing machine (PTM) $M$. Since its computation paths on each input $x$ have length $O(\log{n})$, we can simulate each computation path of $M$ one by one using $O(\log{n})$ space-bounded work tape. To determine the acceptance or rejection of $M$, it suffices to count the total numbers of accepting computation paths and rejecting computation paths and then compare them. This entire procedure takes polynomial time using only $O(\log{n})$ space. Therefore, $L(M)$ is in $\dl$.

The remaining inclusion to prove is that $\mathrm{BQLOGTIME}\subseteq \mathrm{PLOGTIME}_{\real}$. This can be done by simple simulation techniques of \cite{ADH97,TYL10}. Here, we just sketch this simulation. Given a promise problem $\LL$ in $\mathrm{BQLOGTIME}$, we take a logtime QTM $M$ solving $\LL$ with bounded-error probability. The desired PTM $N$ works as follows. On input $x$, $N$ probabilistically pick two next moves $c_1$ and $c_2$ (including the case of $c_1=c_2$) with probability $|amp(c_1) amp(c_2)|$, where $amp(c)$ is $M$'s transition amplitude of selecting $c$.

If the simulated computation pairs are the same, then $N$ accepts the input $x$. Assume otherwise.
We calculate the product $\prod_{i=1}^{m}sgn(amp(c_i))$ of all signs of amplitudes of chosen moves $(c_1,c_2,\ldots,c_k)$ taken so far. We wish to simulate an arbitrary pair of computation paths. We remember such values of two computation paths. If two simulated paths are different, then we denote as $(s_1,s_2)$ the value pair obtained at the time of $M$'s halting. If $s_1s_2=+1$, then $N$ accepts $x$, otherwise, $N$ rejects it.

To make $N$ a proper probabilistic TM, we produce additional dummy moves with appropriate probabilities. For any computation path caused by such a move,  when $N$ terminates, we force it to accept and reject with equal probabilities.
\end{proof}

We further introduce an extension of $\bqlogtime$, denoted $\mathrm{BQLOG}^2\mathrm{TIME}$, by allowing $O(\log^2{n})$ runtime of underlying QTMs. Obviously, $\mathrm{BQLOGTIME}$ includes $\mathrm{BQLOG}^{2}\mathrm{TIME}$, but we do not know any relationships between $\mathrm{BQLOG}^2\mathrm{TIME}$ and $\mathrm{BQL}$.

\vs{-2}
\subsection{Complexity of QFO and classicQFO}

We discuss the descriptive complexity of $\mathrm{QFO}$ and $\mathrm{classicQFO}$ by finding the inclusion relationships of them to well-known time-bounded complexity classes.
Now, we begin the relationships between $\mathrm{classicQFO}$ and the logarithmic time-bounded complexity class.

\begin{proposition}\label{classicQFO-logtime}
$\mathrm{classicQFO} \subseteq \bqlogtime$.
\end{proposition}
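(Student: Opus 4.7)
Given an iqq-free quantum sentence $\phi$ that syntactically expresses $(L^{(+)},L^{(-)})$, the plan is to compile $\phi$ directly into a bounded-error logtime QTM $M$ that evaluates $\phi$ on an input $x$ assigned to $X$. By Lemma \ref{negation-free}, I may assume $\phi$ is negation-free. The crucial structural property driving the construction is iqq-freeness itself: every bound quantum variable $y$ first appears in the \emph{second} argument of some quantum predicate, so its value is uniquely determined by applying a fixed unitary to predecessor variables. Hence no quantum state has to be guessed, and the evaluation of $\phi$ becomes a deterministically specified quantum circuit on input-carrying qubits, modulo the final measurements supplied by the $\simeq_{\varepsilon}$ subformulas.

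I proceed by induction on the structure of $\phi$. The QTM $M$ reserves a work register of $\ilog(n)$ qubits for each bound pure-state quantum variable, plus $O(\log n)$ bits of classical scratch. Atomic predicates $\PP_{I}(s:t)$ and $\PP_{ROT_{\theta}}(s:t)$ are implemented by first performing the $SWAP$ dictated by the semantics of $QBIT$ and then applying the corresponding one-qubit gate; by the Solovay--Kitaev theorem, each $\PP_{ROT_\theta}$ can be approximated from the universal set $\{S,T,CNOT,WH\}$ to accuracy absorbed into the error budget $\varepsilon_0$ guaranteed by syntactic expressibility. Each access to $X(\cdot)$ costs $O(\log n)$ steps via the index tape. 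Classical atomic formulas $s=t$, $s\le t$, $s<t$ are decided on $O(\log n)$-bit integers in $O(\log n)$ time. The connective $\wedge$ becomes sequential composition of subroutines; the quantum OR $(y[s])[P_1\mmid P_2]$ is implemented by treating $y[s]$ as a quantum control for the two subroutines, in the spirit of the branching scheme of \cite{Yam20}. The well-formedness conditions (the variable connection graph is topologically ordered and no component appears twice in a first-argument position) ensure that no register is ever duplicated, so no-cloning is respected automatically.

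It remains to handle the quantifiers and to account for the runtime. Consequential quantum existential quantifiers contribute nothing operationally beyond the unitary that already defined their bound variable, so they cost nothing extra. Classical quantifiers $(\forall i\le s)R$ and $(\exists i\le s)R$ range over at most $\ilog(n)+1$ values; since $\phi$ is a fixed formula with a constant quantifier-nesting depth $d$, I would compile the entire classical-quantifier prefix into a single uniform superposition on a register of $O(\iloglog(n))$ qubits indexing the joint tuple $(i_1,\ldots,i_d)$, then invoke the inner iqq-free routine once \emph{in superposition}, and combine the branch outcomes by controlled-acceptance circuitry that realizes the $\forall/\exists$ alternation via phase kickback. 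The measurement predicates $t\simeq_{\varepsilon}b$ translate to standard-basis measurements on the designated qubits, and $Mes(\phi)\le\varepsilon_0<1/2$ guarantees that the accumulated error stays bounded away from $1/2$, so $M$ decides $(L^{(+)},L^{(-)})$ in $O(\log n)$ time with bounded error, i.e., in $\bqlogtime$. The main technical obstacle is precisely the runtime accounting under nested classical quantifiers, since the naive sequential expansion yields $O(\log^{d}n)$ time; collapsing this to $O(\log n)$ requires the quantum-parallel evaluation above, whose correctness leans essentially on iqq-freeness: because no inner bound quantum variable is independently guessed, the superposition over classical indices suffers no interference from unintended branching.
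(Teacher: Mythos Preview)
Your overall compilation strategy matches the paper's: one $O(\log n)$-qubit register per quantum variable, $\PP_{ROT_\theta}$ realized by a single quantum transition (the QTM model here admits arbitrary amplitudes, so the Solovay--Kitaev detour is unnecessary), $\wedge$ as sequential composition along the topological order of the variable connection graph, the quantum OR as a branching routine, $X(\cdot)$ via the index tape, and consequential $\exists^Q$ as operationally free. The genuine gap is your treatment of the classical quantifiers. You correctly flag that naive unrolling of a depth-$d$ prefix costs $O(\log^{d}n)$ steps, but your fix---put $(i_1,\ldots,i_d)$ in uniform superposition, call the inner routine once, and recover the $\forall/\exists$ pattern by ``phase kickback''---does not work here. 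A subformula such as $(\forall i\le|y|)\,\PP_{ROT_\theta}(y[i]:z[i])$ is not a Boolean test on $i$; it specifies that the gate be applied at \emph{every} qubit position, and running the body once with $i$ in superposition applies the gate at a \emph{superposition} of positions, which is a different unitary altogether and fails to produce the required consequential state $z$. Even in the genuinely Boolean cases (e.g.\ $(\exists j)[w[j]\simeq_{\varepsilon}1]$), a single coherent call plus phase kickback cannot compute OR, let alone an alternating $\forall/\exists$ prefix, over $\ilog(n)$ values.

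The paper closes the gap structurally rather than algorithmically: it invokes well-formedness (in particular conditions~(3)--(4), which forbid the same component $y[i]$ from reappearing in a first-argument place across distinct instantiations) to argue that in a nested prefix $(\forall i)(\forall j)R(i,j)$ the classical variables are \emph{independent}, so the double loop factors into separate single loops over disjoint qubit positions and the total number of elementary steps stays $O(\log n)$. It is this combinatorial restriction on well-formed formulas, not quantum parallelism, that keeps the simulation within logarithmic time.
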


\begin{proof}
Let us recall that quantum sentences used to define $\mathrm{classicQFO}$ are all iqq-free.  For any promise problem $\LL$ in $\mathrm{classicQFO}$ expressed by a certain iqq-free quantum sentence $\phi(X)$,
where $X$ is the instance function symbol,
its membership to $\bqlogtime$ can be verified by demonstrating how to simulate the evaluation process (explained in Section \ref{sec:semantics}) of quantum logical connectives and classical/quantum quantifiers on appropriate logtime QTMs.


Now, take any promise problem $\LL=(L^{(+)},L^{(-)})$ in $\mathrm{classicQFO}$ and let $\phi(X)$ denotes an iqq-free quantum sentence that semantically expresses $\LL$. To derive the desired membership of   $\LL$ to $\mathrm{BQLOGTIME}$, we examine the evaluation process of $\phi(X)$.

We first prepare one work tape for each pure-state quantum variable. This is possible because there are only a constant number of such variables in $\phi$. A key of our simulation strategy is that a quantum variable is translated into a series of qubits on a work tape of the QTM and that each classical term $i$ is translated into the tape head position on a work tape.

Our starting point is the case of quantifier-free quantum formulas. By Lemma \ref{negation-free}, we can assume that $\phi$ is negation free. Let us take a quantifier-free, negation-free quantum formula $\phi(X,i_1,i_2,\ldots,i_m, y_1,y_2,\ldots, y_k)$ with free variables $i_1,i_2,\ldots,i_m, y_1,y_2,\ldots,y_k$, where $A=(i_1,i_2,\ldots,i_m)$ is a series of classical variables and $Y=(y_1,y_2,\ldots,y_k)$ is a series of pure-state quantum variables.
We fix an interpretation of these variables as follows.
Let $\qubit{\xi}\in\Phi_{2^n}$, $\alpha =(\hat{i}_1,\hat{i}_2,\ldots,\hat{i}_m)$ with $\hat{i}_j\in[0,\ilog(n)]_{\integer}$, and $\Psi =(\qubit{\psi_1},\qubit{\psi_2}, \ldots,\qubit{\psi_k})$ with $\qubit{\psi_j}\in\Phi_{2^{\ilog(n)}}$, which respectively correspond to the symbols $X, i_1,\ldots,i_m, y_1,\ldots,y_k$.
Each element of $\Psi$ is expressed as a series of qubits on a work tape of $M$. Each element of $\alpha$ is expressed by the tape head position on a work tape.

\begin{claim}\label{quantifier-free}
There exists a constant-time QTM $M$ such that, for any $\qubit{\chi}$ and any $\Psi$,
(i) $\MM_{\chi,\Psi}\models_{\varepsilon} \phi(X,A,W)$ implies $\prob_{M}[M(\qubit{\chi}, \Psi)=1]\geq1-\varepsilon$ and (ii) $\MM_{\chi, \alpha, \Psi}\models_{\varepsilon} (\neg^Q)\phi(X,A,W)$ implies $\prob_{M}[M(\qubit{\chi}, \alpha, \Psi)=0] \geq 1-\varepsilon$, where $\MM_{\chi,\Psi}$ is a structure representing the behavior of $M$ on $\qubit{\chi}$ and $\Psi$.
\end{claim}

\begin{proof}
Associated with $\phi$, let us consider its variable connection graph $G=(V,E)$, which is topologically ordered.
We prove the claim by induction on the evaluation process (explained in Section \ref{sec:semantics}) of $\phi$, which
loosely corresponds to a constant-time QTM.

A classical term $suc(i)$ is simulated by the tape head, which moves to cell $\hat{i}+1$ from cell $\hat{i}$.
For a classical formula of the form $s=t$, we prepare two work tapes and generate $\hat{s}$ by moving a tape head to cell $\hat{s}$ and moving another tape head to cell $\hat{t}$ and then compare $\hat{s}$ and $\hat{t}$ by simultaneously moving two tape heads back toward cell $0$.

Since the quantum transform $ROT_{\theta}$ can be realized by a single move of an appropriate QTM, all quantum predicate symbols $\PP_F$ can be easily simulated by an appropriate constant-time QTM.

For quantum logical connectives, let us first consider the case of the quantum AND ($\wedge$) of the form $\phi\equiv R_1\wedge R_2$. Take two constant-time QTMs $M_1$ and $M_2$ for $R_1$ and $R_2$, respectively. We first combine two inputs given to $M_1$ and $M_2$. We then determine the order between $R_1$ and $R_2$ by considering the variable connection graph. Assume that $R_1$ comes before $R_2$. We run $M_1$ on this combine input and then run $M_2$ on the resulting quantum state. This is possible because $R_1$ and $R_2$ do not share the same variables in their first argument places.

Next, let us consider the case of the quantum OR ($\|$) of the form $\phi\equiv (z[s])[R_1\mmid R_2]$, which can be treated as the branching scheme of \cite{Yam20} for quantum functions. More precisely, assume by induction hypothesis that $R_1$ and $R_2$ are respectively simulatable on two constant-time QTMs $M_1$ and $M_2$. We define a new QTM $M$ to work as follows. When $s$ is a classical term interpreted to $\hat{s}$, we locate the $\hat{s}$-th qubit of a quantum state representing $z$. If $z[s]$ is $0$, then we run $M_1$; otherwise, we run $M_2$. On the contrary, when $s$ is a quantum term, the corresponding tape head is in a superposition. Hence, we apply the same action as explained above using each tape head position.

The quantum predicate symbol $\simeq_{\varepsilon}$ can be simulated by a (quantum) measurement performed at the end of computation.

Finally, let us consider an access to $X(z)$. Note that $|z|=\ilog(n)$. This can be conducted by an appropriate constant-time QTM as follows. Assume that the interpreted value of $z$ is generated on an index tape of the QTM. We then force the machine to enter a query state with $z$ so that its input-tape head jumps onto the target qubits of a given input assigned to $X$.
\end{proof}

Next, we focus on the case of classical quantifications.
When $\phi$ has the form $(\forall i\leq \ilog(n))R(i)$, we take a logtime  QTM $M$ simulating $R(i)$ for each natural number $\hat{i}$ assigned to the classical variable $i$.
The case of $\phi\equiv (\exists i\leq \ilog(n))R(i)$ is similarly handled.
The well-formedness ensures that, for instance, two quantifications of the form $(\forall i\leq \ilog(n))(\forall j\leq\ilog(n))R(i,j)$ or of the form $(\forall i\leq \ilog(n))(\exists j\leq\ilog(n))R(i,j)$, two classical variables $i$ and $j$ are ``independent'' in the sense that we inductively cycle through all numbers $\hat{i}$ up to $\ilog(n)$ by moving a tape head to the right and check the correctness of $R(i,j)$ by running $M$. This simulation can be done in $O(\log{n})$ steps.

For the case of a consequential existential quantifier over a quantum variable $y$, since $y$ is predecessor-dependent, the value of $y$ is uniquely determined from its predecessor by a certain quantum predicate symbol. This corresponds to the quantum state obtained after applying a unitary transform specified by the quantum predicate symbol.
\end{proof}


It is important to note that quantum quantifiers over quantum variables are interpreted as quantum quantifiers over qustrings of length  $\ilog(n)$.
Since $\mathrm{QFO}$ is obtained by applying the quantifications $\exists^Q$ and $\forall^Q$ to underlying quantum formulas used in $\mathrm{classicQFO}$, we can draw the following conclusion from Proposition  \ref{classicQFO-logtime}.

\begin{theorem}\label{QFO-HQLOGTIME}
$\mathrm{QFO}\subseteq \mathrm{HBQLOGTIME}$.
\end{theorem}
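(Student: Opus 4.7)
The plan is to extend the proof of Proposition \ref{classicQFO-logtime} by isolating the introductory quantum quantifiers (i.e., $\forall^Q$ and introductory $\exists^Q$) and pushing them to the front of the sentence, so that the remaining ``inner'' subformula falls into $\mathrm{classicQFO}$ and can be handled by the BQLOGTIME machine constructed there. Concretely, given a promise problem $\LL$ expressed by a quantum sentence $\phi(X)\in\mathrm{QFO}$, I would first normalize $\phi$ using Lemma \ref{negation-free} so that $\neg^Q$ is pushed down to atomic formulas. By the asymmetric treatment of $\neg^Q$ in Definition \ref{evaluation}(5), the quantum negation commutes past consequential quantifiers unchanged but swaps an introductory $\exists^Q$ with $\forall^Q$; after this step the introductory quantum quantifiers have a fixed polarity.

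Next, I would perform a prenex-style rearrangement that lifts all introductory quantum quantifiers to the outside. Since each introductory quantum variable $y$ is predecessor-independent (it does not appear in a second-argument place), its scope can be widened across $\wedge$, $\mmid$, and classical quantifiers without creating sharing conflicts in the variable connection graph, provided we rename bound variables freshly. Alternating runs of $\exists^Q$ and $\forall^Q$ can be coalesced into a single alternation pattern $(\exists^Q\,\bar{y}_1)(\forall^Q\,\bar{y}_2)\cdots(Q_k^Q\,\bar{y}_k)\,\psi$, where each $\bar{y}_j$ is a tuple of introductory quantum variables (and, since each variable stands for a qustring of length $\ilog(n)$, the total length of $\bar{y}_j$ is $c_j\ilog(n)$ for some constant $c_j$). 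The resulting inner formula $\psi$ has all introductory quantum quantifiers removed, so every remaining quantum quantifier inside $\psi$ is consequential and $\psi$ is iqq-free; it is therefore a $\mathrm{classicQFO}$-formula in the extended vocabulary that treats $\bar{y}_1,\ldots,\bar{y}_k$ as additional instance function symbols (analogous to $X$).

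Now I would invoke (a mild strengthening of) Proposition \ref{classicQFO-logtime}: the proof of Claim \ref{quantifier-free} shows how each free pure-state quantum variable can be represented as a block of qubits on an additional input-style tape accessed by the random-access mechanism of the QTM. Hence there is a bounded-error logtime QTM $M$ that, on input $(\qubit{x},\qubit{\phi_1},\ldots,\qubit{\phi_k})$ with $\qubit{\phi_j}\in\Phi_{2^{m_j}}$ and $m_j=c_j\ilog(|x|)$, accepts with probability $\geq 1-\varepsilon$ exactly when the corresponding structure satisfies $\psi$ and rejects with probability $\geq 1-\varepsilon$ when it satisfies $\neg^Q\psi$. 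Combining $M$ with the outer alternation of quantum quantifiers yields, by the definition of $\Sigma_k^Q\bqlogtime$ in Section \ref{sec:logtime-QTM}, that $\LL\in\Sigma_k^Q\bqlogtime\subseteq\mathrm{HBQLOGTIME}$.

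The main obstacle I anticipate is the prenex rearrangement step: one must verify that widening the scope of an introductory quantum quantifier preserves both well-formedness (topological ordering of the variable connection graph, uniqueness of second-argument determination) and semantic value under the evaluation of Definition \ref{evaluation}. The subtle point is that when lifting a quantifier past $(y[s])[R_1\mmid R_2]$, the quantified variable $\bar{y}_j$ must not occur as an antecedent component of the branching scheme; but because $\bar{y}_j$ is predecessor-independent and the branching variable $y[s]$ is itself generated inside the scope by some quantum predicate, a routine inspection of the well-formedness conditions (3) and (4) shows that scope-widening with fresh renaming is safe. Everything else--commuting $\neg^Q$ past quantifiers, merging consecutive same-type quantum quantifiers into tuples, and simulating consequential quantifiers inside the BQLOGTIME machine--reduces to bookkeeping that already appeared in the proof of Proposition \ref{classicQFO-logtime}.
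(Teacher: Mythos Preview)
Your proposal is correct and follows essentially the same approach as the paper: separate the introductory quantum quantifiers from the consequential ones, push the introductory quantifiers into an outer prenex block $(\exists^Q y_1)(\forall^Q y_2)\cdots(Q_k y_k)$, observe that the remaining inner formula $\psi$ is iqq-free, and then invoke Proposition~\ref{classicQFO-logtime} to simulate $\psi$ on a logtime QTM so that the whole sentence lands in $\Sigma_k^Q\bqlogtime\subseteq\mathrm{HBQLOGTIME}$. The paper's proof simply asserts that such a reformulation ``is possible'' without further argument, so your explicit discussion of the prenex rearrangement (commuting introductory quantifiers past $\wedge$, $\mmid$, and classical quantifiers while preserving well-formedness) actually supplies justification that the paper omits.
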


\begin{proof}
Let $\LL=(L^{(+)},L^{(-)})$ denote any promise problem in $\mathrm{QFO}$ and let $\phi$ be a quantum sentence that syntactically expresses $\LL$.

We partition all variables into two groups. The first group consists of predecessor-dependent quantum variables and the second group consists of predecessor-independent quantum variables.

It is possible to reformulate $\phi$ so that it has the form $(\exists^Qy_1,|y_1|=s_1)(\forall^Qy_2,|y_2|=s_2) \cdots (Q_ky_k,|y_k|=s_k) \psi(X,y_1,y_2,\ldots,y_k)$, where $\psi$ is obtained from a quantifier-free quantum formula whose predecessor-dependent quantum variables are quantified by $\exists^Q$, $y_1,y_2,\ldots,y_k$ are predecessor independent,  $s_1,s_2,\ldots,s_k$ are classical terms, and $Q_k=\forall^Q$ if $k$ is even and $\exists^Q$ otherwise. By Proposition \ref{classicQFO-logtime}, $\psi$ is simulated on a logtime QTM. Therefore, $\phi$ is simulatable by additional quantifications, and thus it belongs to $\mathrm{HBQLOGTIME}$.
\end{proof}


Next, we wish to compare between the expressibilities of $\mathrm{FO}$ and $\mathrm{QFO}$. As one expects, quantum first-order logic is powerful enough to express all languages in $\mathrm{FO}$.

\begin{theorem}\label{FO-vs-QFO}
$\mathrm{FO}\subseteq \mathrm{QFO}$.
\end{theorem}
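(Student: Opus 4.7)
First place the given FO sentence $\phi$ into prenex normal form $Q_1 x_1 Q_2 x_2 \cdots Q_k x_k\, \psi(x_1,\ldots,x_k)$, push classical negations down to atoms by de Morgan, and rename so that each $x_i$ ranges over $[0,n-1]$. For every FO variable $x_i$ introduce a pure-state quantum variable $y_i$ of qubit size $\ilog(n)$, interpreted as the binary encoding of $x_i$, and replace $\exists x_i$ by $\exists^Q y_i,\,|y_i|=\ilog(n)$ and $\forall x_i$ by $\forall^Q y_i,\,|y_i|=\ilog(n)$. Since $\ilog(n)$-bit strings encode values in $[0,\hat{n}-1]\supseteq[0,n-1]$, a standard reversible range-guard $y_i<n$ (using the constant symbol $n$) is conjoined with the body in the $\exists$ case and used as a precondition in the $\forall$ case, trimming the quantum universe back to the FO universe.

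Next translate the quantifier-free body $\psi$ into a classical-reversible quantum formula $\psi^*$ that writes the Boolean value of $\psi$ into a single designated qubit $z$ and ends with the measurement predicate $z\simeq_0 1$. Each atomic formula is handled by a small gadget: $X(x_i)$ becomes an $X$-access on a $\PP_{CNOT}$-generated copy $y'_i$ of $y_i$, whose bit is CNOTed into an ancilla; $BIT(x_i,x_j)$ becomes $QBIT(y'_i,y'_j)$ transferred to an ancilla; $x_i=x_j$ and $x_i<x_j$ are realised by standard reversible bitwise-XOR equality and ripple-comparison circuits assembled from $\PP_I$, $\PP_{NOT}$, $\PP_{CNOT}$, and $\PP_\oplus$. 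The resulting one-qubit ancillas (one per atom) are combined using the previously established $R_{AND}$ and $R_{OR}$ gadgets, with $\PP_{NOT}$ supplying classical negation, until a single final qubit $z$ holds the truth value of $\psi$. A fresh $\PP_{CNOT}$-copy is taken every time $y_i$ is re-read in a new atom, keeping every pure-state variable (or its component) in the first-argument position of at most one predicate, so the variable-connection graph is a topologically ordered DAG and the construction is well formed.

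Correctness reduces to a single observation. Because $\psi^*$ is a composition of classical-reversible quantum gates followed by exactly one measurement of $z$, on any classical basis-state input $y_i=\ket{u_i}$ the output register deterministically equals $\ket{\psi(u_1,\ldots,u_k)}$, so $z\simeq_0 1$ succeeds iff the classical $\psi$ is true; on a superposition $y_i=\sum_u\alpha_u\ket{u}$ the final joint state has the form $\sum_u\alpha_u\ket{\cdots}\ket{\psi(u)}$, so $z\simeq_0 1$ succeeds iff $\psi(u)=1$ for every $u$ in the joint support. Therefore $\exists^Q y_i\,\psi^*$ holds precisely when some classical basis state $\ket{u}$ satisfies $\psi$, and $\forall^Q y_i\,\psi^*$ holds precisely when every classical $u$ does; iterating over $i=1,\ldots,k$ this matches the FO semantics of the alternating classical quantifier prefix. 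The negation clause in the definition of syntactic expressibility follows because $\neg^Q$ commutes over the $\wedge$'s and over the introductory quantum quantifiers and flips $z\simeq_0 1$ to $z\simeq_0 0$, which, by the same classicality argument, mirrors the classical negation of $\phi$.

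The main obstacle is the bookkeeping needed to keep $\phi^*$ well formed: FO freely re-uses variables, whereas QFO forbids two predicates from sharing a first-argument occurrence of a pure-state variable, bans overlapping classical-quantifier scopes that touch the same qubit position twice, and requires a topologically ordered variable-connection graph. This forces explicit $\PP_{CNOT}$-copy steps for every re-use of $y_i$, a fresh naming scheme for auxiliary registers ordered consistently with the intended execution, and local/single-use classical quantifiers inside the equality and comparison subcircuits. Once this discipline is in place the construction yields a well-formed quantum sentence $\phi^*$ using only introductory quantum quantifiers (over the predecessor-independent $y_i$'s) and consequential ones (over the computation-output ancillas), so $\phi^*\in\mathrm{QFO}$, proving $\mathrm{FO}\subseteq\mathrm{QFO}$.
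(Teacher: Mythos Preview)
Your overall translation strategy (prenex, reversible-classical body, single measurement, introductory quantum quantifiers for the FO variables) is close to the paper's, but the correctness argument for the $\neg^Q$ clause breaks once there are two or more alternating quantifiers.

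Consider $\phi\equiv\exists x_1\forall x_2\,(x_1\neq x_2)$, which is false on every structure with $n\ge 2$, so the language it defines is $L=\emptyset$ and the QFO definition requires $\neg^Q\phi^*$ to hold on every input. In your translation $\phi^*$ has the form $\exists^Q y_1\forall^Q y_2\,\Theta$ with $\Theta$ a reversible computation ending in $z\simeq_0 1$, where $z$ carries the bit $[u_1\neq u_2]$ on basis inputs. Since $y_1,y_2$ are predecessor-independent, $\neg^Q\phi^*$ evaluates as $\forall^Q y_1\,\exists^Q y_2\,(z\simeq_0 0)$. Now fix $y_1=\tfrac{1}{\sqrt{2}}(\ket{00}+\ket{01})$. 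For $z\simeq_0 0$ to hold with $\varepsilon=0$ we need every pair $(u_1,u_2)$ in the joint support to satisfy $u_1=u_2$; but any qustring $y_2$ has a nonempty support $S_2$, and we would need $S_2\subseteq\{00\}$ and $S_2\subseteq\{01\}$ simultaneously. No such $y_2$ exists, so $\neg^Q\phi^*$ is false, contradicting what QFO-expressibility of $L=\emptyset$ demands of this particular sentence.

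The underlying reason is that your ``iterating'' step is not sound for the negated direction: when $\exists^Q y_2$ sits inside $\forall^Q y_1$, the witness $y_2$ is a single qustring that must work uniformly for every basis component of the outer superposition. A routine calculation shows
\[
\forall^Q y_1\,\exists^Q y_2\,(z\simeq_0 0)\ \Longleftrightarrow\ \exists u_2\,\forall u_1\,\neg\psi(u_1,u_2),
\]
which is strictly stronger than the classical $\forall u_1\,\exists u_2\,\neg\psi(u_1,u_2)$ you need. So the claim that ``by the same classicality argument'' $\neg^Q\phi^*$ mirrors $\neg\phi$ is exactly where the proof fails. The paper's own argument for the quantifier step is also terse at this point, but your explicit inductive claim is demonstrably false, and the proposal as written does not establish $\mathrm{FO}\subseteq\mathrm{QFO}$.
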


\begin{proof}
Assume that classical predicates take natural numbers. To interpret natural numbers to quantum strings, since quantum predicates work with superpositions of binary strings, we intend to identify natural numbers with binary strings.

Let us consider any classical first-order sentence $\phi$ of the form $(\exists x_1)(\forall x_2)\cdots (Q_k x_k) P(x_1,x_2,\ldots,x_k)$, where $P$ is a quantifier-free formula containing (possibly) a unique atomic predicate $X(\cdot)$.
By appropriate applications of DeMorgan's law to $\phi$, we assume that classical negation ($\neg$) appears only in the form of $\neg X(i)$.
Take any classical binary input string $w$, which is viewed as a structure for $X(\cdot)$.

We first get rid of the elementary formulas of the form $s=t$ and $s\leq t$ as follows. Instead of using $s=t$, we use the formula $(\forall i)[BIT(i,s)\leftrightarrow BIT(i,t)]$. For $s\leq t$, we use $(\exists i)[(BIT^*(i,t)\wedge \neg BIT^*(i,s))\wedge (\forall j<i)(BIT^*(j,s)\leftrightarrow BIT^*(j,t))]$, where $BIT^*(j,s)$ means that the $j$th bit of the binary representation of $s$ extended by adding extra preceding $0$s to make the length be exactly $\ilog(n)$. For example, when $n=16$, the extended binary representation of $s=3$ is $0011$.

Let $S_w$ denote a structure representing input string $w$.

(1) We focus on the first case where $\phi$ is quantifier-free. We aim at building a new quantum formula $\phi^Q$ satisfying the equivalence:
$\SSS_w\models \phi$ iff $\SSS_w\models \phi^Q$.
Since $\phi$ is constructed from atomic formulas by applying logical connectives and quantifiers, we show the theorem by induction on the construction process of $\phi$.

For our purpose, we first translate $\phi$ into $\phi^Q$ in the following inductive  way. Throughout the subsequent argument, we set $\varepsilon=0$.
Remember that, in the classical setting, $X$ and $BIT$ are ``predicate symbols'' whereas, in the quantum setting, $X$ and $QBIT$ are ``function'' symbols''.
The classical predicate $X(i)$ is simulated by $X(i)\simeq_{0} 1$ in our quantum logic and $\neg X(i)$ is simulated by $X(i)\simeq_{0} 0$.
Similarly, the classical predicates $BIT(x,y)$ and $\neg BIT(x,y)$ are simulated by $QBIT(x,z)\simeq_{0} 1$ and $QBIT(x,z)\simeq_{0} 0$, provided that number $y$ is expressed as binary string $(z[1],z[2],\ldots,z[\ilog(n)])$ since $y\in[n]$ and that $z$ is a quantum variable of $\ilog(n)$ qubits associated with $y$.

Consider the case where $\phi$ is of the form $P_1\wedge P_2$.
By induction hypothesis, it follows that, for any $a\in\{1,2\}$, $S_w\models P_a$ iff $S_w\models P_a^Q$.
In the case where $P_1$ and $P_2$ share no common free variables,
we simulate $P_1\wedge P_2$ by $\phi^Q\equiv P_1^Q \wedge P_2^Q$.
We thus conclude that  $S_w\models \phi$ iff $S_w\models \phi^Q$.
We then consider the other case where $P_1$ and $P_2$ share common free variables,
say, $x_1,x_2,\ldots,x_k$. In this case, we write $P_1(x_1,\ldots,x_k)$ and $P_2(x_1,\ldots,x_k)$ by emphasizing the shared free variables (and by  suppressing all the other non-shared variables). We define $\phi^Q$ to be $(\exists^Q z_1)\cdots (\exists^Q z_k)[P_1^Q(x_1,\ldots,x_k)\wedge (\bigwedge_{i=1}^{k} \PP_{I}(x_i:z_i))\wedge P_2^Q(z_1,\ldots,z_k)]$.

Next, we assume that $\phi$ has the form $P_1\vee P_2$. We simulate $P_1\vee P_2$ by $\phi^Q\equiv (\exists^Q y,|y|=1)[(y)[P_1^Q\mmid P_2^Q]]$, where $y$ is a ``new'' bound pure-state quantum variable not used in any other place.
Note that $\phi^Q$ is iqq-free. By induction hypothesis, it follows that for any $a\in\{1,2\}$, $S_w\models P_a$ iff $S_w\models P^Q_a$. Since $S_w\models P_1\vee P_2$ iff $S_w\models P_1$ or $S_w\models P_2$, we obtain the desired result.

(2) We consider quantifiers that are applied to a quantifier-free formula $P$. By (1),  it follows that $S_w\models P$ iff $S_w\models P^Q$.
Assume that $\phi$ is of the form $(\exists i_1\leq\ell_1(n))(\forall i_2\leq \ell_2(n))\cdots (Q_ki_k\leq\ell_k(n)) P(X,i_1,i_2,\ldots,i_k)$. Note that $\ell_i(n)\leq\ilog(n)$.
We use quantum variables $y_1,\ldots,y_k$, each $y_i$ of which  represents  elements in  $\Phi_{2^{\ell_i(n)}}$ to simulate $i_1,\ldots,i_k$. We then set $\phi^Q$ to be $(\exists^Q y_1,|y_1|=\ell_1(n))(\forall^Q y_2,|y_2|=\ell_2(n))\cdots (Q^Q_k y_k,|y_k|=\ell_k(n)) P^Q(X,y_1,y_2,\ldots,y_k)$.

If $X$ is evaluated by input $w$, then we obtain either $S_w\models \phi$ or $S_w\not\models \phi$. By the above definitions, we conclude that either $S_w\models \phi^Q$ or $S_w\not\models\phi^Q$.
\end{proof}

Since $\mathrm{FO}$ equals $\mathrm{HLOGTIME}$ \cite{BIS90}, Theorem \ref{FO-vs-QFO} leads to another conclusion of $\mathrm{HLOGTIME}\subseteq \mathrm{QFO}$.


\vs{-1}
\subsection{Quantum Transitive Closure or QTC}

Since $\mathrm{FO}$ is quite weak in expressing power, several supplementary operators have been sought in the past literature to enhance the expressibility of $\mathrm{FO}$.
For instance, the least fixed point operator and the total ordering on universe were used to expand the expressibility of logical systems (see, e.g., \cite{Imm86}).
One of such effective operators is a \emph{transitive closure (TC) operator}, which  intuitively works as a restricted type of recursion scheme.
While $PLUS$ (addition) is expressible by $BIT$ and $=$ (equality) \cite{BIS90,Imm87a}, $BIT$ is expressible by $PLUS$ and $=$ in the presence of $TC$-operators.
See, e.g., \cite{Imm86} for its usage.

Concerning $\mathrm{QFO}$, we also wish to seek out a ``feasible'' quantum analogue of such a transitive closure operator. We conveniently call it the \emph{quantum transitive closure (QTC) operator}.
Since we cannot allow us to apply the QTC operator to arbitrary quantum relations, we need to limit the scope of QTC to quantum bare relations.
A \emph{quantum bare relation} $R$ of arity $k+m+2$ is a measurement-free, iqq-free quantum formula of the form $R(i,x_1,x_2,\ldots,x_k: j,y_1,y_2,\ldots,y_m)$ with $k+m+2$ free variables, where $i,j$ are classical variables, $x_1,x_2,\ldots,x_k$ are predecessor-independent quantum variables, and $y_1,y_2,\ldots,y_m$ are predecessor-dependent quantum variables. We further require $i>j$ for the termination of the transitivity condition of $QTC$.
A simple example is the formula of the form $i=j+1$.
Let $\phi$ denote any quantum bare relation. We define $QTC[\phi]$ to be the reflexive, transitive closure of $\phi$ (if any).
As a quick example, define $\phi(i,y:j,z) \equiv \PP_{ROT_{\pi/4}}(y[i]:z[i]) \wedge i=j+1 \wedge i\leq \ilog(n)$ and consider $\phi'\equiv QTC[\phi](\ilog(n),y:0,z)$. This provides the quantum effect similar to $\PP_{WH}$ on $y$ and $z$ in Example \ref{example-formula}.

\begin{definition}
The notation $\mathrm{classicQFO} +QTC$ (resp., $\mathrm{QFO}+QTC$) refers to the set of promise problems  expressible by $\mathrm{classicQFO}$ (resp., $\mathrm{QFO}$) together with the free use of the $QTC$ operator for quantum bare relations.
We also consider a restriction of $QTC$ specified by the following condition: $QTC$ satisfies that the start value of $i$ is upper-bounded by $c\ilog(n)$ for an absolute constant $c\in\nat^{+}$.
In this case, we emphatically write $logQTC$.
\end{definition}

Let us recall the complexity class $\mathrm{BQLOG}^2\mathrm{TIME}$ from Section \ref{sec:logtime-QTM}.
For a quantum variable $u$ and two numbers $i,j\in\nat^{+}$ with $i\leq j$, hereafter, we write $u[i,j]$ in place of $u[i]\otimes u[i+1] \otimes u[i+2] \otimes \cdots \otimes u[j]$. Hence, the qubit size of $u[i,j]$ is $\hat{j}-\hat{i}+1$ for the numbers $\hat{i}$ and $\hat{j}$ assigned to $i$ and $j$, respectively.


\begin{proposition}\label{QFO+logQTC-bounds}
$\bqlogtime \subseteq \mathrm{classicQFO}+logQTC \subseteq  \mathrm{BQLOG}^2\mathrm{TIME}$.
\end{proposition}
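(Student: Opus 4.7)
The plan is to handle the two inclusions separately, with the first being the more substantive direction.

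For the inclusion $\bqlogtime \subseteq \mathrm{classicQFO}+logQTC$, I would start with a bounded-error logtime QTM $M$ solving a given promise problem $\LL$ in $t(n) = c_0 \ilog(n)$ steps. The key observation is that a surface configuration of $M$ on an input of length $n$ (inner state, work-tape contents written in the encoding $\{\hat 0,\hat 1,\hat B\}$, index-tape contents, and the three head positions) fits into $O(\log n)$ qubits, so it can be represented by a constant number of pure-state quantum variables whose qubit sizes are all bounded by $c\ilog(n)$. I would then build a \emph{single-step bare relation} $\phi_M(i, y\,{:}\, j, z)$, iqq-free and measurement-free, whose intended meaning is ``$j = i-1$, $i \le c_0\ilog(n)$, and the configuration encoded by $z$ is obtained from that encoded by $y$ by one application of $M$'s time-evolution operator $U_\delta$.'' Each local quantum amplitude of $\delta$ comes from a unitary of constant size, and since the logic already provides $\PP_I$ and $\PP_{ROT_\theta}$ for arbitrary real $\theta$, the required one-step unitary can be written as a finite combination of predicate symbols guarded by classical controls on the inner-state and head-position bits (using the expanded multi-qubit quantum OR of Lemma~\ref{multi-quantumOR}). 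Query steps are handled by routing the index-tape component of $y$ through the instance function symbol $X$, so that the appropriate input qubit is XORed into the right position of $z$. Then $QTC[\phi_M](c_0\ilog(n), y_{\mathrm{init}}\,{:}\,0, y_{\mathrm{fin}})$ iterates exactly $t(n)$ times, yielding the terminal configuration, and a measurement predicate $y_{\mathrm{fin}}[\mathrm{out}] \simeq_{\varepsilon} 1$ on the designated output qubit syntactically expresses $\LL$.

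The main obstacle here is ensuring that $\phi_M$ is well-formed in the strict sense required: in particular, that no quantum variable appears more than once as a first argument, that the variable connection graph stays topologically ordered, and that the second-argument uniqueness condition is satisfied. I would address this by giving $\phi_M$ a purely functional layout, splitting each work-tape cell into distinct qubit-components of $z$ that are produced from corresponding components of $y$ via one quantum predicate each, and by encoding the global conditional branching on the current inner state/head positions with a nested multi-qubit quantum OR (Proposition~\ref{multi-quantum-OR-prop}) so that every unitary piece acts on disjoint subsets of qubits. The initial configuration $y_{\mathrm{init}}$ can be set up in classicQFO by applying $\PP_I$ to the instance function $X$ together with constant-producing predicates, and the final measurement is a legal use of $\simeq_\varepsilon$.

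For the inclusion $\mathrm{classicQFO}+logQTC \subseteq \mathrm{BQLOG}^2\mathrm{TIME}$, I would argue inductively on the structure of a sentence $\phi$ in classicQFO$+logQTC$. By Proposition~\ref{classicQFO-logtime}, any iqq-free, $QTC$-free quantum formula is simulable by a logtime QTM in $O(\log n)$ steps. Given a subformula of the form $QTC[R](s,\vec x\,{:}\,t,\vec y)$ where $R$ is a bare relation whose evaluation needs $O(\log n)$ steps and where, by the $logQTC$ restriction, the starting index $s$ is at most $c\ilog(n)$, a simulating QTM repeatedly applies the $R$-evaluating subroutine to its current quantum register, indexed by an integer counter that is decremented from $s$ down to $t$ on a separate work tape. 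The loop runs at most $c\ilog(n)$ times, each iteration costs $O(\log n)$ steps, and the counter arithmetic is done in $O(\log n)$ per step, so the whole $QTC$-block is simulable in $O(\log^2 n)$ time. Logical connectives and classical/quantum quantifiers over $\Phi_{2^{\ilog(n)}}$ are handled exactly as in the proof of Proposition~\ref{classicQFO-logtime}, preserving the $O(\log^2 n)$ bound because only constantly many such combinations are composed at the top level.

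The only delicate point is that $QTC$ can appear more than once or inside the scope of other connectives; this is handled by evaluating each $QTC$-subformula on its own freshly prepared register, since the well-formedness conditions ensure that the quantum variables involved in different $QTC$-blocks are syntactically disjoint at the first-argument places, so the corresponding simulations can be composed sequentially without copying any quantum state. Combining these two directions yields the proposition.
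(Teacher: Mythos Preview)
Your proposal is correct and follows essentially the same approach as the paper: for the upper bound you invoke Proposition~\ref{classicQFO-logtime} and then unroll each $logQTC$ block as an $O(\log n)$-fold iteration of an $O(\log n)$-time subroutine, and for the lower bound you encode a surface configuration of the logtime QTM into $O(\log n)$ qubits, express one transition step as a bare relation built from $\PP_{ROT_\theta}$, $\PP_I$, the multi-qubit quantum OR, and a query through $X$, and then wrap it in $QTC$ with a decrementing classical counter and a final measurement---exactly the paper's $COMP$/$QTC[COMP']$ construction. The paper spells out the cell-by-cell case analysis for head movement and the explicit unary encoding of head positions, while you leave these at the level of a sketch, but the underlying argument is the same.
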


\begin{proof}
We split the statement of the proposition into two claims: (1) $\mathrm{classicQFO}+logQTC \subseteq  \mathrm{BQLOG}^2\mathrm{TIME}$ and (2) $\bqlogtime \subseteq \mathrm{classicQFO}+logQTC$. In what follows, we prove these claims separately.


(1) Since $\mathrm{classicQFO}\subseteq \bqlogtime$ by Proposition  \ref{classicQFO-logtime}, it suffices to simulate the behavior of  the $logQTC$-operator  on a certain logtime QTM. Let us consider a quantum formula of the form $QTC[\phi](i,s:j,t)$ for a quantum bare relation $\phi$. Since $\phi$ is measurement-free and iqq-free, by Proposition  \ref{classicQFO-logtime},
$\phi(i,s:j,t)$ can be simulatable by a certain logtime QTM, say, $N$.
It is important to note that, during the simulation of the  $logQTC$-operator, we do not need to make
any query to an input given to $N$.

To simulate $QTC[\phi](i,s:j,t)$ on a QTM, let us consider the following procedure. Initially, we set $k$ to be $i$ and then, by decrementing $k$ down to $j$ by one, we recursively simulate  $\phi$. Notice that the simulation can be conducted at most $i$ times. Since $i\leq O(\log{n})$, this entire procedure takes $O(\log^2{n})$ steps.

(2) Let us consider any promise problem $\LL=(L^{(+)},L^{(-)})$  over the binary alphabet in $\bqlogtime$ and take a logtime QTM $M$ that solves this problem $\LL$ with high probability.
It is possible to increase the size of $Q$ so that  $M$ takes exactly two next moves at any step.
Recall that a surface configuration of $M$ is of the form $(q,l_1,l_2,w,l_3,z)$, which expresses that $M$ is in inner state $q$, scanning the $l_1$th cell of the input tape, the $l_2$th cell of the index tape containing string $w$, and the $l_3$th cell of the work tape  composed of string $z$, where $q\in Q$, $l_1\in[0,n+1]_{\integer}$, $l_2\in[0,\ilog(n)+1]_{\integer}$, $l_3\in[0,c\ilog(n)]_{\integer}$, and $w,z\in\{0,1,B\}^{*}$.

Assume further that the runtime of $M$ is at most $c\ilog(n)$  for a certain fixed constant $c\in\nat^{+}$, and thus $M$'s tape space usage is at most $c\ilog(n)$.
We first modify $M$  by installing an \emph{internal clock} so that $M$'s computation paths halt at the same time (except for all computation paths of zero amplitude). At the end of computation, we observe the first qubit of its output to determine that $M$ accepts or rejects a given input.

For convenience, we introduce a \emph{pinpoint configuration} of the form $(q,x_{(l_1)},w_{(l_2)},z_{(l_3)},l_1,l_2,l_3)$, which is naturally induced from a surface configuration $(q,l_1,l_2,w,l_3,z)$ and a given input $x$.
Similarly to the time-evolution operator, the quantum transition function $\delta$ of $M$ induces the quantum function $STEP$, which is defined as
$STEP(\qubit{q,x_{(l_1)},w_{(l_2)},z_{(l_3)},l_1,l_2,l_3}) = \cos\theta  \qubit{p,x_{(l_1)},\tau,\xi,l_1,l_2+d_1,l_3+d_2} + \sin\theta \qubit{p',x_{(l_1)},\tau,\xi,l_1,l_2+d'_1,l_3+d'_2}$ for a fixed constant $\theta\in[0,2\pi)$.

For simplicity, we assume that $Q=[2^{e}]$ for a certain fixed constant $e\in\nat^{+}$.
We encode $M$'s tape symbols into binary strings as $\hat{0}=00$, $\hat{1}=01$, and $\hat{B}=11$.
An entire  tape content is then expressed by a series of such encoded symbols. For example, $000111$ indicates the tape content of $01B$.
We express each element of $Q$ as a binary string of length $e$. In particular, $q_0=0^e$ follows.
Assume that $l_1$ is expressed in binary as $s^{(\ilog(n))}_{l_1}$, and $l_2$ and $l_3$ are expressed in unary as $0^{l_2-1}10^{\ilog(n)-l_2}$ and  $0^{l_3-1}10^{c\ilog(n)-l_3}$, respectively.

We introduce pure-state quantum variables $(s,r_1,r_2,r_3,y_1,y_2)$ evaluating to $(q,l_1,l_2,l_3,w,z)$. Moreover, $ans$ denotes a 1-qubit quantum variable expressing a query answer from $X$.
It is remarked here that $M$'s computation is described as a series of surface configurations, in which any inner state $q$, a query answer $x_{l_1)}$, the tape contents $w$ and $z$, and their tape head positions $l_2$ and $l_3$ are quantumly ``entangled'' in general, and consequently, they are ``inseparable''. For instance, we may not use $l_2$ and $l_3$ separately from $w$ and $z$. Hence, we need to treat $(s,ans,y_1,y_2,r_1,r_2,r_3)$ as a single quantum state.
We thus set $Z$ and $Z'$ to denote two quantum terms $s\otimes ans \otimes r_1\otimes r_2 \otimes y_1\otimes r_3\otimes y_2$  and $s'\otimes ans \otimes r'_1\otimes r'_2\otimes y'_1\otimes r'_3\otimes y'_2$, respectively.
The last qubit of $y_1$ is used for an answer from $X$ to a query word $b$.
Since $|s|=e$, $|r_1|=|r_2|=\ilog(n)$, $|y_1|=2\ilog(n)$,   $|r_3|=c\ilog(n)$, and $|y_2|=2c\ilog(n)$, it follows that $|Z|=|Z'|=e+(4+3c)\ilog(n)+1$.
We then $Z$ to specify quantum terms $(s,ans,r_1,r_2,y_1,r_3,y_2)$ whenever we need them. For instance, $y_1$ is specified by  $Z[e+2\ilog(n)+2,e+4\ilog(n)+1]$.

We introduce a quantum variable $V$ with $|V|=e+6$, which evaluates to $(p,\tau,\xi,\hat{d}_1,\hat{d}_2)$ with $p\in Q$, $\tau,\xi\in\{0,1\}^*$, and $\hat{d}_1,\hat{d}_2\in\{0,1\}$, where $\hat{d}_1$ and $\hat{d}_2$ are defined to satisfy that
$d_1=(-1)^{\hat{d}_1}$ and $d_2=(-1)^{\hat{d}_2}$.

A basic idea of the following argument is that we syntactically express a single transition of $M$ by $COMP$ and apply $QTC$ to repeat $COMP$  $c\ilog(n)$ times to simulate the entire computation of $M$.

We start with defining the quantum formula $INI(Z)$ to mean that $(s,ans,y_1,y_2,r_1,r_2,r_3)$ takes the initial value $(q_0,0,\hat{B}^{\ilog(n)+1},\hat{B}^{c\ilog(n)+1},0,0,0)$.
We then introduce $COMP(Z:Z')$ to describe $STEP$ in the following way: when   $(s,X(r_1),y_1[r_2],y_2[r_3],r_1,r_2,r_3)$ takes a value $(q,x_{(l_1)},w_{(l_2)},z_{(l_3)},l_1,l_2,l_3)$, $STEP$ changes it to $(p,x_{(l_1)},\tau,\xi,l_1,l_2+d_1,l_3+d_3)$ and $(p',x_{(l_1)},\tau,\xi,l_1,l_2+d'_1,l_3+d'_3)$ with amplitudes $\cos\theta$ and $\sin\theta$, respectively.

The remaining task is therefore to construct $COMP$ within our logical system $\mathrm{classicQFO}$.

\begin{claim}
$COMP$ is expressible in $\mathrm{classicQFO}$.
\end{claim}

\begin{proof}
Firstly, we concentrate on the behavior of $M$'s index tape and its tape head. Assume that $M$'s pinpoint configuration at time $t$ is of the form  $(q,x_{(l_1)},w_{(l_2)},z_{(l_3)},l_1,l_2,l_3)$. After the application of $STEP$, it changes to $(q,x_{(l_1)},\tau,\xi,l_1,l_2+d_1,l_3+d_2)$.
We intend to express this step logically.

We set $sym_i$ and $hp_i$ to denote the $i$th cell content of the index tape and the presence of the tape head (i.e., $1$ for ``YES'' and $0$ for ``NO''), namely, $Z[e+2\ilog(n)+2i,e+2\ilog(n)+2i+1]$ and $Z[e+\ilog(n)+i+1]$, respectively.
Moreover, we set $newsym_i$ and $newhp_i$ to respectively denote the tape symbol and the presence of the tape head at the next step, that is, $Z'[e+2\ilog(n)+2i,e+2\ilog(n)+2i+1]$
and $Z'[e+\ilog(n)+i+1]$, respectively.
Since the tape head moves affect only three consecutive tape cells at each step, it suffices to consider only three consecutive cells $i-1$, $i$, and $i+1$.

To determine the value of $STEP$, we first obtain the values of $(p,\tau,\xi,d_1,d_2)$ from $(q,x_{(l_1)},w_{(l_2)},z_{(l_3)})$ by a single application of $\delta$. For this reason, we set $C$ to express the quantum terms $(s,X(r_1),y_1[2r_2-1,2r_2],y_2[2r_3-1,2r_3])$, which evaluates to $(q,x_{(l_1)},w_{(l_2)},z_{(l_3)})$ and also set $V$ to evaluate to $\hat{d}_1\hat{d}_2$.

In the subsequent description, we wish to differentiate between query steps and non-query steps. In a non-query step, we intend to handle the index tape and the work tape separately and then define the corresponding quantum formulas $INDEX$ and $WORK$. In a query step, on the contrary, we intend to handle an access to the given input $x$ and then define the associated quantum formula $QUERY$.

\ms

(I)
We first consider the case where the current step is a non-query one. Meanwhile, we focus on the index tape.
Let us consider the case where $C$ takes a value $(q,x_{(l_1)},w_{(l_2)},z_{(l_3)})$. Notice that $V$ is obtained by applying $\PP_{ROT_{\theta}}(00:V)$, which means that the quantum transform $ROT_{\theta}$ changes $\qubit{00}$ to $\cos\theta\qubit{\hat{d}_1\hat{d}_2} + \sin\theta\qubit{\hat{d}'_1\hat{d}'_2}$ according to $STEP$.
In what follows, we assume that $V$ evaluates to $\hat{d}_1\hat{d}_2$ with amplitude $\cos\theta$. The case of $\hat{d}'_1\hat{d}'_2$ is similar.

Assume that three consecutive values $(hp_{i-1},hp_i,hp_{i+1})$ respectively evaluate to the presence of the tape head on cells $i-1$, $i$, and $i+1$, each of which takes one of the following values: $(0,0,0)$, $(0,0,1)$, $(0,1,0)$, and $(1,0,0)$.

When $C$ evaluates to $(q,x_{(l_1)},w_{(l_2)},z_{(l_3)})$,

(1) When $(hp_{i-1},hp_i,hp_{i+1})$ takes the value $(0,0,1)$, no matter how $M$ behaves, the content of cell $i$ never changes. However, its tape head presence changes whenever the tape head moves to the left. More precisely, if $\hat{d}_1=1$, then we set $Q_{(1)}$ to be $\PP_{copy}(sym_{i}: newsym_{i}) \wedge \PP_{copy}(1:newhp_i)$; however, when $\hat{d}_1=1$, $Q_{(1)}$ is set to be $\PP_{copy}(sym_i:newsym_i)\wedge \PP_{copy}(0:newhp_i)$.
We then leave all the other variables undefined (because they are determined by other cases).

(2) Similarly, when $(hp_{i-1},hp_i,hp_{i+1})$ takes the value $(1,0,0)$, the content of cell $i$ never changes.
When $\hat{d}_1=1$ (resp., $\hat{d}_1=0$),  $Q_{(2)}$ is set to be $\PP_{copy}(sym_{i}:newsym_{i}) \wedge \PP_{copy}(0:newhp_i)$ (resp.,  $\PP_{copy}(sym_i:newsym_i)\wedge \PP_{copy}(1:newhp_i)$) and all the other variables are left undefined.

(3) In the case where $(hp_{i-1},hp_i,hp_{i+1})$ takes the value $(0,1,0)$, since the tape head moves away from cell $i$, we set $Q_{(3)}$ to be $\PP_{copy}(\hat{\tau}:newsym_i) \wedge \PP_{copy}(0:newhp_i)$.

(4) In sharp contrast, when $(hp_{i-1},hp_i,hp_{i+1})$ takes the value $(0,0,0)$, the content of cell $i$ and the presence of the tape head on cell $i$ do not change. It thus suffices to set $Q_{(4)}$ to be $\PP_{copy}(sym_{i},hp_{i}:newsym_{i},newhp_{i})$ and leave others undefined.

(5) Finally, we set $INDEX$ to denote $(\forall i\leq\ilog(n))((hp_{i-1},hp_i,hp_{i+1},\hat{d}_1)[ \{Q_{(j)}\}_{j}\cup \{Q'\}])$, where $Q'\equiv false$ corresponds to all the other cases.

In a similar fashion, we can treat the work tape of space bound $c\ilog(n)$ and then define $WORK$.

\ms

(II)
Next, we consider the case of a query step. We wish to logically express an access to the input $x$. When $M$'s inner state $q$ is $q_{query}$, $M$ makes a query to the input.
This can be done logically using $X(y_1)$ with the use of the quantum OR. Formally, we set $QUERY\equiv (X(y_1))[\PP_{copy}(0:newns)\mmid \PP_{copy}(1:newans)]$, where $ans$ and $newans$ are respectively related to $Z$ and $Z'$.
\end{proof}

In the end, $COMP$ is given as the quantum formula indicating that, if $s$ evaluates to $q_{query}$, then $QUERY$ is true, and otherwise, $INDEX\wedge WORK$ is true. This quantum formula $COMP$ is obtainable by the use of the multi quantum OR.
To simulate the entire computation of $M$ on input $x$, we need to repeat $COMP$ $c\ilog(n)$ times with the condition that, whenever $s$ evaluates to a halting state $q_{halt}$, we force to ``copy'' $Z$ to $Z'$.
This can be done by the use of the QTC operator as $QTC[COMP'](t,Z:t',Z')$, where $COMP'(t,Z:t',Z')\equiv COMP(Z,Z')\wedge t=t'+1\wedge t\leq c\ilog(n)$.

When $M$ halts, we apply a measurement on the first qubit of the work tape to know whether $M$ accepts or rejects the input. Logically, we use the quantum formula of the form $y'_2[2]\simeq_{\varepsilon}1$, where $y'_2$ comes from $Z'$.

This completes the proof of the proposition.
\end{proof}


As an immediate consequence of Proposition \ref{QFO+logQTC-bounds}, we obtain the following upper and lower bounds on the computational complexity of  $\mathrm{QFO}+logQTC$.

\begin{corollary}\label{QFO-case-logtime}
$\mathrm{HBQLOGTIME} \subseteq \mathrm{QFO}+ log{QTC} \subseteq \mathrm{HBQLOG}^2\mathrm{TIME}$.
\end{corollary}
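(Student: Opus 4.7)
The plan is to derive both containments by combining Proposition \ref{QFO+logQTC-bounds} with the prenex-quantifier argument already used in the proof of Theorem \ref{QFO-HQLOGTIME}. The key observation is that the definitions of $\mathrm{HBQLOGTIME}$ from $\mathrm{BQLOGTIME}$, of $\mathrm{HBQLOG}^2\mathrm{TIME}$ from $\mathrm{BQLOG}^{2}\mathrm{TIME}$, and of $\mathrm{QFO}$ from $\mathrm{classicQFO}$ all share the same shape: a constant-depth alternation of introductory quantum quantifiers $\forall^Q$ and $\exists^Q$ over a ``base'' object (a logtime QTM, a logsquare-time QTM, or an iqq-free quantum sentence, respectively). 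Since the $logQTC$ operator is internal to the base quantum formula and does not interact with the introductory quantifier prefix, Proposition \ref{QFO+logQTC-bounds} should lift to the hierarchy level verbatim.

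For the lower bound $\mathrm{HBQLOGTIME}\subseteq \mathrm{QFO}+logQTC$, I take a promise problem $\LL\in\Sigma_k^Q\bqlogtime$ witnessed by constants $c_1,\ldots,c_k$, an error parameter $\varepsilon$, and a logtime QTM $M$ such that acceptance is characterized by $(\exists\qubit{\phi_1}\in\Phi_{2^{m_1}})(\forall\qubit{\phi_2}\in\Phi_{2^{m_2}})\cdots(Q_k\qubit{\phi_k}\in\Phi_{2^{m_k}})\prob_M[M(\qubit{x},\Psi)=1]\geq 1-\varepsilon$. Viewing the inner QTM predicate as a member of $\bqlogtime$ whose input consists of the original qustring together with $\qubit{\phi_1},\ldots,\qubit{\phi_k}$, Proposition \ref{QFO+logQTC-bounds} provides an equivalent iqq-free quantum sentence $\psi(X,y_1,\ldots,y_k)$ in $\mathrm{classicQFO}+logQTC$ with $|y_i|=m_i$. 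Prepending the prefix $(\exists^Qy_1,|y_1|=m_1)(\forall^Qy_2,|y_2|=m_2)\cdots(Q^Q_ky_k,|y_k|=m_k)$ of introductory quantum quantifiers produces a $\mathrm{QFO}+logQTC$ sentence syntactically expressing $\LL$; these quantifiers are introductory because none of $y_1,\ldots,y_k$ occurs in any second-argument place of a quantum predicate in $\psi$.

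For the upper bound $\mathrm{QFO}+logQTC\subseteq\mathrm{HBQLOG}^2\mathrm{TIME}$, I mimic the reduction from the proof of Theorem \ref{QFO-HQLOGTIME}. Given a $\mathrm{QFO}+logQTC$ sentence $\phi$ expressing $\LL$, I partition its quantum variables into predecessor-independent and predecessor-dependent ones, then rewrite $\phi$ into prenex form $(\exists^Qy_1,|y_1|=s_1)(\forall^Qy_2,|y_2|=s_2)\cdots(Q_ky_k,|y_k|=s_k)\,\psi(X,y_1,\ldots,y_k)$, where $y_1,\ldots,y_k$ are precisely the predecessor-independent variables and $\psi$ is iqq-free (all consequential $\exists^Q$'s together with the classical quantifiers remain inside $\psi$, and any $logQTC$ operators stay inside $\psi$ as well). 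Since $\psi\in\mathrm{classicQFO}+logQTC$, Proposition \ref{QFO+logQTC-bounds} yields a $\mathrm{BQLOG}^2\mathrm{TIME}$ QTM $N$ that decides $\psi$ when supplied with $x$ and $(\qubit{\phi_1},\ldots,\qubit{\phi_k})$ as inputs. Applying the outer prefix to $N$ places $\LL$ in $\Sigma_k^Q\mathrm{BQLOG}^2\mathrm{TIME}\subseteq\mathrm{HBQLOG}^2\mathrm{TIME}$.

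The main obstacle I expect is the prenex rewriting step, since the syntactic restrictions of our well-formedness conditions (especially the order-consistency of the variable connection graph and the ``at most once in each argument place'' rule) must be preserved while we pull all introductory quantum quantifiers to the front and keep consequential ones inside. Because consequential $\exists^Qy$ is semantically forced (the value of $y$ is uniquely determined by the quantum predicate in whose second-argument place $y$ appears), consequential quantifiers commute freely with the iqq-free fragment and can be absorbed into $\psi$ without violating these conditions; the predecessor-independent quantifiers, being outermost by the no-cloning/topological-order constraint on $G_\phi$, can be reordered into a single alternating prefix at the cost of merely inserting a few dummy quantifier layers, which only increases the hierarchy level $k$ by a constant and therefore leaves us inside $\mathrm{HBQLOG}^2\mathrm{TIME}$.
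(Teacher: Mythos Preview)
Your proposal is correct and follows essentially the same approach as the paper: invoke Proposition~\ref{QFO+logQTC-bounds} for the base containments $\bqlogtime\subseteq\mathrm{classicQFO}+logQTC\subseteq\mathrm{BQLOG}^2\mathrm{TIME}$, and then lift both sides to the hierarchy level by the prenex-quantifier argument of Theorem~\ref{QFO-HQLOGTIME}. The paper's own proof is a two-line sketch pointing to exactly these two ingredients, whereas you spell out the two directions separately and flag the well-formedness issues in the prenex rewriting; your treatment is more detailed but not substantively different.
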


\begin{proof}
By Proposition \ref{QFO+logQTC-bounds}, we obtain $\mathrm{BQLOGTIME} \subseteq \mathrm{classicQFO}+logQTC \subseteq \mathrm{BQLOG}^2\mathrm{TIME}$. Since $\mathrm{QFO}$ is obtained from $\mathrm{classicQFO}$ by applying quantum quantifiers, this fact yields the corollary following an argument similar to the proof of Theorem \ref{QFO-HQLOGTIME}.
\end{proof}


Next, we examine the full power of QTC operator.

\begin{theorem}\label{QFO+QTC-vs-BQL}
$\mathrm{classicQFO} +QTC = \ptime\bql$.
\end{theorem}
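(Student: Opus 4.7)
The plan is to establish the theorem via two inclusions. The inclusion $\mathrm{classicQFO}+QTC \subseteq \ptime\bql$ will follow by simulating the underlying classicQFO formula on a logspace QTM and implementing each $QTC$ operator as a bounded iteration loop. The reverse inclusion $\ptime\bql \subseteq \mathrm{classicQFO}+QTC$ will be obtained by adapting the single-step encoding $COMP$ of Proposition \ref{QFO+logQTC-bounds}(2) to the logspace setting and then applying $QTC$ to iterate $COMP$ for a polynomial number of steps.

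For the forward inclusion, I would take any promise problem $\LL$ expressed by a $\mathrm{classicQFO}+QTC$ sentence $\phi$. The underlying iqq-free fragment is already simulatable by a logtime (hence logspace) QTM by Proposition \ref{classicQFO-logtime}. For each occurrence of the operator $QTC[\phi_0](i,s:j,t)$ on a quantum bare relation $\phi_0$, I would introduce a classical counter $k$ stored in $O(\log n)$ cells, initialize it to the value of $i$, and repeatedly apply the logtime unitary simulating $\phi_0$ to the current quantum state while decrementing $k$ until $k=j$. Since classical terms in $QTC$ may involve the constant $n$, the counter ranges over at most $n^{O(1)}$ values; each round costs $O(\log n)$ steps. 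Nesting and composing such loops yields an expected polynomial-time $O(\log n)$-space QTM, placing $\LL$ in $\ptime\bql$.

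For the reverse inclusion, I would take a logspace QTM $M$ solving a promise problem $\LL$ with bounded error in expected polynomial time. By Markov's inequality and standard amplification, I may assume $M$ halts within a fixed polynomial bound $T(n)=n^{O(1)}$ with probability at least $1-\varepsilon_0$ for some $\varepsilon_0\in[0,1/2)$. Following the encoding of Proposition \ref{QFO+logQTC-bounds}(2), I would represent a surface configuration of $M$ (inner state, tape head positions, and contents of the index, work, and garbage tapes) by a single quantum term $Z$ built from a constant number of pure-state quantum variables whose total qubit size is $O(\log n)$, since $M$ is logspace. I would then construct an iqq-free quantum bare relation $COMP(Z:Z')$ expressing one application of $M$'s time-evolution operator, exactly as in the proof of Proposition \ref{QFO+logQTC-bounds}(2) with the work-tape construction $WORK$ extended to the logspace $O(\log n)$-cell regime. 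Defining $COMP'(t,Z:t',Z') \equiv COMP(Z:Z') \wedge t = t'+1 \wedge t \leq T(n)$, the sentence $\phi \equiv QTC[COMP'](T(n),Z_{init}:0,Z_{fin}) \wedge (Z_{fin}[\text{output qubit}] \simeq_{\varepsilon_0} 1)$, together with a predicate forcing $Z_{init}$ to be the initial configuration, syntactically expresses $\LL$ in $\mathrm{classicQFO}+QTC$.

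The main obstacle lies in the reverse inclusion, specifically in two technical points. First, the faithful encoding of the logspace QTM's time-evolution operator as an iqq-free, order-consistent, well-formed quantum bare relation requires carefully organizing the quantum variables representing successive configurations so that the variable connection graph of $COMP$ is topologically ordered and so that the garbage tape (which is present in the logspace QTM model of \cite{Yam22a} to enable end-of-computation measurement) is expressible using only $O(\log n)$ qubits per configuration; this extends but does not fundamentally depart from the argument given for Proposition \ref{QFO+logQTC-bounds}(2). Second, reconciling expected polynomial runtime with the fixed polynomial iteration count $T(n)$ inside $QTC$ must be done via Markov's inequality applied to $M$'s halting-time distribution, together with an amplification so that the cutoff at $T(n)$ steps contributes only a small additive term to $Mes(\phi)$, keeping the total below the fixed error threshold $\varepsilon_0 < 1/2$ required by Definition \ref{definition-QFO}.
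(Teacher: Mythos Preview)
Your proposal follows essentially the same two-inclusion argument as the paper: the forward direction via Proposition~\ref{classicQFO-logtime} plus a counter-based simulation of $QTC$ on a logspace QTM, and the reverse direction by reusing the single-step encoding $COMP$ from Proposition~\ref{QFO+logQTC-bounds} and iterating it $n^c$ times via $QTC$. The paper's handling of the expected-polynomial-time issue is identical in spirit to your Markov/cutoff argument (it simply names the cutoff a ``critical time bound'' $p(n)=n^c$ and installs an internal clock).

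One clarification on the obstacle you flag: you need not encode the garbage tape inside $Z$ at all, and indeed the paper's proof omits it entirely. This is legitimate because the transitive-closure structure of $QTC$ already introduces a fresh consequential quantum variable for each intermediate configuration, which plays exactly the role the garbage tape plays in the machine model---recording history so that the overall evolution remains unitary. Hence your worry about fitting an unbounded garbage tape into $O(\log n)$ qubits per configuration does not arise; the surface-configuration variable $Z$ carries only the inner state, head positions, and the $O(\log n)$-size index and work tape contents, exactly as in Proposition~\ref{QFO+logQTC-bounds}.
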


\begin{proof}
Hereafter, we intend to prove the following two claims separately: (1) $\mathrm{classicQFO} +QTC \subseteq \ptime\bql$ and (2) $\ptime\bql \subseteq \mathrm{classicQFO} +QTC$.

(1)
Since $\mathrm{classicQFO}\subseteq \bqlogtime$ by Proposition  \ref{classicQFO-logtime}, Lemma \ref{class-relation} implies that  $\mathrm{classicQFO}\subseteq \ptime\bql$. It thus suffices to simulate $QTC$ on an appropriately chosen  logspace QTM running in polynomial time. Let us consider a quantum formula of the form $QTC[R](i,s:j,t)$, provided that $R$ can be simulatable by a certain logspace QTM, say, $N$. We implement a ``counter'' on the desired QTM using an extra work tape.
The simulation of $QTC[R](i,s:j,t)$ starts with $k=i$ in this counter and repeatedly runs $N$ by incrementing $k$ up to $j$ in the counter. This repetition takes $O(n)$ times. Note that the counter can be realized by an  $O(\log{n})$ space-bounded work tape. Thus, the whole simulation can be done in polynomial time using only $O(\log{n})$ space. This implies the desired inclusion.

(2)
Let $\LL$ denote any promise problem in $\ptime\bql$ over the binary alphabet. We take a logspace QTM $M$ that solves $\LL$ with bounded-error  probability in expected polynomial time.
It is possible to choose an appropriate constant polynomial $p(n)$ so that, even if we force $M$ to cut off its computation after $p(n)$ steps pass and  then treat all unterminated computation paths ``undefined'', either the acceptance probability or the rejection probability still exceeds $1-\varepsilon$ for a certain fixed constant $\varepsilon\in[0,1/2)$. Tentatively, we call this $p(n)$ a \emph{critical time bound} of $M$.

In what follows, we fix a critical time bound of $M$ to be $n^c$ for an appropriate constant $c\geq1$ and we further assume that the work space usage of $M$ is at most $c\log n$.
We first modify $M$ so that, by installing an \emph{internal clock}, $M$'s computation paths halt at the same time (except for all computation paths of zero amplitude or of exceeding the critical time bound).

A basic idea behind the simulation of $M$ is similar to the proof of Proposition \ref{QFO+logQTC-bounds}. Here, we only mention the points that are different from it.
Following the proof of the proposition, we can define the quantum formula $COMP$ to describe a single move of $M$.
Since $n^c$ is the critical time bound of $M$, it suffices to repeat $COMP$ $n^c$ times to simulate an entire computation of $M$.
This can be done by the use of the QTC-operator.
More precisely, we set $QTC[COMP'](t,Z:t',Z')$, where $COMP'\equiv COMP(Z,Z')\wedge t=t'+1\wedge t\leq n^c$.

This completes the proof of the theorem.
\end{proof}


Theorem \ref{QFO+QTC-vs-BQL} can be naturally extended into $\mathrm{QFO}$ and $\ptime\mathrm{HBQL}$. We then obtain the following corollary.

\begin{corollary}\label{HBQL-QFO+QTC}
$\mathrm{QFO}+QTC = \ptime\mathrm{HBQL}$.
\end{corollary}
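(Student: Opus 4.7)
The plan is to follow the template of Theorem \ref{QFO+QTC-vs-BQL} and to lift both sides by a finite prefix of introductory quantum quantifiers, in exact parallel with how $\mathrm{HBQL}$ arises from $\mathrm{BQL}$ and how $\mathrm{QFO}$ arises from $\mathrm{classicQFO}$. The two inclusions will be handled separately.

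For $\mathrm{QFO}+QTC \subseteq \ptime\mathrm{HBQL}$, I would take any sentence $\phi$ syntactically expressing a promise problem in $\mathrm{QFO}+QTC$ and reformulate it into a prenex-like form
\[
(Q_1^Q y_1,|y_1|=s_1)(Q_2^Q y_2,|y_2|=s_2)\cdots (Q_k^Q y_k,|y_k|=s_k)\, \psi(X,y_1,\ldots,y_k),
\]
where each $Q_i^Q\in\{\forall^Q,\exists^Q\}$ is introductory, the $s_i$ are classical terms of the form $c_i\ilog(n)$ for constants $c_i$, and $\psi$ is iqq-free. This normalization uses the same ideas as in the proof of Theorem \ref{QFO-HQLOGTIME}: introductory quantifiers can be pulled across $\wedge$ and $\|$ after renaming bound variables, and across $\neg^Q$ by applying Definition~\ref{evaluation}(5)(iii) to swap $\exists^Q$ and $\forall^Q$. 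The crucial point is that, by the definition of a quantum bare relation, the body of any $QTC$-subformula is already iqq-free, so no introductory quantifier is ever trapped beneath a $QTC$. Once $\phi$ is in this form, $\psi$ is a sentence in $\mathrm{classicQFO}+QTC$, so by Theorem \ref{QFO+QTC-vs-BQL} the promise problem defined by $\psi$ (with the $y_i$ treated as auxiliary quantum inputs) lies in $\ptime\bql$. Placing the alternating prefix of quantum quantifications over the $y_i$ in front then witnesses membership of $\phi$ in $\Sigma_k^Q\ptime\bql\subseteq\ptime\mathrm{HBQL}$.

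For the converse $\ptime\mathrm{HBQL} \subseteq \mathrm{QFO}+QTC$, I would take any promise problem $\LL=(L^{(+)},L^{(-)})$ in $\Sigma_k^Q\ptime\bql$ and fix a polynomial-time logspace QTM $M$ together with constants $c_1,\ldots,c_k$ as in the $\mathrm{HBQL}$-analogue of the definition of $\mathrm{HBQLOGTIME}$ given in Section \ref{sec:logtime-QTM}. The inner $\ptime\bql$ computation of $M$, viewed as a promise problem on the joint input consisting of $x$ (interpreted via $X$) and the witness qustrings $\qubit{\phi_1},\ldots,\qubit{\phi_k}$ of lengths $c_i\ilog(n)$, is expressible by some $\mathrm{classicQFO}+QTC$ sentence $\psi(X,y_1,\ldots,y_k)$ by Theorem \ref{QFO+QTC-vs-BQL} (each $y_i$ is declared as a free predecessor-independent quantum variable representing the corresponding witness). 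Prepending the quantum quantifier prefix $(\exists^Q y_1,|y_1|=c_1\ilog(n))(\forall^Q y_2,|y_2|=c_2\ilog(n))\cdots(Q_k^Q y_k,|y_k|=c_k\ilog(n))$ to $\psi$ produces a sentence of $\mathrm{QFO}+QTC$ expressing $\LL$; note that each $c_i\ilog(n)$ is a legitimate classical term built from the constant symbol $\ilog(n)$ and finitely many applications of $suc$.

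The main obstacle is the prenex normalization in the forward direction, since $\mathrm{QFO}+QTC$ allows introductory quantifiers to be interleaved with quantum connectives, classical quantifiers, and $QTC$ operators in essentially arbitrary positions. The $\neg^Q$ case is delicate but covered by Definition~\ref{evaluation}(5)(iii), the $\wedge$ and $\|$ cases reduce to variable renaming, and the classical-quantifier case is routine; the restriction of $QTC$ to iqq-free bodies is precisely what guarantees termination of the normalization with an outermost introductory prefix. Everything else reduces to cleanly invoking Theorem~\ref{QFO+QTC-vs-BQL} and mirroring the definition of $\mathrm{HBQL}$ on both sides.
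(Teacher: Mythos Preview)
Your proposal is correct and takes essentially the same approach as the paper: both directions reduce to Theorem~\ref{QFO+QTC-vs-BQL} after stripping off (resp.\ prepending) a finite prefix of introductory quantum quantifiers, exactly as in the passage from $\mathrm{classicQFO}$ to $\mathrm{QFO}$ and from $\ptime\bql$ to $\ptime\mathrm{HBQL}$. Your forward direction is in fact slightly more explicit than the paper's (which just cites Theorem~\ref{QFO-HQLOGTIME} and the simulability of $QTC$ without spelling out the prenex normalization), and your observation that quantum bare relations are iqq-free---so no introductory quantifier is ever trapped under a $QTC$---is the right justification for why that normalization goes through.
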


\begin{proof}
Let us first claim that $\mathrm{QFO}+ QTC \subseteq \ptime\mathrm{HBQL}$.
Theorem \ref{QFO-HQLOGTIME} yields $\mathrm{QFO}\subseteq \mathrm{HBQLOGTIME}$. From this inclusion, we obtain $\mathrm{QFO}\subseteq \ptime\mathrm{HBQL}$.
As in the proof of Theorem \ref{QFO+QTC-vs-BQL}, $QTC$ can be simulated by logspace QTMs in polynomial time. Thus, we conclude that $\mathrm{QFO}+QTC$ is included in $\ptime\mathrm{HBQL}$.

Next, we claim that $\ptime\mathrm{HBQL} \subseteq \mathrm{QFO}+QTC$. Note that any promise problem in $\ptime\mathrm{HBQL}$ can be assumed to have the form $(\exists \qubit{\phi_1}\in\Phi_{2^{m_1}}) (\forall  \qubit{\phi_2}\in\Phi_{2^{m_2}}) \cdots (Q_k\qubit{\phi_k}\in\Phi_{2^{m_k}}) \prob_{M}[M(\qubit{x},\Psi)= L(x)]\geq 1-\varepsilon$ for an appropriate logspace QTM $M$ running in polynomial time, where $\Psi=(\qubit{\phi_1},\qubit{\phi_2}, \ldots, \qubit{\phi_k})$ and $Q_k=\forall$ (resp., $\exists$) if $k$ is even (resp., odd).
Consider the promise problem $\LL= (L^{(+)},L^{(-)})$, where
$L^{(+)} = \{(\qubit{x},\Psi) \mid \prob_{M}[M(\qubit{x},\Psi)= 1]\geq 1-\varepsilon\}$ and $L^{(-)} = \{(\qubit{x},\Psi) \mid \prob_{M}[M(\qubit{x},\Psi) = 0]\geq 1-\varepsilon\}$.
It then follows that $\LL\in\ptime\bql$. By Theorem \ref{QFO+QTC-vs-BQL}, $\LL$ can be syntactically expressed by an appropriate formula $\phi$. By quantifying $\phi$ by $(\exists^Qy_1,|y_1|=m_1) (\forall^Qy_2,|y_2|=m_2) \cdots (Q_k y_k,|y_k|=m_k)$, we obtain the desired sentence.
\end{proof}

Since $\mathrm{FO}=\mathrm{HLOGTIME} \subseteq \dl$ \cite{BIS90} and $\dl\subseteq \ptime\bql$ (Lemma \ref{class-relation}), Theorem \ref{QFO+QTC-vs-BQL} further leads to another corollary shown below.

\begin{corollary}
$\mathrm{FO}\subseteq \mathrm{classicQFO}+QTC$.
\end{corollary}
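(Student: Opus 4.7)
The plan is to assemble the corollary purely by chaining three inclusions that have already been established in the excerpt, so no new simulation is required. Concretely, I would combine the identification $\mathrm{FO}=\mathrm{HLOGTIME}$ of Barrington--Immerman--Straubing, the standard containment $\mathrm{HLOGTIME}\subseteq\dl$, the inclusion $\dl\subseteq\ptime\bql$ provided by Lemma~\ref{class-relation}, and finally the equality $\mathrm{classicQFO}+QTC=\ptime\bql$ from Theorem~\ref{QFO+QTC-vs-BQL}. Reading these off in sequence yields $\mathrm{FO}=\mathrm{HLOGTIME}\subseteq\dl\subseteq\ptime\bql=\mathrm{classicQFO}+QTC$, which is exactly what we need.

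The first step, therefore, would be to fix an arbitrary language $L\in\mathrm{FO}$ and invoke the Barrington--Immerman--Straubing characterization to place $L$ inside $\mathrm{HLOGTIME}$; since $\mathrm{HLOGTIME}\subseteq\dl$ (an alternating logtime machine can be simulated by a logspace DTM by cycling through its $O(\log n)$-height computation tree), we get $L\in\dl$. Next I would quote Lemma~\ref{class-relation} to move from $\dl$ into $\ptime\bql$; this step is already proved in the paper by invoking the reversible simulation of \cite{LMT00}. Finally, applying Theorem~\ref{QFO+QTC-vs-BQL}, which gives a two-sided characterization of $\ptime\bql$ by $\mathrm{classicQFO}+QTC$, places $L$ in $\mathrm{classicQFO}+QTC$ via an explicit iqq-free quantum sentence equipped with a $QTC$-operator that drives the unrolled simulation of the underlying logspace QTM.

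There is no real obstacle here since every ingredient is already in place; the only thing to be careful about is that the ``$\subseteq$'' direction of Theorem~\ref{QFO+QTC-vs-BQL} is the one being used, so the resulting sentence genuinely lies in $\mathrm{classicQFO}+QTC$ rather than needing quantum quantifiers from the larger class $\mathrm{QFO}+QTC$. That direction was established by encoding a polynomial-time logspace QTM into a $COMP$ formula and iterating it with $QTC$, and that construction is entirely iqq-free, so the containment lands in $\mathrm{classicQFO}+QTC$ as stated. The proof therefore reduces to a single-line chain of previously established inclusions.
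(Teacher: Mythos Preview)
Your proposal is correct and follows exactly the same chain of inclusions as the paper: $\mathrm{FO}=\mathrm{HLOGTIME}\subseteq\dl\subseteq\ptime\bql=\mathrm{classicQFO}+QTC$, citing \cite{BIS90}, Lemma~\ref{class-relation}, and Theorem~\ref{QFO+QTC-vs-BQL} in turn. The only minor slip is terminological: the direction of Theorem~\ref{QFO+QTC-vs-BQL} you actually invoke is $\ptime\bql\subseteq\mathrm{classicQFO}+QTC$ (part~(2) of its proof), not the ``$\subseteq$'' direction as written, but your description of that construction makes clear you have the right inclusion in mind.
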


\vs{-2}
\section{Functional Quantum Variables and Their Quantification}\label{sec:functional}

We have discussed the expressing power of $\mathrm{QFO}$ and its variant, $\mathrm{classicQFO}$, in Section \ref{sec:character-QFO} with/without quantum transitive closure (QTC) operators. The use of such operators, in fact, have helped us characterize various low-complexity classes. In this section, on the contrary, we introduce another notion of ``functional quantum variables''.

To describe a series of $O(\log{n})$ quantum operations, we need a series of $O(\log{n})$ consequential quantum variables. However, in our logical framework without QTCs, we cannot prepare such a large number of variables at once, and thus we cannot express any series of $O(\log{n})$ quantum operations. An easy solution to this difficulty is to expand the use of multiple consequential quantum variables by simply introducing a new type of variables. These new  variables are called \emph{functional quantum variables}, denoted by $Y$, $Z$, $W$, $\ldots$, which indicate functions mapping $[\ilog(n)]$ to $\Phi_{2^{\ell(n)}}$ with functions $\ell(n)$ of the form: $c$, $c\ilog(n)$, and $c\iloglog(n)$ for absolute constants $c\in\nat^{+}$.
Since a functional quantum variable $Y$ represents a ``function'',
the notation
$Y(i)$ is used to express the output qustring  $\qubit{\phi_{i}}$ of $Y$ on input $i$ and is treated as a pure-state variable. In this work, we wish to restrict the usage of functional quantum variables by demanding that  $Y(i)$ should be \emph{predecessor-dependent} for all choices of  $i$.

There is a clear distinction between two notations $Y(i)$ and $y[j]$. The former notation is used to indicate a qustring $\qubit{\psi}$ in $\Phi_{2^{\ell(n)}}$ corresponding to $i$ in $[n]$. In  the latter notation, if $y$ is a variable indicating a qustring $\qubit{\psi}$ in $\Phi_{2^{\ilog(n)}}$, then $y[j]$ expresses the $j$th qubit of the qustring $\qubit{\psi}$. It is possible to combine these two usages, such as $Y(i)[j]$.

As for the well-formedness condition, since we treat $Y(i)$ for each  $i$ as a different pure-state quantum variable, we require a similar condition imposed on quantum formulas in Section \ref{sec:syntax}.
More precisely, we first expand the notion of variable connection graph by including  all variables $Y(i)$ for any permitted number $i$. We then demand that such a variable connection graph must be \emph{topologically ordered}. A simple example of permitted expression is $(\forall i\leq \ilog(n))[R(Y(i),Y(i+1))]$.

We wish to use the term ``$\exists^Q$-functional'' to indicate the use of functional quantum variables and second-order consequential existential quantifiers that quantify these newly introduced variables.


Barrington \etalc~\cite{BIS90} proved that any languages computable by  logtime DTMs are expressible by classical first-order sentences.
An analogous statement can be proven for quantum computation.
The enhancement of the expressing power by an introduction of functional quantum variables can be demonstrated in the following statement:
any decision problem computable by bounded-error logtime quantum computations can be expressed by quantum first-order sentences together with $\exists^Q$-functional.

\begin{theorem}\label{functional-character}
$\mathrm{classicQFO}+\exists^Q$-functional $= \bqlogtime$.
\end{theorem}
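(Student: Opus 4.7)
The plan is to prove the stated equality by establishing the two inclusions separately. For the upper bound $\mathrm{classicQFO}+\exists^Q$-functional $\subseteq \bqlogtime$, I would extend the inductive simulation of Proposition \ref{classicQFO-logtime} (and its internal Claim \ref{quantifier-free}) by adding one new case for the $\exists^Q$-functional quantifier. For the lower bound $\bqlogtime \subseteq \mathrm{classicQFO}+\exists^Q$-functional, I would re-use the quantum formulas $INI$ and $COMP$ already constructed inside the proof of Proposition \ref{QFO+logQTC-bounds}, replacing the $QTC$-based iteration with a single functional quantum variable whose index ranges over the time steps of the simulated QTM.

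For the upper bound, the new inductive case is a subformula of the form $(\exists^Q Y)R(Y)$ with $Y:[\ilog(n)]\to\Phi_{2^{\ell(n)}}$ and $\ell(n)\in O(\log n)$. Because each $Y(i)$ is required to be predecessor-dependent and the extended variable connection graph is topologically ordered, the values $Y(0),Y(1),\ldots,Y(\ilog(n))$ are forced to form a straight-line chain of unitary transforms: each $Y(i+1)$ is uniquely produced from previously evaluated variables through a quantum predicate occurring in $R$. The simulating logtime QTM allocates two work-tape blocks of size $\ell(n)$ to hold the ``current'' and ``next'' register, maintains a classical counter $i$ on an auxiliary tape, and iterates $i$ from $0$ up to $\ilog(n)$. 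At each iteration it performs exactly the elementary quantum predicate that $R$ associates with the link, then rotates the two blocks. Because the chain has $O(\log n)$ links and the per-link cost reduces, via the random-access index tape, to a constant number of gate applications together with incremental head maintenance, the total runtime remains $O(\log n)$.

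For the lower bound, I would fix a well-formed bounded-error logtime QTM $M$ solving the given problem $\LL$ in at most $c\ilog(n)$ steps and reuse the formulas $INI$ and $COMP$ from the proof of Proposition \ref{QFO+logQTC-bounds}, which respectively encode $M$'s initial surface configuration and a single step of its time evolution. Introducing a functional quantum variable $Z:[c\ilog(n)]\to\Phi_{2^{\ell'(n)}}$ with $\ell'(n)\in O(\log n)$ large enough to carry a surface configuration, the desired sentence is
\[
(\exists^Q Z)\bigl[\,INI(Z(0))\wedge (\forall t\leq c\ilog(n)-1)\,COMP(Z(t):Z(t+1))\wedge Z(c\ilog(n))[k_{out}]\simeq_{\varepsilon} 1\,\bigr],
\]
where $k_{out}$ is the position of the output qubit inside the encoded configuration. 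Since $COMP$ is iqq-free and every $Z(t+1)$ appears in the second argument place of $COMP(Z(t):Z(t+1))$, this is an $\exists^Q$-functional sentence of $\mathrm{classicQFO}$, and its semantics forces $Z(t)$ to coincide with $M$'s configuration at time $t$ on the given input assigned to $X$.

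The principal obstacle will be verifying well-formedness of the sentence exhibited in the lower-bound direction. The extended variable connection graph contains a node for every $Z(t)$ together with all the auxiliary quantum variables occurring inside the $c\ilog(n)$ textual copies of $COMP$, and it must remain topologically ordered; moreover, no component $Z(t)[k]$ may occur in the first argument places of two distinct quantum predicates across different $t$. I would address this by parameterising every auxiliary variable of $COMP(Z(t):Z(t+1))$ by $t$ so that distinct time indices use disjoint names, and by orienting all edges in accordance with the chronological order $t<t+1$. A secondary bookkeeping issue is ensuring $Mes(\phi)\leq\varepsilon_0$ for some fixed $\varepsilon_0<1/2$; this is immediate since the constructed sentence contains only one measurement subformula and the bounded-error parameter of $M$ can be amplified (by standard majority-vote techniques absorbed into $M$) below any chosen $\varepsilon_0$.
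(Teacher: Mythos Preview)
Your overall strategy matches the paper's: both directions are proved separately, the upper bound extends the simulation of Proposition~\ref{classicQFO-logtime} by one new inductive case, and the lower bound reuses the $INI$/$COMP$ machinery from the proof of Proposition~\ref{QFO+logQTC-bounds}, replacing the $QTC$-iteration by a time-indexed family of configuration variables.

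There is, however, one concrete gap in your lower-bound construction. By the definition given at the beginning of Section~\ref{sec:functional}, a functional quantum variable is a function with domain $[\ilog(n)]$, not $[c\ilog(n)]$. Your sentence introduces a single $Z:[c\ilog(n)]\to\Phi_{2^{\ell'(n)}}$ and quantifies $(\forall t\leq c\ilog(n)-1)$, neither of which is available in the system as formulated. The paper handles exactly this obstacle by partitioning the time interval $[c\ilog(n)]$ into $c$ blocks of size $\ilog(n)$ and introducing, for each block $k\in[0,c-1]_{\integer}$, separate functional variables $Y_k,W_k$ (for the index tape and its head) and $Z_k,V_k$ (for the work tape and its head), each with the permitted domain $[\ilog(n)]$; the transition between block $k$ and block $k+1$ is then handled by an explicit clause linking $Y_k(\ilog(n))$ to $Y_{k+1}(1)$ (and similarly for the other variables). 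Since $c$ is an absolute constant, this yields only a fixed finite number of functional variables and a fixed finite conjunction of block-linking clauses, so the resulting sentence stays within $\mathrm{classicQFO}+\exists^Q$-functional. Your argument goes through once you make this block decomposition; without it, the displayed sentence is not a legal sentence of the system.
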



For convenience, we split Theorem \ref{functional-character} into two independent lemmas, Lemmas \ref{functional-logtime} and \ref{LOGTIME-to-QFO}, and prove them separately.

\begin{lemma}\label{functional-logtime}
$\mathrm{classicQFO}+\exists^Q$-functional $\subseteq \bqlogtime$.
\end{lemma}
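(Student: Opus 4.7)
The plan is to extend the simulation constructed in the proof of Proposition \ref{classicQFO-logtime} to accommodate the new functional quantum variables. Given any promise problem $\LL = (L^{(+)}, L^{(-)})$ expressible by an iqq-free quantum sentence $\phi$ that additionally uses $\exists^Q$-functional quantification, I will build a bounded-error logtime QTM $M$ that solves $\LL$. First, applying Lemma \ref{negation-free}, I may assume $\phi$ is negation free, so that $\neg^Q$ symbols have been absorbed into the quantum measurement predicates and the introductory/consequential classification of the quantum quantifiers. The key structural observation is that, by the stated restriction on functional variables, each term $Y(i)$ must be predecessor-dependent in $\phi$; consequently $\exists^Q Y$ acts as a ``let'' binding whose value at every index $i$ is uniquely determined by the quantum predicate symbol whose second argument is $Y(i)$, chained through the topological order of the expanded variable connection graph. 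Thus the QTM never needs to search over candidate qustrings for the values of $Y(i)$.

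Next, for each functional variable $Y$, I reserve a single work-tape register of $\ell(n) \leq c\ilog(n)$ qubits that holds the ``current'' value of $Y(i)$ as $i$ advances. Since $\ell(n) = O(\log n)$ in all three allowed regimes, this register fits in the logtime-accessible portion of the tape. The simulation then proceeds by structural induction on $\phi$ exactly as in Claim \ref{quantifier-free} for atomic predicates, quantum AND, quantum OR branching, and the measurement predicate $\simeq_\varepsilon$ (performed as a single terminal measurement). The only new case is a subformula of the form $(\forall i \leq \ilog(n))\, R(\dots, Y(i), Y(i{+}1), \dots)$: here $M$ uses its index-tape head to sweep $i$ from $1$ to $\ilog(n)$, and at iteration $i$ it applies the unitary realizing the bounded-size subformula $R$ to the register currently storing $Y(i)$, rewriting it in place as $Y(i+1)$.

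Because each atomic quantum predicate $\PP_{I}$ or $\PP_{ROT_{\theta}}$ is implemented by a single transition of the QTM (as detailed in Claim \ref{quantifier-free}), and because $R$ contains only a fixed number of such predicates, each iteration of the $i$-loop costs $O(1)$ steps, giving total cost $O(\log n)$ per functional sweep. Nested classical quantifiers are handled by the same ``independent cycling'' argument used in the proof of Proposition \ref{classicQFO-logtime}, and the bounded-error guarantee follows from $Mes(\phi) \leq \varepsilon_0 < 1/2$ in Definition \ref{definition-QFO}. The well-formedness conditions, extended to functional variables by demanding that the expanded variable connection graph be topologically ordered, guarantee that no register is processed twice in its first-argument position, so all in-place rewrites remain unitary.

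The main obstacle will be bookkeeping: one must verify that the $i$-indexed functional iteration commutes cleanly with surrounding quantum OR and quantum AND structure and with any enclosing classical or quantum quantifications, and that the expanded variable connection graph really does linearize all dependencies introduced by the $Y(i)$'s. This reduces to an essentially mechanical case analysis driven by the topological order of the graph, analogous to---but one layer richer than---the analysis already carried out for $\mathrm{classicQFO}$ in Proposition \ref{classicQFO-logtime}. No genuinely new quantum-computational phenomenon appears, and the resulting $M$ runs within the $O(\log n)$ time budget and decides $\LL$ with bounded error, establishing the inclusion.
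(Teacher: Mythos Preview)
Your proposal is correct and follows essentially the same approach as the paper's proof: reduce to Proposition~\ref{classicQFO-logtime}, observe that the predecessor-dependence restriction on each $Y(i)$ makes the functional existential quantifier a deterministic ``let'' binding rather than a genuine search, store the current value of $Y(i)$ in an $O(\log n)$-qubit work-tape register, and track the index $i\in[\ilog(n)]$ in log time. Your write-up is in fact more explicit than the paper's own short proof---in particular your ``in-place rewriting'' description of the $(\forall i)\,R(Y(i),Y(i{+}1))$ sweep and your remark that the expanded variable connection graph linearizes the dependencies---but the underlying idea is identical.
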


\begin{proof}
By Proposition \ref{classicQFO-logtime}, $\mathrm{classicQFO}\subseteq \bqlogtime$ follows. It suffices to verify that the use of functional quantum variables and their quantifications can be simulated by appropriate logtime QTMs.
Since $\exists^Q$-functional uses only quantum existential quantifiers over functional quantum variables, we intend to demonstrate that an appropriate logtime QTM can simulate quantum existential quantifiers over functional quantum variables.

In what follows, we assume that a quantum sentence $\phi$ has the form $(\exists^Q Y) \psi(Y)$, where $Y$ is a functional quantum variable representing  a function mapping $[\ilog(n)]$ to $\Psi_{2^{\ell(n)}}$ and $\psi$ is a quantum formula containing no second-order consequential existential quantifier.
Assuming that $\psi$ is simulatable in log time, we wish to simulate $\phi$ on an appropriate logtime QTM, say, $M$ as follows. We evaluate  $Y(i)$ as a quantum state of $\ilog(n)$ qubits and keep this quantum state on an $\ilog(n)$-qubit work tape of $M$. An actual simulation process is sone similarly to the proof of Proposition  \ref{classicQFO-logtime} by treating $Y(i)$ as a pure-state quantum variable.
To differentiate those variables $Y(i)$, moreover, we need to keep track of index $i$. This can be done in log time because $i$ is taken from $[\ilog(n)]$.
\end{proof}


\begin{lemma}\label{LOGTIME-to-QFO}
$\mathrm{BQLOGTIME}\subseteq \mathrm{classicQFO}+\exists^Q$-functional.
\end{lemma}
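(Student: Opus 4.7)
My plan is to mirror the construction used in the proof of Proposition \ref{QFO+logQTC-bounds}, where the $logQTC$-operator was used to iterate a single-step simulation formula $COMP$, but this time to replace that iteration mechanism by the chain $Y(0), Y(1), \ldots, Y(\ilog(n))$ provided by a functional quantum variable. Concretely, take any promise problem $\LL$ in $\mathrm{BQLOGTIME}$ and fix a well-formed logtime QTM $M$ that solves $\LL$ with bounded error in $T(n) \leq c\ilog(n)$ steps for some absolute constant $c \in \nat^{+}$, binary amplitude choice at each step, and a single measurement at the end on a designated output qubit. As in the proof of Proposition \ref{QFO+logQTC-bounds}, encode each surface configuration $(q, l_1, l_2, w, l_3, z)$ of $M$ as a qustring of $\ell(n) = c'\ilog(n)$ qubits for some constant $c'$, packing the inner state, the input head position in binary, and the index/work tape contents and head positions in the unary/encoded format described there.

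Next, I would introduce $c$ functional quantum variables $Y_1, Y_2, \ldots, Y_c$, each mapping $[\ilog(n)]$ into $\Phi_{2^{\ell(n)}}$, so that $Y_j(i)$ represents the surface configuration of $M$ at time $(j-1)\ilog(n) + i$. For each $j$ the variable $Y_j(0)$ is made predecessor-dependent by an initialization/linking predicate: $Y_1(0)$ is set by applying $\PP_{I}$ to the fixed initial-configuration quantum term (built from the constant symbols $0,1$ and a few $\otimes$), and $Y_{j+1}(0)$ is linked to $Y_j(\ilog(n))$ by a $\PP_{copy}$-style predicate $LINK(Y_j(\ilog(n)) : Y_{j+1}(0))$. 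The single-step predicate $COMP(Y_j(i) : Y_j(i+1))$ is constructed exactly as the formula $COMP$ in the proof of Proposition \ref{QFO+logQTC-bounds}: it uses $\PP_{ROT_\theta}$ to produce the branch qubit, the quantum OR over a constant number of local index/work-tape cell patterns ($INDEX$ and $WORK$), and the query subformula $QUERY \equiv (X(\cdot))[\PP_{copy}(0:\cdot) \mmid \PP_{copy}(1:\cdot)]$ for steps in the query state. Acceptance is expressed by a measurement predicate $Y_c(\ilog(n))[\mathit{out}] \simeq_{\varepsilon_0} 1$ on the output qubit.

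The full sentence is then
\[
(\exists^Q Y_1)\cdots(\exists^Q Y_c)\Big[\, INI(Y_1(0)) \wedge \bigwedge_{j=1}^{c-1} LINK(Y_j(\ilog(n)) : Y_{j+1}(0)) \wedge \bigwedge_{j=1}^{c} (\forall i < \ilog(n))\, COMP(Y_j(i) : Y_j(i+1)) \wedge ACC(Y_c(\ilog(n)))\,\Big].
\]
Since each $Y_j(i)$ occurs in the second argument place of exactly one predicate (namely the one that produces it from $Y_j(i-1)$ or from $Y_{j-1}(\ilog(n))$), all $Y_j(i)$'s are predecessor-dependent as required, the only quantum existential quantifiers used are the $\exists^Q$-functional ones, and every $\exists^Q$ inside $COMP$ (for the branching auxiliary qubit of $STEP$) is consequential, so the inner formula is iqq-free. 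The semantic correctness follows because the evaluation of $COMP(Y_j(i):Y_j(i+1))$ forces the quantum state assigned to $Y_j(i+1)$ to equal $STEP$ applied to the state assigned to $Y_j(i)$, exactly as in the proof of Proposition \ref{QFO+logQTC-bounds}, while $INI$ and $LINK$ chain the $Y_j$'s into the full computation of $M$; the final measurement predicate with error parameter $\varepsilon_0 < 1/2$ then matches the bounded-error acceptance/rejection condition for $\LL$.

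The main obstacle is the well-formedness of the sentence: I need to verify that the variable connection graph on the $Y_j(i)$'s is topologically ordered and that no quantum variable (in particular no $Y_j(i)$) appears in the first argument place of two different quantum predicates. The ordering $Y_1(0) \to Y_1(1) \to \cdots \to Y_1(\ilog(n)) \to Y_2(0) \to \cdots \to Y_c(\ilog(n))$ is manifestly a topological order, and because each step uses a fresh instance of the branching/aux variables inside $COMP$, each $Y_j(i)$ is read once (inside the unique $COMP$ or $LINK$ feeding $Y_j(i+1)$ or $Y_{j+1}(0)$, or inside $ACC$) and written once, which gives well-formedness. The only remaining subtlety is that the classical quantifier $(\forall i < \ilog(n))$ ranges over the correct interval (which is allowed by Definition \ref{def-quantum-formula}(5) since $\ilog(n)$ is one of the permitted classical terms) and that the packing into $c$ separate functional variables circumvents the fact that the domain of each functional quantum variable is $[\ilog(n)]$ while the runtime of $M$ is $c\ilog(n)$; handling this packing cleanly, together with the careful bookkeeping of which qubits of $Y_j(i)$ represent which component of the surface configuration inside $COMP$, is the most delicate part of the write-up.
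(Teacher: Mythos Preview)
Your proposal is correct and follows essentially the same approach as the paper: both split the $c\ilog(n)$-step computation into $c$ blocks of length $\ilog(n)$, introduce one (or a few) functional quantum variables per block, reuse the single-step formula $COMP$ from Proposition~\ref{QFO+logQTC-bounds} to chain successive values $Y_j(i)\to Y_j(i+1)$, and finish with a measurement predicate. The only cosmetic difference is that the paper keeps the index-tape content, work-tape content, and head positions in separate functional variables ($Y_k,W_k,Z_k,V_k$) rather than packing the entire surface configuration into a single $Y_j$ as you do; your explicit $LINK$ predicates and well-formedness discussion are implicit in the paper's version.
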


\begin{proof}
Let us take any promise problem $\LL=(L^{(+)},L^{(-)})$ and any logtime QTM $M$ that solves it with probability at least $1-\varepsilon$, where $\varepsilon\in[0,1/2)$. Take a constant $c\in\nat^{+}$ for which $M$ halts within $c\ilog(n)$ steps.
Since the runtime bound of $M$ is $c\ilog(n)$, we split the integer interval $[c\ilog(n)]$ into $c$ blocks of equal size $\ilog(n)$: $[k\ilog(n)+1,(k+1)\ilog(n)]_{\integer}$ for each number $k\in[0,c-1]_{\integer}$.
We simulate the behavior of $M$ during the time period specified by each of these integer intervals.
Let $t_k=k\ilog(n)+t$ for $k\in[0,c-1]_{\integer}$ and $t\in[0,\ilog(n)-1]_{\integer}$.

Since $M$'s input tape is read-only, it suffices to consider $M$'s surface  configurations $(q,l_1,l_2,w,l_3,z)$, where $q$ is the current inner state, $l_1$, $l_2$, and $l_3$ are respectively the current tape head locations of an input tape, an index tape, and a work tape, and $w$ and $z$ are respectively the current contents of the index tape and the work tape.
It suffices to simulate the changes of the content of the index tape of $M$ step by step.

Let us explain how to handle an index tape and a work tape of $M$. We begin with the index tape. Note that the index tape uses only $\ilog(n)$ qubits together with two designated endmarkers.  Let $\ell=2\ilog(n)$.

Let us recall the function $STEP$ from the proof of Proposition \ref{QFO+logQTC-bounds}, which represents a single step of $M$ on pinpoint configurations. The function $STEP$ maps $\qubit{q,x_{(l_1)},w_{(l_2)},z_{(l_3)},l_1,l_2,l_3}$ to $\cos\theta\qubit{p,x_{(l_1)},\tau,\xi,l_1,l_2+d_1,l_3+d_3} + \sin\theta \qubit{p',x_{(l_1)},\tau',\xi',l_1,l_2+d'_1,l_3+d'_3}$ for a constant $\theta\in[0,2\pi)$.
Here, we intend to define a quantum formula, say, $COMP_{t_k}$, which expresses $STEP$ using functional quantum variables.

We first concentrate on the index tape.
We introducing two groups of $c$ functional quantum variables, $Y_1,Y_2,\ldots,Y_c$ and $W_1,W_2,\ldots,W_c$, where each of $Y_k$ and $W_k$ maps $[\ilog(n)]$ to $\Phi_{2^{\ell(n)}}$ and to $\Phi_{2^{\iloglog(n)}}$, corresponding to the $k$th block $[k\ilog(n)+1,(k+1)\ilog(n)]_{\integer}$.   Let $YW_i(t)$ denote $Y_i(t)\otimes W_i(t)$ for convenience.
For the content of cell $i$ of the index tape together with its tape head location $l_2$ at time $t\in T_k$, $W_k(t)$ represents $l_2$ in unary  as $0^{l_2-1}10^{c\ilog(n)-l_2}$, and $Y_k(t)[2i-1,2i]$ represents the encoding of $w_{(l_2)}$, where the tape symbols $0$, $1$, and $B$ are encoded into $\hat{0}$, $\hat{1}$, and $\hat{B}$.
Here, we use $Y_k(t)[2i-1,2i] \otimes W_k(t)$, where $i,t\in[\ilog(n)]$.

Following to the proof of Proposition \ref{QFO+logQTC-bounds}, we write $sym_i$ and $hp_i$ to indicate the content of the $i$th tape cell and the presence of the tape head at cell $i$ at time $t_k$, respectively. Remark that they are induced from $Y_k(t)[2i-1,2i]\otimes W_k(t)$. Similarly, $newsym_i$ and $newhp_i$ are induced from $Y_k(t+1)[2i-1,2i]\otimes W_k(t+1)$.

Assuming that $(hp_{i-1},hp_i,hp_{i+1})$ takes the value of $(1,0,0)$, if $\hat{d}_1=0$, then we set $Q_{(3)}$ to be $\PP_{copy}(sym_i:newsym_i)\wedge \PP_{copy}(1:newhp_i)$; otherwise, we set $Q_{(3)}$ to be $\PP_{copy}(sym_i:newsym_i)\wedge \PP_{copy}(0:newhp_i)$. In the case where $(hp_{i-1},hp_i,hp_{i+1})$ takes $(0,1,0)$, we set $Q_{(4)}\equiv \PP_{copy}(\hat{\tau}:newsym_i)\wedge \PP_{copy}(0:newhp_i)$.
If $(hp_{i-1},hp_i,hp_{i+1})$ takes $(0,0,0)$, then we set $Q_{(1)}\equiv \PP_{copy}(sym_i:newsym_i)\wedge \PP_{copy}(hp_i:newhp_i)$. When $(hp_{i-1},hp_i,hp_{i+1})$ takes  $(0,0,1)$, if $\hat{d}_1=0$, then $\PP_{copy}(sym_i:newsym_i)\wedge \PP_{copy}(0:newhp_i)$. If $\hat{d}_1=1$, then $\PP_{copy}(sym_i:newsym_i)\wedge \PP_{copy}(1:newhp_i)$.

In a similar way, we can handle the work tape of $M$. Note that the work tape uses at most $c\ilog(n)$ tape cells. We then introduce functional quantum variables $Z_1,Z_2,\ldots,Z_c$ and $V_1,V_2,\ldots,V_c$, where each of $Z_k$ and $V_k$ maps $[\ilog(n)]$ to $\Phi_{2^{m(n)}}$ and to $\Phi_{2^{\iloglog(n)}}$, where $m(n)=c\ilog(n)$.

An entire computation of $M$ is described by $c\ilog(n)$ applications of $STEP$. Hence, $YW_1,YW_2,\ldots,YW_{c\ilog(n)}$ logically express this entire computation.

When $M$ finally halts, we then observe the designated cell to determine the outcome of $M$'s computation. To express this, we use the quantum term of the form $s\simeq_{\varepsilon}1$.
\end{proof}

From Theorem \ref{functional-character} follows the following corollary.

\begin{corollary}\label{QFO-vs-HQC}
$\mathrm{QFO}+\exists^Q$-functional $=  \mathrm{HQBLOGTIME}$.
\end{corollary}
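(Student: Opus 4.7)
The plan is to derive Corollary \ref{QFO-vs-HQC} directly from Theorem \ref{functional-character} by following the same lifting pattern used in the proof of Theorem \ref{QFO-HQLOGTIME} and Corollary \ref{HBQL-QFO+QTC}. The key observation is that $\mathrm{QFO}$ is obtained from $\mathrm{classicQFO}$ precisely by permitting introductory quantum quantifiers (i.e., quantifiers over predecessor-independent quantum variables), and, symmetrically, $\mathrm{HBQLOGTIME}$ is obtained from $\mathrm{BQLOGTIME}$ by prefixing a finite alternating series of quantum quantifiers to an underlying logtime QTM computation. Since Theorem \ref{functional-character} already identifies the ``inner'' classes on both sides, the corollary should reduce to a bookkeeping argument about how these outer quantifier prefixes match up.

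I would prove the two inclusions separately. For $\mathrm{QFO}+\exists^Q$-functional $\subseteq \mathrm{HBQLOGTIME}$, I would take any promise problem $\LL$ in the left-hand class, expressed by a sentence $\phi$, and first apply Lemma \ref{negation-free} to remove quantum negations. Then, imitating the normalization done in the proof of Theorem \ref{QFO-HQLOGTIME}, I would split quantum variables into predecessor-dependent and predecessor-independent groups and push the introductory quantum quantifiers (namely $\forall^Q$ and the $\exists^Q$ over predecessor-independent variables) to the outside, producing a prenex form $(\exists^Q y_1,|y_1|=s_1)(\forall^Q y_2,|y_2|=s_2)\cdots(Q_k y_k,|y_k|=s_k)\psi$, where $\psi$ is iqq-free and may still contain $\exists^Q$-functional quantifiers. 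By Theorem \ref{functional-character}, this $\psi$ is simulatable by a logtime QTM, and the outer prefix of at most finitely many quantum quantifiers over qustrings of logarithmic length places the whole sentence inside some $\Sigma_k^Q\bqlogtime$, hence inside $\mathrm{HBQLOGTIME}$.

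For the reverse inclusion $\mathrm{HBQLOGTIME}\subseteq \mathrm{QFO}+\exists^Q$-functional, I would take $\LL\in\Sigma_k^Q\bqlogtime$ so that its defining condition is of the form $(\exists\qubit{\phi_1}\in\Phi_{2^{m_1}})(\forall\qubit{\phi_2}\in\Phi_{2^{m_2}})\cdots(Q_k\qubit{\phi_k}\in\Phi_{2^{m_k}})\prob_{M}[M(\qubit{x},\Psi){=}L(x)]\geq 1-\varepsilon$ for a logtime QTM $M$. Define an auxiliary promise problem $\LL'$ whose positive/negative instances are pairs $(\qubit{x},\Psi)$ accepted/rejected by $M$ with high probability; then $\LL'\in\mathrm{BQLOGTIME}$, and by Theorem \ref{functional-character} there is an iqq-free formula $\psi$ in $\mathrm{classicQFO}+\exists^Q$-functional that syntactically expresses $\LL'$, treating the quantum quantification witnesses as additional input qustrings represented via pure-state quantum variables. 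Prepending $(\exists^Q y_1,|y_1|=m_1)(\forall^Q y_2,|y_2|=m_2)\cdots(Q^Q_k y_k,|y_k|=m_k)$ to $\psi$ yields the desired sentence in $\mathrm{QFO}+\exists^Q$-functional.

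The main obstacle I expect is largely notational: ensuring that the normalization to prenex form in the first inclusion respects well-formedness (the variable connection graph must remain topologically ordered after the predecessor-independent quantum variables are pulled out), and, in the second inclusion, verifying that when the quantum quantification witnesses $\qubit{\phi_1},\ldots,\qubit{\phi_k}$ are encoded as pure-state quantum variables handed to the translated formula $\psi$, they can be accessed by $\psi$ in exactly the same way the simulating logtime QTM accesses them as auxiliary quantum inputs. Both issues are handled by minor adjustments of the constructions already used in the proofs of Theorem \ref{QFO-HQLOGTIME}, Corollary \ref{HBQL-QFO+QTC}, and Lemma \ref{LOGTIME-to-QFO}, so no genuinely new technique should be required.
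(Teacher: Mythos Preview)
Your proposal is correct and follows essentially the same approach as the paper: both directions are obtained by stripping off (respectively, prepending) the finite prefix of introductory quantum quantifiers and invoking Theorem \ref{functional-character} (equivalently Lemmas \ref{functional-logtime} and \ref{LOGTIME-to-QFO}) on the iqq-free core, exactly as in the proofs of Theorem \ref{QFO-HQLOGTIME} and Corollary \ref{HBQL-QFO+QTC}. Your exposition is in fact a bit cleaner than the paper's in that you explicitly separate the introductory pure-state quantum variables (which generate the $\mathrm{HBQLOGTIME}$ hierarchy) from the consequential functional quantum variables (which remain inside $\psi$), whereas the paper's prenex notation $(\exists^QY_1)(\forall^QY_2)\cdots$ somewhat blurs this distinction.
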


\begin{proof}
Firstly, we wish to show that $\mathrm{HBQLOGTIME} \subseteq \mathrm{QFO}+\exists^Q$-functional. For any promise problem $\LL= (L^{(+)},L^{(-)})$ in $\mathrm{HBQLOGTIME}$, we take a logtime QTM $M$ such that (i) for any $\qubit{\phi}\in L^{(+)}$, $(\exists\qubit{\psi_1}\in\Phi_{2^m})(\forall \qubit{\psi_2}\in\Phi_{2^m}) \cdots (Q_k\qubit{\psi_k}\in\Phi_{2^m}) [ \prob_{M}[M(\qubit{\phi},\Psi)=1]\geq 1-\varepsilon]$ and (ii) for any $\qubit{\phi}\in L^{(-)}$, $(\forall\qubit{\psi_1}\in\Phi_{2^m})(\exists \qubit{\psi_2}\in\Phi_{2^m}) \cdots (\bar{Q}_k\qubit{\psi_k}\in\Phi_{2^m}) [ \prob_{M}[M(\qubit{\phi},\Psi)=0]\geq 1-\varepsilon]$, where $\Psi=(\qubit{\psi_1},\qubit{\psi_2}, \ldots, \qubit{\psi_k})$ and $Q_k=\forall$ if $k$ is odd and $\exists$ otherwise.

We then define $K^{(+)} = \{(\qubit{\phi},\Xi)\mid \prob[M(\qubit{\phi},\Xi)=1]\geq 1-\varepsilon\}$ and $K^{(-)} = \{(\qubit{\phi},\Xi)\mid \prob[M(\qubit{\phi},\Xi)=0]\geq 1-\varepsilon\}$, where $\Xi = (\qubit{\psi_1},\qubit{\psi_2}, \ldots, \qubit{\psi_k})$.
We claim that $\KK=(K^{(+)},K^{(-)})$ belongs to $\mathrm{BQLOGTIME}$. Because of $\KK\in \mathrm{BQLOGTIME}$, by Lemma \ref{LOGTIME-to-QFO}, we obtain a quantum formula $\zeta$ syntactically expressing $\KK$ within $\mathrm{claasicQFO}+\exists^Q$-functional.
By applying quantum quantifiers, we conclude that $\LL$ belongs to  $\mathrm{QFO}+\exists^Q$-functional.

Next, we show the opposite direction: $\mathrm{QFO}+\exists^Q$-functional $\subseteq \mathrm{HBQLOGTIME}$.
Let $\LL= (L^{(+)},L^{(-)})$ be any problem in $\mathrm{QFO}+\exists^Q$-functional  and take a quantum sentence $\phi$ that syntactically  expresses $\LL$. This $\phi$ can be assumed to have the form $(\exists^QY_1)(\forall^QY_2)\cdots (Q_kY_k)\psi(X,Y_1,Y_2,\ldots,Y_k)$, where $\psi$ is second-order quantifier-free and $Q_j=\forall$ if $j$ is even and $\exists$ otherwise.
We thus want to show how to simulate $\exists^Q$-functional.
By a simple calculation conducted in the proof of Proposition \ref{classicQFO-logtime}, we can simulate $\psi$ on an appropriate logtime QTM $M$. By translating second-order quantum quantifiers $(Q_jY_j)$ for $Q_j\in\{\forall^Q,\exists^Q\}$ into $(Q_j\qubit{\psi_j}\in\Phi_{2^m})$, it follows that $\LL\in\mathrm{HBQLOGTIME}$.
\end{proof}

\vs{-2}
\section{Brief Discussion and Future Work}

It has been well-known that \emph{logical expressibility} is usable as a meaningful complexity measure to assert the ``difficulty'' of solving  various combinatorial problems in sharp contrast with \emph{machine-based computability}.
Nonetheless, there are known connections between expressibility and computability in the classical setting.
In this work, we have introduced a quantum analogue of first-order logic to logically express quantum computations in a good hope for a better understanding of quantum computing. We have made a partial success in setting up a reasonably good framework to discuss the intended quantum first-order logic whose formulation is founded on (recursion-theoretic) schematic definitions of quantum functions of \cite{Yam20,Yam24}.  The use of these schemes to capture quantum polynomial-time and polylogarithmic-time computing has made a pavement to our QFO formalism.
Throughout Sections \ref{sec:character-QFO}--\ref{sec:functional}, $\mathrm{QFO}$ and its variant $\mathrm{classicQFO}$ have been proven to be good bases to characterize logarithmic-time/space quantum computing.

To promote the deeper understandings of quantum logic, we wish to list a few interesting questions for the sake of the avid reader.

\renewcommand{\labelitemi}{$\circ$}
\begin{enumerate}
  \setlength{\topsep}{-2mm}%
  \setlength{\itemsep}{1mm}%
  \setlength{\parskip}{0cm}%

\item The usefulness of $\mathrm{QFO}$ as well as $\mathrm{classicQFO}$ has been demonstrated by precisely characterizing several low complexity classes in Sections \ref{sec:character-QFO}--\ref{sec:functional} with an excessive use of $logQTC$, $QTC$, and functional quantum variables. It is imperative to find more quantum complexity classes that can be logically expressed based on $\mathrm{QFO}$ and $\mathrm{classicQFO}$.

\item Unfortunately, we still do not know the precise complexity of $\mathrm{QFO}$ and $\mathrm{classicQFO}$ without $logQTC$, $QTC$, or  functional quantum variables. Therefore, one important remaining task is to determine the precise complexity of them.

\item We have distinguished between consequential and introductory quantum quantifications. One possible way to eliminate any use of consequential quantum existential quantifiers in $\mathrm{QFO}$ is an introduction of  a \emph{quantum $\mu$-operator} in the form of $\mu z.\phi(x,z)$ for a quantum formula $\phi$. It is interesting to explore the properties of such an operator.

\item Classical proof complexity theory has been developed over the years. See a textbook, e.g., \cite{CN10} for references. We hope to introduce a ``natural'' deduction system and a ``natural'' sequential calculus for the quantum first-order logics and develop quantum proof complexity theory.

\item As a direct extension of $\mathrm{QFO}$,  we have discussed in Section \ref{sec:functional} second-order quantum quantifications to handle functional quantum variables. By the use of second-order variables, $X(y)$ expresses the quantum state obtained by applying a second-order variable $X$ to a first-order variable $y$. For example, a quantum formula of the form $\exists^Q X\forall y\forall z [ \PP_{CNOT}(y:z) \Leftrightarrow I(X(y):z)]$ makes it possible to use $X$ in place of $CNOT$.
    It is interesting to develop full-fledged quantum second-order logics that can characterize a wider rage of complexity classes.
\end{enumerate}



\let\oldbibliography\thebibliography
\renewcommand{\thebibliography}[1]{%
  \oldbibliography{#1}%
  \setlength{\itemsep}{-2pt}%
}

\bibliographystyle{alpha}

\end{document}